\newtheorem{theorem}{Theorem}[section]
      \newtheorem{lemma}[theorem]{Lemma}
      \newtheorem{corollary}[theorem]{Corollary}
      \newtheorem{definition}[theorem]{Definition}
      \newtheorem{remark}[theorem]{Remark}
      \numberwithin{equation}{section}
\newcommand{\al}{\alpha}
\newcommand{\de}{\delta}
\newcommand{\sq}{\sqrt}
\newcommand{\kct}{{k\choose{2}}}
\newcommand{\st}{\Tilde{s_2}}
\newcommand{\dt}{\Tilde{d_2}}
\newcommand{\stg}{\frac{2-\de}{2+\de}}
\DeclareMathOperator*{\E}{\mathbb{E}}
\begin{document}


\title{A tight bound for the clique query problem in two rounds}

\author{Uriel Feige~\thanks{Weizmann Institute, uriel.feige@weizmann.ac.il}
\and Tom Ferster~\thanks{This work was done while the author was a student at the Weizmann Institute, \mbox{tomferster@gmail.com}}}
\maketitle{}

\begin{abstract}

We consider a problem introduced by Feige, Gamarnik, Neeman, Rácz and Tetali [2020], that of finding a large clique in a random graph $G\sim G(n,\frac{1}{2})$, where the graph $G$ is accessible by queries to entries of its adjacency matrix.  The query model allows some limited adaptivity, with a constant number of rounds of queries, and $n^\de$ queries in each round. With high probability, the maximum clique in $G$ is of size roughly $2\log n$, 
and the goal is to find cliques of size $\al\log n$, for  {$\al$} as large as possible. We prove that no two-rounds algorithm is likely to find a clique larger than $\frac{4}{3}\de\log n$, which is a tight upper bound when $1\leq\de\leq \frac{6}{5}$. For other ranges of parameters, namely, two-rounds with $\frac{6}{5}<\de<2$, and three-rounds with $1\leq\de<2$, we improve over the previously known upper bounds on $\al$, but our upper bounds are not tight.  
If early rounds are restricted to have fewer queries than the last round, then for some such restrictions we do prove tight upper bounds. 
\end{abstract}


\section{Introduction}

We consider a problem introduced by Feige, Gamarnik, Neeman, Rácz and Tetali \cite{feige2020finding}, that of finding a large clique in a random graph $G\sim G(n,\frac{1}{2})$, where the graph $G$ is accessible by queries to entries of its adjacency matrix. The input to the problem is a random graph $G(V,E)$ on $n$ vertices, where for every pair $\{u,v\}$ of distinct vertices,  $(u,v)\in E$ independently with probability $\frac{1}{2}$. To simplify some of the terminology, we sometimes refer to a pair $\{u,v\}$ of distinct vertices as an edge, regardless of whether $(u,v)\in E$, but if $(u,v)\in E$ we refer to it as a positive edge, and if $(u,v)\not\in E$ we refer to it as a negative edge. The goal of the algorithm is to find in $G$ a clique (a complete subgraph) that is as large as possible. The algorithm can access $G$ via queries to the adjacency matrix of $G$, querying an edge $\{u,v\}$ and getting in reply its status, whether it is positive or negative. We may think of these queries as queries to a random oracle, where for each new query (not previously asked), the oracle answers positively with probability $\frac{1}{2}$ and negatively with probability $\frac{1}{2}$, independently of all previous answers. The limiting resource for the algorithm is the total number of queries $q$, whereas the running time of the algorithm may be unbounded (e.g., exponential in $n$). The sequence of queries asked by the algorithm may be adapative, in the sense that the choice of the next query depends on answers to previous queries. 

The results of the $q$ queries induce a graph $G'(V,E')$, where the edges of $G'$ are those edges that were queried and received a positive answer. As the algorithm is not bounded in its computation time, it can find the maximum clique in $G'$. Hence the goal of the querying process is to produce a graph $G'$ in which the maximum clique is as large as possible. We denote this maximum clique by $K$ and its size by $k$. Note that for any nontrivial algorithm (for which $k > 1$), $K$ (and likewise $k$) is a random variable that will depend on the input graph $G$. Hence for a given algorithm, we shall think of $k$ as the median value of the size of the clique that the algorithm finds. For reasons that will become clear later, if $k$ is even our results are somewhat simpler to express. Hence, we shall assume that $k$ is even (this has negligible effect on the results, a difference of at most~1 in the size of the clique found).

We shall use $\delta$ as a parameter indicating the allowed number of queries, where $q = n^{\delta}$. (Sometimes the interpretation of $\delta$ will be relaxed to requiring $q = O(n^{\delta})$. This has negligible effect on the results.) Hence our goal is to maximize $k$ (as a function of $n$ and $\delta$), such that there is an algorithm making at most $n^\de$ queries, for which with probability at least $\frac{1}{2}$ over the choice of $G\sim G(n,\frac{1}{2})$, the graph $G'$ contains a clique $K$ of size $k$.

We consider only $\de\geq 1$, as for smaller $\de$ the number of non-isolated vertices in $G'$ will be $n'<n$, and we may solve a similar problem with $n'$ instead of $n$. For $\de=2$, the whole graph may be queried. The clique of maximum size in a graph $G\sim G(n,\frac{1}{2})$ is with high probability (w.h.p.) of size roughly (neglecting low order terms) $2\log n$ (see  \cite{matula1972employee}). {Therefore, the size of the maximum clique in $G'$ satisfies w.h.p. $k\leq 2\log n + \Theta(1)$ for every $1\leq\de<2$. }On the other hand, a clique of (expected) size $\log n$ may be found using roughly {$2n$} queries in the following way. Initialize the sets $K=\emptyset$, $T=V(G)$. While $T$ is not empty repeat the following:
\begin{enumerate}
    \item Choose arbitrarily $t\in T$, and update $K\leftarrow K\cup \{t\}$.
    \item Query all pairs $\{\{t,t'\}\mid t'\in T\setminus\{t\}\}$.
    \item Remove $t$ and all its non-neighbors from $T$.
\end{enumerate}
The size of $T$ decreases by a factor of $2$ in each iteration (in expectation), hence there are roughly $\log n$ iterations, and the total expected number of queries is {$2n$}. After the last iteration, the set $K$ is a clique of expected size $\log n$. {Therefore, the values of $k$ of interest are between $\log n$ and $2\log n$.} In all computations we will neglect terms of order $o(\log n)$ in the size of $k$. Hence, we can state the problem we wish to solve in the following way: Maximize $\al$ for which a clique of size $k\ge \al\log n - o(\log n)$ may be found using a sequence of at most $n^\de$ queries. Note that $\al$ is a function of $\de$ which satisfies $1\leq\al\leq 2$. We give a more formal definition of the value $\al$ we wish to find.
\begin{definition}
\label{def_org_1}
     Let $1\leq\de<2$, $\al\leq 2$. A deterministic algorithm $\mathcal{A}$, which makes a sequence of at most $n^\de$ adaptive queries (based on oracle answers), will be called a $(\de,\al)$-algorithm {if for every sufficiently large $n$, with probability at least $\frac{1}{2}$, the graph $G'$ (induced on edges discovered by the algorithm) contains a clique of size $k\ge \al\log n - o(\log n)$.} 
\end{definition}
\begin{definition}
\label{def_org_2}
    Let $1\leq\de<2$. $\al_{*}(\de)$ is defined as the supremum over $\al$ such that there exists a $(\de,\al)$-algorithm. Note that $1\leq \al_{*}(\de)\leq 2$.
\end{definition}
\begin{remark}
{Since we do not limit the time complexity of algorithms, if $G'$ contains at least one clique of size $k$, $\mathcal{A}$ can be assumed to find such a clique $K$ (e.g., by using exhaustive search), and to return it.}
\end{remark}

A full solution to the stated problem is to find the exact value of $\al_{*}(\de)$ for every $1\leq \de<2$. Lower bounds on $\al_{*}(\de)$ are proved by algorithms which find cliques of size $\al\log n$ (for $\al>1$). Such algorithms were presented in \cite{feige2020finding}, and can be found in Section \ref{algos_sec}. The presented algorithms are suitable also for a slightly different problem setting, in which the adaptiveness of the algorithm is restricted to a constant number of rounds. The upper bounds we will present in this paper will be for this restricted setting. In this setting, the algorithm may use a constant number $l$ of rounds, where in each round it may query at most $n^\de$ edge queries simultaneously, and the oracle answers all of them. This means that the algorithm may choose queries on round $i$ based on answers given in previous rounds, but not on those given for queries in round $i$. Denote by $q_i$ the number of queries in round $i$. The pattern of queries in the first round gives us the queries graph $Q_1$ with $n$ vertices and $q_1$ edges. Likewise, the pattern of queries in any other round $i$ (that may depend on the answers to queries of previous rounds) gives us the queries graph $Q_i$ with $n$ vertices and $q_i$ edges. {Note that the edges of $Q_i$ consist of all pairs queried in round $i$, and not only positive ones.} {We denote by $Q_{\leq i}$ the queries graph which includes all queries up to round $i$, i.e., $E(Q_{\leq i})=\bigcup_{j\leq i} E(Q_j)$. }(Note that we use the term 'up to' in the meaning of 'up to and including'). We modify definitions \ref{def_org_1} and \ref{def_org_2} to comply with the restricted setting.

\begin{definition}
\label{def_rest_1}
     Let $l\in\mathbb{N}$, $1\leq\de<2$, $\al\leq 2$. A deterministic algorithm $\mathcal{A}$, which makes $l$ rounds of queries with at most $n^\de$ queries in each round, will be called a $(\de,l,\al)$-algorithm {if for every sufficiently large $n$, with probability at least $\frac{1}{2}$, $G'$ contains a clique of size $k\ge \al\log n - o(\log n)$.} 
\end{definition}
\begin{definition}
\label{def_rest_2}
    Let $l\in\mathbb{N}$, $1\leq\de<2$. $\al_{*}(\de,l)$ is defined as the supremum over $\al$ such that there exists a $(\de,l,\al)$-algorithm. 
\end{definition}
These definitions can be generalized to the case where the number of queries may not be the same in all rounds. In general, we may assume that in round $i$ the algorithm is restricted to $q_i=n^{\de_i}$ queries for some $0<\de_i<2$. This leads to similar definitions of a $((\de_1,\dots,\de_l),l,\al)$-algorithm and $\al_{*}((\de_1,\dots,\de_l),l)$ (see definitions in Section \ref{sec_two_rounds_large_delta}). Note that for every $l\in \mathbb{N}$, denoting $\de\coloneqq \max_i \de_i$ we have:
\[\al_{*}((\de_1,\dots,\de_l),l)\leq \al_{*}(\de,l)\leq \al_{*}(\de)\]

\subsection{Previous results}
\label{prev_res}

The problem of bounding $\al_{*}(\de)$ and $\al_{*}(\de,l)$ was studied by Feige, Gamarnik, Neeman, Rácz and Tetali \cite{feige2020finding}. They proved the following results regarding $\al_{*}(\de,l)$:

\begin{enumerate}
    \item For every $1\leq \de<2$,  $\al_{*}(\de,1)=\de$. 
    \item For every $1\leq\de\leq\frac{6}{5}$, $\al_{*}(\de,2)\geq\frac{4\de}{3}$, and for every $\frac{6}{5}\leq\de<2$, $\al_{*}(\de,2)\geq 1+\frac{\de}{2}$.
    \item For every $1\leq \de<2$, $\al_{*}(\de,3)\geq1+\frac{\de}{2}$.
    \item {For $\de=1$ and every $l\in\mathbb{N}$, $\al_{*}(1,l)\leq 2^{1-\frac{1}{2^l-1}}$. This gives: $\al_{*}(1,2)\leq 1.588$, $\al_{*}(1,3)\leq 1.812$ and so on.}
    \item For every $1\leq\de<2$ and $l\in\mathbb{N}$,  $\al_{*}(\de,l)\leq 2-\de\left(\frac{2-\de}{2}\right)^l$.
    
\end{enumerate}

An upper bound on $\al_{*}(\de)$ was proved by Alweiss, Ben Hamida, He, and Moreira \cite{alweiss2021subgraph}. They showed that for every $1\leq\de<2$:
\[\al_{*}(\de)\leq 1+\sqrt{1-\frac{(2-\de)^2}{2}}\]
For example, for $\de=1$ this gives $\al_{*}(1)\leq 1+\sqrt{\frac{1}{2}}\sim 1.707$. Since $\al_{*}(\de,l)\leq \al_{*}(\de)$, {this result also improves the upper bound on $\al_{*}(\de,l)$ for every $l\geq 3$.}

\section{Our Results}

We prove upper bounds on $\al_{*}(\de,l)$ and $\al_{*}((\de_1,\dots,\de_l),l)$ for $l\in\{2,3\}$ (see Figure \ref{bounds_fig}). Namely, we upper bound the size of clique that is likely to be found using two or three rounds algorithms. First, we prove the following upper bound for two-rounds algorithms: 
\begin{theorem}
\label{thm_results_1}
For every $1\leq\de<2$:
\[\al_{*}(\de,2)\leq \frac{4}{3}\de\]
\end{theorem}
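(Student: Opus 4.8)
The plan is to fix an arbitrary two-rounds algorithm, describe it by its fixed round-one query graph $Q_1$ (with $e(Q_1)\le n^{\de}$) together with its round-two query graph $Q_2=Q_2(G_1')$, an arbitrary function of the round-one answers $G_1':=G\cap Q_1$ (with $e(Q_2)\le n^{\de}$), and to show that for $k=\al\log n$ with $\al>\frac{4}{3}\de$ the discovered graph $G'=(G\cap Q_1)\cup(G\cap Q_2)$ contains a $K_k$ with probability $o(1)$; since the size returned by the algorithm is a median, an algorithm with $\al>\frac{4}{3}\de$ would need $\Pr[G'\supseteq K_k]\ge\frac{1}{2}$, so this yields $\al_{*}(\de,2)\le\frac{4}{3}\de$. (We may assume $Q_1\cap Q_2=\emptyset$, re-querying being useless.) First I would dispose of round one: $G_1'$ is the fixed graph $Q_1$ with each edge kept independently with probability $\frac{1}{2}$, so the expected number of $K_t$'s in $G_1'$ equals $N_t(Q_1)\,2^{-\binom{t}{2}}$ with $N_t(Q_1)\le\binom{\lceil\sqrt{2e(Q_1)}\rceil}{t}$ (a Kruskal--Katona-type bound), which is $o(1)$ once $t\ge(1+o(1))\de\log n$; hence the clique number of $G_1'$ is at most $(1+o(1))\de\log n$ with probability $1-o(1)$, and I would condition on this event $\mathcal E$.

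Next I would run a union bound over $k$-sets $K$, with the randomness of the two rounds decoupled. For a fixed $K$ put $p_K:=|\binom{K}{2}\cap Q_1|$ (pairs of $K$ queried in round one; all must be positive, i.e.\ must lie in $G_1'$) and $q_K:=\binom{k}{2}-p_K$ (the rest; all must be queried in round two and positive). Revealing $G$ first on $\binom{K}{2}\cap Q_1$, then on $Q_1\setminus\binom{K}{2}$ (which pins down $G_1'$ and hence $Q_2$), then on $\binom{K}{2}\setminus Q_1$, one obtains $\Pr[K\text{ is a clique of }G']=2^{-p_K}\,\pi_K\,2^{-q_K}=2^{-\binom{k}{2}}\pi_K$, where $\pi_K=\Pr[\binom{K}{2}\setminus Q_1\subseteq Q_2\mid\binom{K}{2}\cap Q_1\subseteq G_1']$. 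Unwinding $\pi_K$ --- and noting that, since $Q_1\cap Q_2=\emptyset$, a $k$-set is \emph{viable} (its round-one pairs positive and its round-two pairs queried) precisely when it is a $k$-clique of $G_1'\cup Q_2$, in which case $q_K=q_2(K):=|\binom{K}{2}\cap Q_2|=\binom{k}{2}-e_{G_1'}(K)$ --- reduces everything to
\[
\Pr[G'\supseteq K_k]\ \le\ \Pr[\mathcal E^{c}]\ +\ \E_{G_1'}\Big[\mathbf{1}_{\mathcal E}\sum_{K\text{ a }k\text{-clique of }G_1'\cup Q_2}2^{-q_2(K)}\Big].
\]

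The crux --- and the source of the constant $\frac{4}{3}$ --- is to bound that inner sum. I would classify each $k$-clique $K$ of $G_1'\cup Q_2$ by the structure of its round-one subgraph $G_1'[K]$, in particular by $q_2(K)$ and by how $K$ meets the components of $Q_1$, and balance two opposing effects. On one side, $G_1'[K]$ is a subgraph of $G_1'$, so on $\mathcal E$ it has clique number at most $(1+o(1))\de\log n$; a Tur\'an-type inequality then forces $q_2(K)$ to be large whenever $K$ carries many round-one edges --- the round-one part of $K$ cannot at once be dense, have small clique number, and lie inside a graph with only $O(n^{\de})$ edges. On the other side, the number of viable $K$ of a prescribed shape is limited by $e(Q_1),e(Q_2)\le n^{\de}$: a Kruskal--Katona-type estimate caps how many $k$-sets carry a given number of edges inside a fixed sparse graph (the extremal configuration being a clique), and, since $Q_2$ is a single $n^{\de}$-edge graph, it can \emph{complete} near-cliques of $G_1'$ only in bounded quantity per unit of its budget. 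Optimising this trade-off --- essentially maximising $\al$ against a linear relation between the entropy spent in round one (choosing a clique of $G_1'$) and that spent in round two (choosing its completion) on the one hand, and the cost $2^{-q_2(K)}$ on the other --- the worst $K$ turns out to split so that its round-one part is a clique of $G_1'$ of a critical size, and the exponent of the resulting bound is $\log^{2}n$ times a quantity that is negative exactly when $\al>\frac{4}{3}\de$. This matches the lower-bound construction of \cite{feige2020finding} for $\de\le\frac{6}{5}$, which is why the bound is tight in that range.

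The hard part will be this last step in the presence of adaptivity: $Q_2$ depends on all of $G_1'$, so the round-two viability events for different $K$ are correlated with the round-one answers. The decoupling above isolates that dependence into the single quantity $\pi_K$, but one is still left with a sum over $K$ in which $Q_2$ is adversarial; a crude estimate of this sum merely recovers the trivial $\al\le 2$, so obtaining the tight constant $\frac{4}{3}$ --- rather than the weaker $2-\de\big(\frac{2-\de}{2}\big)^{2}$ of \cite{feige2020finding} or $1+\sqrt{1-(2-\de)^{2}/2}$ of \cite{alweiss2021subgraph} --- hinges on carrying out the extremal count of $k$-sets by their round-one/round-two edge split precisely, matching the lower-bound construction.
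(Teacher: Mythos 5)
Your setup is sound and parallels the paper's: the union bound over $k$-sets, the decoupling of the two rounds so that the only dependence on adaptivity sits in the count of $k$-sets that $Q_2$ can ``complete,'' and the identification of the quantity $\E_{G_1'}\big[\sum_K 2^{-q_2(K)}\big]$ as the object to bound are all correct and consistent with the paper's candidate/feasible-set bookkeeping. But the step you yourself flag as ``the hard part'' is precisely where the theorem lives, and the mechanisms you name for it would not deliver the constant $\tfrac{4}{3}$. The Turán-type idea is quantitatively hopeless: conditioning on $\omega(G_1')\leq(1+o(1))\de\log n$ and applying Turán to $G_1'[K]$ with $k=\al\log n$ only forbids a $\Theta(1/\log n)$ fraction of the $\binom{k}{2}$ pairs, so it forces $q_2(K)=\Omega(k)$ rather than $\Omega(k^2)$; a bound of the form $2^{-O(k)}$ is useless against candidate counts of size $n^{\Theta(k)}$. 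The Kruskal--Katona count of $k$-sets by their number of $Q_1$-edges, and the assertion that $Q_2$ ``can complete near-cliques only in bounded quantity per unit of its budget,'' are the right intuitions but are left entirely unquantified, and a naive execution recovers only $\al\leq 2$, as you note.

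What is missing is the specific combinatorial decomposition that makes the trade-off computable. The paper takes a maximum matching $M_1$ of $Q_1$ restricted to $V(K)$, covering $V_{M_1}$ with $|V_{M_1}|=2m_1k$; the uncovered set $V_{M_2}$ is then $Q_1$-independent, so it must carry a perfect $Q_2$-matching, which bounds the number of completions by $\binom{q_2}{m_2k}$ (and the number of choices of $V_{M_1}$ by $\binom{q_1}{m_1k}$, giving the $n^{\frac{1}{2}\de k}$ entropy factor independently of $m_1$). The decisive lemma is then that for a maximum matching, at most \emph{half} of the pairs between $V_{M_1}$ and $V_{M_2}$ can be $Q_1$-edges (otherwise one could augment the matching); these are exactly the ``free'' edges whose positivity the adaptive $Q_2$ can exploit, so $e_1+e_2\geq 1-4m_1(\tfrac{1}{2}-m_1)\geq\tfrac{3}{4}$, with the worst case at $m_1=\tfrac{1}{4}$, which is where $\tfrac{4}{3}\de$ comes from. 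Your proposal contains no substitute for this matching lemma, so as written it does not prove the theorem.
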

The upper bound in Theorem \ref{thm_results_1} is tight when $1\leq \de \leq \frac{6}{5}$. When $\frac{6}{5}<\de<2$ we have the lower bound $\al_{*}(\de,2)\geq 1+\frac{\de}{2}$ (see \cite{feige2020finding} or Section \ref{algos_sec}), so there is a gap between the bounds. We improve the upper bound for this case by proving the following theorem:

\begin{theorem}
\label{thm_results_2}
For every $\frac{6}{5}<\de<2$:
\[\al_{*}(\de,2)\leq 1+\sqrt{(\de-1)(3-\de)}\]
\end{theorem}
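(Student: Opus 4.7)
The plan is to extend the counting argument underlying Theorem~\ref{thm_results_1} with a refined bound on the round-$2$ extensions. Let $\mathcal{A}$ be a $(\delta,2,\alpha)$-algorithm and let $K$ denote the clique of size $k=\alpha\log n - o(\log n)$ that $\mathcal{A}$ finds with probability at least $1/2$; then $\mathbb{E}[\# k\text{-cliques in }G']\ge 1/2$. Writing $E_1(V)=\binom{V}{2}\cap Q_1$ and $E_2(V)=\binom{V}{2}\setminus Q_1$ for each $k$-subset $V$, a short conditional-probability computation yields
\[
\mathbb{E}[\# k\text{-cliques in }G']=2^{-\binom{k}{2}}\cdot\mathbb{E}_{G_1}\!\left[\sum_{V}2^{|E_1(V)|}\mathbf{1}\bigl(E_1(V)\subseteq G_1,\;E_2(V)\subseteq Q_2(G_1)\bigr)\right],
\]
and I need to show that this weighted sum cannot exceed $2^{\binom{k}{2}-1}$ under the budgets $|Q_1|,|Q_2|\le n^{\delta}$.

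For each $V$ I would extract a maximum $V_1\subseteq V$ with $\binom{V_1}{2}\subseteq Q_1$, which is automatically a clique in $G_1$ of some size $k_1=\beta\log n$, and set $V_2:=V\setminus V_1$ of size $k-k_1$. Round~1 contributes at most $\binom{n^{\delta/2}}{k_1}\cdot 2^{-\binom{k_1}{2}}$ candidate $V_1$'s by Kruskal--Katona applied to $Q_1$ together with the probability that all internal pairs are positive. The refinement essential for the improvement beyond Theorem~\ref{thm_results_1} is a two-part bound on the round-$2$ extension: since every vertex of $V_2$ must be joined to all of $V_1$ through $Q_1\cup Q_2$, $V_2$ lies in the common queried neighborhood $N(V_1)$, which satisfies $|N(V_1)|\cdot k_1\le 2|Q_1\cup Q_2|\le 4n^{\delta}$; moreover, the $\binom{k_2}{2}$ internal pairs of $V_2$ must lie in the subgraph of $Q_1\cup Q_2$ induced on $N(V_1)$, to which Kruskal--Katona applies once more. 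Combining both estimates with the positivity factor $2^{-\binom{k}{2}}$, setting $k=\alpha\log n$ and $k_1=\beta\log n$, and taking $\log_2$ normalized by $(\log n)^2$, yields a quadratic inequality in $(\alpha,\beta,\delta)$; optimizing over $\beta$ in its feasible range reproduces the $\alpha\le 4\delta/3$ bound of Theorem~\ref{thm_results_1} when $\delta\le 6/5$, while for $\delta>6/5$ the binding constraint simplifies to $(\alpha-1)^2=(\delta-1)(3-\delta)$.

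The main obstacle I anticipate is handling the adaptive dependence of $Q_2$ on $G_1$. A naive application of Kruskal--Katona to the whole graph $Q_1\cup Q_2(G_1)$ that ignores the neighborhood constraint on $V_2$ produces only $\alpha\le\delta$, which would actually contradict the lower bound $\alpha\ge 1+\delta/2$ in the regime $\delta\in(6/5,2)$, so such a crude argument cannot be correct. The right approach is to condition on $G_1$ first, retain the neighborhood restriction $V_2\subseteq N(V_1)$ throughout the combinatorial bookkeeping, and only then average over $G_1$ so that the weight $2^{|E_1(V)|}$ precisely compensates the round-$1$ positivity factor $2^{-\binom{k_1}{2}}$. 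The two Kruskal--Katona applications then genuinely see the two separate budget constraints $|Q_1|$ and $|Q_2|$ rather than their loose union, which is what drives the improvement.
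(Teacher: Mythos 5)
Your proposal takes a genuinely different route from the paper, but it has a gap at the crucial step that makes the final bound come out wrong. The paper's proof of this theorem is short: it takes the bound $\alpha \le \frac{\delta}{1-4m_1(\frac12-m_1)}$ established inside the proof of Theorem~\ref{two_rounds_small_delta_thm} (via the maximum-matching decomposition and Lemma~\ref{matching_edges_half}), combines it with the $\alpha \le 2-(4-2\delta)m_1$ bound from Lemma~\ref{gen_lemma}, forms $\max_{m_1}\min$ of the two, and solves the resulting cubic; the claimed formula $1+\sqrt{(\delta-1)(3-\delta)}$ is exactly the value at the crossing point. You instead attempt a first-moment computation based on a decomposition of each candidate $k$-set $V$ into a maximum $Q_1$-clique $V_1$ and $V_2 = V\setminus V_1$, with two applications of Kruskal--Katona.

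The gap is that the max-$Q_1$-clique decomposition gives you no control over the number of ``free'' edges, i.e.\ $Q_1$-pairs lying between $V_1$ and $V_2$ or within $V_2$, which the algorithm can exploit adaptively. If $V_1$ is merely a maximal $Q_1$-clique in $V$, every vertex of $V_2$ can still have up to $k_1-1$ of its $k_1$ pairs to $V_1$ inside $Q_1$, so essentially all of the $k_1k_2$ cross-pairs can be free. Your weight $2^{|E_1(V)|}$ does compensate for the \emph{probability} these free edges are positive, but you still need the \emph{count} of candidate $(V_1,V_2)$-pairs times $2^{-(\text{non-free pairs})}$ to be small, and with no upper bound on the number of free pairs your probability exponent degenerates to roughly $\binom{k_1}{2}+\binom{k_2}{2}$. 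That quantity is minimized (worst case) at $k_1=k/2$, where it equals $\frac12\binom{k}{2}$, giving only the trivial-looking $\alpha\le 2\delta$ from this branch --- strictly weaker than the paper's $\alpha_2^{\delta}(m_1)=\frac{\delta}{1-4m_1(\frac12-m_1)}$, which at $m_1=\frac14$ gives $\frac{4\delta}{3}$. The difference, exactly the missing $4m_1m_2$ term, is precisely what Lemma~\ref{matching_edges_half} supplies: with the maximum-matching decomposition, $V(K)\setminus V_{M_1}$ is $Q_1$-independent, and at most half of the cross-pairs can lie in $Q_1$, so the free-edge fraction is provably $\le 4m_1m_2$. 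Your remedy --- the neighborhood constraint $|N(V_1)| \le O(n^\delta/k_1)$ --- does not repair this, because it only caps the vertex count and the edge count $|Q_1\cup Q_2|\le 2n^\delta$ remains the binding input to Kruskal--Katona; the second KK application therefore still sees $n^{\delta k_2/2}$, not fewer candidates. In short, you have identified the right obstruction (adaptivity and free edges) but your decomposition has no analogue of Lemma~\ref{matching_edges_half} to bound it, and without that the two-KK bound does not produce $(\alpha-1)^2=(\delta-1)(3-\delta)$.
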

The lower bound $\al_{*}(\de,2)\geq 1+\frac{\de}{2}$, when $\frac{6}{5}<\de<2$, is achieved by an algorithm which queries less than $n^{\de}$ queries in round 1. Therefore, we consider algorithms in which round 1 is restricted to query at most $n^{\de_1}$ pairs for some $\de_1<\de$. We prove the following theorem:

\begin{theorem}
\label{thm_results_3}
For every $\frac{6}{5}<\de<2$:
\[\al_*\Big(\Big(\frac{3}{2}-\frac{\de}{4},\de\Big),2\Big)\leq 1+\frac{\de}{2}\]
\end{theorem}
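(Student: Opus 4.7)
The plan is to prove the upper bound by a first-moment argument on the number of $k$-cliques in $G'$ for $k=(1+\de/2+\e)\log n$, adapting the moment-method framework behind Theorem \ref{thm_results_1} to the asymmetric budget $(\de_1,\de_2)=(3/2-\de/4,\de)$.

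First I would control the round-1 structure: by a standard first-moment argument (Kruskal--Katona applied to the round-1 query graph $Q_1$ together with the probability $2^{-\binom{c}{2}}$ that a given $c$-clique of $Q_1$ has all its edges positive), the maximum clique in $H_1$ is at most $\de_1\log n + o(\log n)$ with high probability. Conditioning on this event, for any candidate $k$-clique $K$ in $G'$ let $K_0\subseteq K$ be a maximum round-1 clique in $K$, of size $c\le \de_1\log n$. By maximality of $K_0$, every $v\in K\setminus K_0$ has at least one round-2 edge to $K_0$, and more generally every edge of $K$ not in $H_1$ must lie in $Q_2$.

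Next I would bound $\E[\#\,k\text{-cliques in }G']$ by stratifying over $c=|K_0|$. For each $c$, the expected number of $c$-cliques in $H_1$ is at most $n^{\de_1 c/2}\cdot 2^{-\binom{c}{2}}$ by Kruskal--Katona, and for each such $c$-clique the number of $(k-c)$-extensions to a $k$-clique of $G'$ is constrained by the round-2 budget $|Q_2|\le n^{\de}$. Via a K--K estimate on $Q_2$ together with the probability that the non-$H_1$ edges of the extension are all positive, the extension count becomes a trade-off of the form $k\le s\log n + \min(\de,\,2(1-s))\log n$ with $s=c/\log n$. This is precisely the ``find clique then extend'' trade-off, maximized at $1+\de/2$ when $s=1-\de/2$.

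The specific value $\de_1=3/2-\de/4$ plays two roles: it permits $s=1-\de/2$ in the optimum (since $1-\de/2\le\de_1$ throughout the range $6/5<\de<2$), and it is the threshold at which the two K--K estimates precisely balance, so that $\E[\#\,k\text{-cliques}]\to 0$ exactly for $k>(1+\de/2)\log n$; for $\de_1$ any larger, the additional round-1 candidate structures would inflate the first moment above $1$ and only the weaker $4\de/3$ bound would survive. The main obstacle is the extension-count in the second step: one must handle the adaptivity of round 2 (where $Q_2$ depends on $H_1$) and produce a bound tight enough to give $1+\de/2$ rather than the weaker $4\de/3$ bound of Theorem \ref{thm_results_1}. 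I expect this will require combining K--K on $Q_2$ with a Tur\'an-type lower bound on the number of round-2 edges inside $K$ (exploiting $\omega(H_1[K])\le c$), so that the specific choice $\de_1=3/2-\de/4$ makes the calculation close off exactly at $k=(1+\de/2)\log n$.
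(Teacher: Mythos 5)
Your high-level strategy---first-moment bound stratified over a round-1 structural parameter---is in the right family, but the specific parametrization you chose and the central calculation are not ones that close, and the paper's proof is substantively different in a way that matters.

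The paper does not stratify over the maximum round-1 clique in $K$; it stratifies over the maximum round-1 \emph{matching} size $m_1$ in $K_1$. This is not cosmetic. Your approach needs to bound, for a fixed $K_0$, both the number of candidate extensions and the number of ``free'' round-1-positive edges that the algorithm can exploit for free. But knowing that the maximum round-1 clique in $K$ has size $c\le\de_1\log n$ tells you essentially nothing about the number of round-1 edges inside $K$: $K_1$ could be a dense bipartite graph with $\Theta(k^2)$ edges and clique number $2$. A Tur\'an bound (which you gesture at) is useless here since $c\approx\de_1\log n$ is large. In contrast, the matching parametrization does control free edges: the paper's Lemma~\ref{matching_edges_half} shows that a maximum matching covering $V_{M_1}$ forces at most $\frac{1}{2}|V_{M_1}||V_{M_2}|$ edges to leave $V_{M_1}$, which is exactly the quantity needed. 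Moreover, the number of $1$-candidates is bounded cleanly by $\binom{q_1}{m_1 k}\le n^{\de_1 m_1 k}$, whereas K--K on cliques does not combine with the free-edge analysis.

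The trade-off $k\le s\log n+\min(\de,\,2(1-s))\log n$ you posit is not derived, and I do not see where the $2(1-s)$ term would come from; nothing in the setup makes the round-2 budget scale like $n^{2(1-s)}$. The paper's proof of Theorem~\ref{two_rounds_big_delta_restricted_thm} in fact combines \emph{two} distinct upper bounds: $\al_1^\de(m_1)=2-(4-2\de_1)m_1$ from Lemma~\ref{gen_lemma2} (counting the unmatched vertices of $K_1$ as free vertices) and $\al_2^\de(m_1)=\frac{2(\de_1 m_1+\de m_2)}{1-4m_1 m_2}$ from a variant of Theorem~\ref{two_rounds_small_delta_thm}. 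These two bounds are weaker on opposite sides of the threshold $m_1^*=\frac{1-\de/2}{1+\de/2}$, and the argument succeeds precisely because $\al_1^\de$ is used for $m_1\ge m_1^*$ and $\al_2^\de$ for $m_1\le m_1^*$, both meeting at $1+\de/2$. Your single formula does not capture this crossover structure and would need a genuinely new idea to reproduce it. As written, the proposal identifies the right target value but leaves the core calculation (the extension count under adaptivity) as an acknowledged open step rather than resolving it.
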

The upper bound in Theorem \ref{thm_results_3} is tight for every $\frac{6}{5}<\de<2$.

For three-rounds algorithms, we have the the lower bound $\al_{*}(\de,3)\geq 1+\frac{\de}{2}$ (see \cite{feige2020finding} or Section \ref{algos_sec}). This lower bound is achieved by an algorithm which queries less than $n^{\de}$ queries in rounds 1 and 2. We consider algorithms in which the first two rounds are restricted, and prove the following theorem:

\begin{theorem}
\label{thm_results_4}
For every $1\leq\de<2$:
\[\al_*\Big(\Big(1-\frac{\de}{2},1,\de\Big),3\Big)\leq 1+\frac{\de}{2}\]
\end{theorem}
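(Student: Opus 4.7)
The plan is a first-moment argument: I will bound the expected number of $k$-cliques in the discovered graph $G'$, and show this expectation is $o(1)$ whenever $k > (1+\delta/2)\log n + \omega(1)$, so that Markov's inequality rules out the algorithm succeeding with probability $\geq \frac{1}{2}$.

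Given a hypothetical clique $K$ of size $k$, I would decompose $K = C \sqcup T$ by choosing $C \subseteq K$ to be a maximal subset forming a clique in $G'_{\leq 2}$ and setting $T = K \setminus C$. Writing $c = |C|$ and $t = |T|$, the structural consequences are that $C$ is a $c$-clique in $G'_{\leq 2}$, every vertex of $T$ has at least one edge to $C$ first queried in round~3 (by maximality of $C$), and all edges inside $T$ and across $C \times T$ are eventually queried and positive in $G$.

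Next, I would bound the expected number of valid pairs $(C,T)$ by summing over $c + t = k$ using two Kruskal--Katona bounds. Because $|Q_{\leq 2}| \leq n^{1-\delta/2} + n \leq 2n$ for every (possibly adaptive) realization, every such $Q_{\leq 2}$ contains at most $\binom{O(\sqrt{n})}{c}$ $c$-cliques, each being a clique in $G'_{\leq 2}$ with probability $2^{-\binom{c}{2}}$. Similarly $|Q_3| \leq n^\delta$, so $Q_3$ contains at most $\binom{n^{\delta/2}}{t}$ $t$-cliques, and an extra positivity factor $2^{-\binom{t}{2}-ct}$ enforces that both the internal $T$-edges and the $C$--$T$ bipartite are positive in $G$. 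The critical ingredient, however, comes from round~2 with its tight budget of only $n$ queries: the algorithm can verify at most $n/c$ candidate common neighbors of $C$ before round~3, yielding at most $\sim n/(c\cdot 2^c)$ genuine common neighbors from which $T$ must be drawn. Requiring this pool to have size $\Omega(n^{\delta/2})$ (so that a $t$-clique of common neighbors can fit inside $Q_3$) forces $c \leq (1-\delta/2)\log n + o(\log n)$, after which the random-graph clique bound inside the pool yields $t \leq \delta \log n + o(\log n)$, giving $k = c + t \leq (1+\delta/2)\log n + o(\log n)$ as claimed.

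The main obstacle is handling the three-way adaptivity cleanly: $Q_2$ is chosen after seeing $G'_1$ and $Q_3$ after seeing $G'_{\leq 2}$, so Kruskal--Katona must be applied pointwise for each adaptive realization and combined with positivity by conditioning sequentially, rather than by a single Kruskal--Katona on $Q_1 \cup Q_2 \cup Q_3$ (which would give only the much weaker bound $k \leq \delta\log n$). I expect the proof to closely parallel that of Theorem~\ref{thm_results_3}: there the single non-adaptive round of budget $n^{3/2-\delta/4}$ plays the role that rounds~1--2 play here (with $n^{1-\delta/2}$ non-adaptive queries plus adaptive $n$ queries), and the $n$-query round~2 serves as the analogous filtering step that caps the size of the common-neighbor pool available to round~3, yielding the same $1 + \delta/2$ bound.
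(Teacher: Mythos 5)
Your decomposition into a rounds-1--2 clique $C$ and a round-3 remainder $T$, together with the common-neighbor-pool heuristic, is a genuinely different route from the paper, which instead partitions a perfect matching $M$ of $K$ by the round in which each matching edge was queried, analyzes the Gallai--Edmonds decomposition of $K_2$ and of $V_{M_{23}}$, bounds the number of \emph{free} edges and free vertices via Corollaries~\ref{free_edges_vertices_cor} and~\ref{free_edges_bounds_cor} and Lemma~\ref{deg1_lemma}, restricts the signature space with the max-independent-set constraint of Lemma~\ref{max_indep_lemma}, and finally solves a five-parameter optimization by computer (Appendix~\ref{appendix_optimization}). Your intuition about where the budgets $(n^{1-\delta/2},\,n,\,n^{\delta})$ bite --- giving $c\lesssim(1-\delta/2)\log n$ and $t\lesssim\delta\log n$ --- is on target, but the paper's path to $1+\delta/2$ is quite different; in particular, the proof of Theorem~\ref{thm_results_3} is also a matching/free-edge argument, not a clique-plus-remainder one, so the expected parallel is not really there.

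More importantly, the proposal has two genuine gaps, and they pull in mutually inconsistent directions. First, the structural claim that $T$ ``must be drawn'' from the $\sim n/(c\,2^c)$ verified common neighbors of $C$ is not forced: the algorithm may gamble, placing unverified vertices in candidates for $T$ and only querying their $C$-edges in round~3. Proving such gambling cannot help is exactly the content of the paper's free-vertex accounting in Corollary~\ref{free_edges_vertices_cor}; your sketch asserts it rather than establishes it. Second, the positivity factor $2^{-\binom{t}{2}-ct}$ is not justified. Maximality of $C$ only guarantees that each $T$-vertex has \emph{at least one} $C$-edge first queried in round~3; the remaining $C\times T$ edges and $T$-internal edges may well have been queried (and observed positive) in rounds~1--2, after which the algorithm can cherry-pick them into candidates for $K$ without paying any probability cost. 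These are precisely the ``free edges'' the paper's GED analysis is built to control. Charging $2^{-ct}$ regardless would make the naive first-moment sum $\sum_{c+t=k}\binom{O(\sqrt n)}{c}2^{-\binom{c}{2}}\binom{O(n^{\delta/2})}{t}2^{-\binom{t}{2}-ct}$ give $\alpha\le\delta$, which is false (Algorithm~4 achieves $1+\delta/2>\delta$). So the pool argument implicitly assumes $T$ is already known to be in the common neighborhood (hence no $2^{-ct}$), while the first-moment sum charges $2^{-ct}$; neither piece stands on its own, and reconciling them rigorously appears to require the careful free-edge/free-vertex bookkeeping that the paper carries out.
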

The upper bound in Theorem \ref{thm_results_3} is tight for every $1\leq \de<2$. We also prove upper bounds on $\al_{*}(\de,3)$ (i.e., when no round is restricted), for several values of $\de$. The bounds we prove are summarized in the following table:

\begin{table}[ht]
\begin{center}
\begin{tabular}{ |c||c c c c c c c c c c| } 
 \hline
 $\de$ & 1 & 1.1 & 1.2 & 1.3 & 1.4 & 1.5 & 1.6 & 1.7 & 1.8 & 1.9  \\ 
 $\al_*(\de,3)\leq$ & 1.62 & 1.69 & 1.77 & 1.83 & 1.88 & 1.92 & 1.95 & 1.97 & 1.99 & 1.997\\ 

 \hline
\end{tabular}
\end{center}
\caption{Upper bounds on $\al_*(\de,3)$}
\end{table}

\begin{figure}[h!]
  \centering
  \includegraphics[width=70mm, scale=0.5]{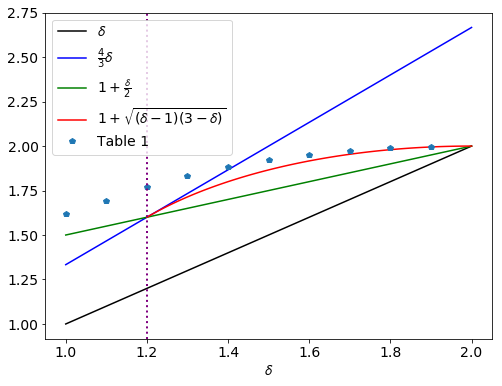}
  \caption{An illustration of upper and lower bounds on $\al_{*}(\de,l)$ and $\al_{*}((\de_1,\dots,\de_l),l)$ for $l\in\{1,2,3\}$}
  \label{bounds_fig}
\end{figure}

For every $1\leq\de<2$, the upper bounds we prove on $\al_{*}(\de,2)$ and $\al_{*}(\de,3)$ are stronger than the upper bounds on $\al_{*}(\de,l)$ proved in \cite{feige2020finding}, and than the upper bound $\al_{*}(\de)\leq 1+\sqrt{1-\frac{(2-\de)^2}{2}}$ proved in \cite{alweiss2021subgraph}.

\section{Related Work}

There is extensive research on the query complexity of deterministic and randomized algorithms. For a given graph property, the query complexity is the smallest number of edge queries needed to determine whether an input graph satisfies the property. For some properties, such as determining whether a graph contains an edge, or whether a graph is connected, 
all edges must be queried. These are called evasive properties. The Aanderaa–Karp–Rosenberg conjecture claims that every nontrivial monotone graph property (i.e remains true when edges are added) for graphs on $n$ vertices is evasive. While the AKR conjecture is still open, there are many works proving variations of it. Rivest and Vuillemin \cite{rivest1975generalization} showed that at least $\frac{n^{2}}{16}$ queries are needed to test for any nontrivial monotone graph property. The constant was later improved by Scheidweiler and Triesch\cite{scheidweiler2013lower} to the current best result of $\frac{n^{2}}{3}-o(n^{2})$ queries. Kahn, Saks and Sturtevant used a topological approach to show that the AKR conjecture is true when the number of vertices is a prime power \cite{kahn1984topological}, and Yao showed it in the case of bipartite graphs \cite{yao1988monotone}.

There has also been research on the query complexity of randomized algorithms. No nontrivial monotone property is known with randomized query complexity less than $\frac{n^{2}}{4}$, and a similar conjecture by Karp says that $\Omega(n^{2})$ queries are required in this case. The first non-linear lower bound, of $\Omega(n\log^{1/12}n)$, was proved by Yao \cite{yao1991lower}, and later improved to the current best lower bound of $\Omega(n^{4/3}\log^{1/3}n)$ by Chakrabarti and Khot \cite{chakrabarti2001improved}. Other lower bounds were given by Friedgut, Kahn and Wigderson \cite{friedgut2002computing} and O'Donnell, Saks, Schramm and Servedio \cite{o2005every} where the bound depends on a critical probability $p$ such that a graph drawn from $G(n,p)$ satisfies the property with probability $\frac{1}{2}$. The specific property of having a subgraph isomorphic to a given graph (which is similar to our case) was proved by Gröger to have a lower bound of $\Omega(n^{3/2})$ \cite{dietmar1992randomized}. A summary on Evasiveness of Graph Properties can be found in lecture notes by Neal Young from a course taught by László Lovász \cite{lovasz2002lecture}.

A setting more similar to ours, where the input graph is drawn randomly from $G(n,p)$, was introduced by Ferber, Krivelevich, Sudakov and Viera. They proved that when $p>\frac{\ln n+\ln\ln n+\omega(1)}{n}$ a Hamiltonian cycle can be found using $(1+o(1))n$ positive edge queries (which is optimal) \cite{ferber2016finding}. However, when $p=\frac{1+\epsilon}{n}$, in order to find a path of length $l=\Omega(\frac{\log(\frac{1}{\epsilon})}{\epsilon})$ at least $\Omega(\frac{l}{p\epsilon\log(\frac{1}{\epsilon})})$ queries are required (and not only the trivial bound of $\Omega(\frac{l}{p})$) \cite{ferber2017finding}. Colon, Fox, Grinshpun and He \cite{conlon2019online}, as part of  work concerning the online Ramsey number, studied the problem of determining the number of queries needed to find in $G\sim G(n,p)$ a copy of a target graph, in particular, a constant size clique.

The problem we study, of finding a large clique in $G\sim G(n,\frac{1}{2})$, 
was introduced by Feige, Gamarnik, Neeman, Rácz and Tetali \cite{feige2020finding}, 
and further studied by Alweiss, Ben Hamida, He, and Moreira \cite{alweiss2021subgraph}. 
See Section \ref{prev_res} for results from these papers. 
The latter paper also proves bounds on the number of queries to a random graph $G\sim G(\infty,p)$ needed in order to find a copy of a graph $H$ with degeneracy $d$.

A related problem, that of finding a planted clique of size $k\geq (2+\epsilon)\log n$ in a $G(n,\frac{1}{2})$ graph by adaptive probing, was studied by Ràcz and Schiffer \cite{racz2019finding}.  They proved that $o(\frac{n^{2}}{k^{2}}+n)$ queries do not suffice in order to find the planted clique, and that $O(\frac{n^{2}}{k^{2}}\log^{2}n+n\log n)$ queries do suffice. Mardia, Asi and Chandrasekher \cite{mardia2020finding} designed sub-linear time algorithms {($o(n^2)$ time complexity, and consequently also $o(n^2)$ queries)} that find planted cliques of size {$k=\omega(\sq{n\log\log n})$}.

\section{Preliminaries}
\label{prem}

\subsection{Algorithms - lower bounds}
\label{algos_sec}
The following algorithms, which lower bound $\al_{*}(\de,l)$, were presented in \cite{feige2020finding}.

\subsubsection*{Algorithm 1 - One round} 

The following one round algorithm finds a clique of size $\de\log n$:

\textbf{Round 1:} Choose arbitrarily a set $S$ of $n^{\de/2}$ vertices, and query all pairs within $S$ (roughly $n^\de$ pairs). Find a clique $K$ of size $\de\log n$ in the graph induced on $S$, and return $K$. Such a clique exists w.h.p. as this is the expected size of the maximum clique in $S$.


\subsubsection*{Algorithm 2 - Two rounds when $\de\leq \frac{6}{5}$} 

The following two rounds algorithm finds a clique of size $\frac{4}{3}\de\log n$ when $1\leq \de\leq\frac{6}{5}$:

\textbf{Round 1:} Choose arbitrarily a set $S$ of $n^{\de/6}$ vertices and a set $T$ of $n^{{5\de}/6}$ vertices not in $S$ ($|T|\leq n$ since $\de\leq\frac{6}{5}$). Query all pairs within $S$ and all pairs in $S\times T$ (roughly $n^\de$ pairs). Find a clique $S'$ of size $\frac{\de}{3}\log n$ in the graph induced on $S$. Such a clique exists w.h.p. as this is the expected size of the maximum clique in $S$. Let $T'\subset T$ be the set of mutual neighbors of $S'$. It is of expected size $|T|\cdot 2^{-\frac{\de}{3}\log n}=n^{{\de}/2}$.

\textbf{Round 2:} Query all pairs within $T'$ (roughly $n^\de$ pairs). Find a clique $T''$ of size $\de\log n$ in the graph induced on $T'$ (neglecting low order terms, such a clique exists w.h.p.). Return $K\coloneqq S'\cup T''$, which is of size $\frac{4\de}{3}\log n$.

We get that for every $1\leq \de\leq \frac{6}{5}$: $\al_{*}(\de,2)\geq \frac{4}{3}\de$.

\subsubsection*{Algorithm 3 - Two rounds when $\de\geq \frac{6}{5}$} 

The following two rounds algorithm finds a clique of size $(1+\frac{\de}{2})\log n$ when $\frac{6}{5}\leq\de<2$:

\textbf{Round 1:} Choose arbitrarily a set $S$ of $n^{(1-\de/2)/2}$ vertices, and let $T$ be the set of all vertices not in $S$. Query all pairs within $S$ (roughly $n^{1-\de/2}<n^\de$ pairs), and all pairs in $S\times T$ (roughly $n^{3/2-\de/4}$ pairs, which is less than $n^\de$ when $\de\geq \frac{6}{5}$). Find a clique $S'$ of size $(1-\frac{\de}{2})\log n$ in the graph induced on $S$ (exists w.h.p.). Let $T'\subset T$ be the set of mutual neighbors of $S'$. It is of expected size $|T|\cdot 2^{-(1-\frac{\de}{2})\log n}\simeq n^{{\de}/2}$. 

\textbf{Round 2:} Query all pairs within $T'$ (roughly $n^\de$ pairs). Find a clique $T''$ of size $\de\log n$ (neglecting low order terms) in the graph induced on $T'$ (exists w.h.p.). Return $K\coloneqq S'\cup T''$, which is of size $(1+\frac{\de}{2})\log n$.

We get that for every $ \frac{6}{5}\leq \de <2$: $\al_{*}(\de,2)\geq 1+\frac{\de}{2}$.

\subsubsection*{Algorithm 4 - Three rounds} 

The following three rounds algorithm finds a clique of size $(1+\frac{\de}{2})\log n$.

\textbf{Round 1:} Choose arbitrarily a set $S$ of $n^{(1-\de/2)/2}$ vertices. Query all pairs within $S$ (roughly $n^{1-\de/2}<n^\de$ pairs). Find a clique $S'$ of size $(1-\frac{\de}{2})\log n$ in the graph induced on $S$ (exists w.h.p.).

\textbf{Round 2:} Let $T$ be a set of $\frac{n}{|S'|}$ vertices not in $S$. Query all pairs in $S'\times T$ (roughly $n$ pairs) and let $T'\subset T$ be the set of mutual neighbors of $S'$. It is of expected size $|T|\cdot 2^{-(1-\frac{\de}{2})\log n}\simeq \frac{n^{{\de}/2}}{|S'|}$.

\textbf{Round 3:} Query all pairs within $T'$ (roughly $n^\de$ pairs). Find a clique $T''$ of size $\de\log n$ (neglecting low order terms) in the graph induced on $T'$ (exists w.h.p.). Return $ K\coloneqq S'\cup T''$, which is of size $(1+\frac{\de}{2})\log n$.

We get that for every $1\leq \de<2$: $\al_{*}(\de,3)\geq 1+\frac{\de}{2}$.

\bigskip

Algorithms 2 and 3 have the same queries structure in general. Both query on round 1 all pairs within some set $S$, and between $S$ and some set $T$, and on round 2 query all pairs within some $T'\subset T$. The sizes of the sets are chosen such that the constraint on the number of queries is satisfied, and the returned clique size is maximized (in relation to similar algorithms). The major difference between algorithms 2 and 3 is that in Algorithm 2 the set of vertices participating in any query is of size smaller than $n$ (when $\de<\frac{6}{5}$). So the effective number of vertices in $G$ in that case is smaller than $n$ (as we may discard vertices that are never queried). When $\de=\frac{6}{5}$ both algorithms are identical and this is the smallest $\de$ for which the size of $T$ (and therefore the effective size of the graph $G$) is of size roughly $n$. For larger $\de$, the size of $T$ cannot increase further, and the growth rate of $\al$ as a function of $\de$ decreases (from $\frac{4}{3}\de$ to $1+\frac{\de}{2}$).

In fact, Algorithm 3 resembles Algorithm 4 more than it resembles Algorithm 2. Both algorithms 3 and 4 find the same clique size, and the extra round in Algorithm 4 has no benefit to the size of the clique. The only difference is that round 1 of Algorithm 3 is split to rounds 1,2 in Algorithm 4, and this allows to leave out some unnecessary queries. This reduction in the number of queries allows using Algorithm 4 also when $\de<\frac{6}{5}$, in which case the clique returned is larger than the one returned by Algorithm 2 ($\frac{4}{3}\de<1+\frac{\de}{2}$ for $\de<\frac{6}{5}$).

\subsection{Constrained algorithms}
\label{constrained algs}
Let $\mathcal{A}$ be a $(\de,l,\al)$-algorithm. Assuming the algorithm succeeds, we denote the clique it returns by $K$, where $|V(K)|=k=\al \log n$. After round $i$, each pair of vertices $u,v\in V$ is of one of three types:
\begin{enumerate}
    \item The pair $u,v$ was not queried up to round $i$.
    \item The pair $u,v$ was queried up to round $i$ and answered positively.
    \item The pair $u,v$ was queried up to round $i$ and answered negatively.
\end{enumerate}
The state $K_i$, of the returned clique after round $i$, is the state of all pairs of vertices of $K$ after round $i$. Since we assume the algorithm succeeds and $K$ is indeed a clique, the last state $K_l$ consists only of pairs of type $2$. Therefore, there are no pairs of type $3$ in any of the states $K_i$, and we can look at $K_i$ as a subgraph of $K_l$ which contains only edges queried up to round $i$. Note that $K_0$ is an independent set.

The only constraints we had on the algorithm so far were the number of query rounds it was allowed to make and the number of edge queries in each of them. Following the proof method introduced in \cite{feige2020finding}, the algorithms we will analyze will have some more constraints upon them. Those constraints will refer to properties of the states $K_i$ during the algorithm's runtime. Such properties may be for example - the size of the maximum matching in $K_i$, the number of edges in some subgraph, and etc. The idea behind this is that such algorithms will be easier to analyze because we may assume the intermediate states $K_i$ satisfied several properties.

Each constraint will be associated with some parameter which receives values out of some finite domain. The (ordered) set of all parameters will be referred to as the template, and each proper assignment of the template will be referred to as a signature. Let $\mathcal{S}$ be the set of all possible signatures (note that $\mathcal{S}$ may be a subset of the Cartesian product of the parameters' domains). We say that an algorithm $\mathcal{A}$ had a signature $p\in\mathcal{S}$ if the values of all parameters in the template were according to $p$ during the runtime of the algorithm. For example, we may use two constraints - the size of the maximum matching in $K_1$, and the number of edges in $K_1$. The template will be $(m_1,e_1)$, and the set of signatures will be some subset {of the following set (considering only pairs of integers): 
\[\mathcal{S}\subseteq\Big[0,\frac{k}{2}\Big]\times\Big[0,\kct\Big]\]
}

For a signature $p\in\mathcal{S}$, and an algorithm $\mathcal{A}$, we define the algorithm $\mathcal{A}_p$ as follows. $\mathcal{A}_p$ succeeds, and returns the clique $K$, if and only if $\mathcal{A}$ returned $K$ and had the signature $p$. Otherwise, the algorithm fails. In the example with template $(m_1,e_1)$, if the maximum matching of $K_1$ when $\mathcal{A}$ runs is $a^*$ and the number of edges is $b^*$, then only $\mathcal{A}_p$ with $p=(a^*,b^*)$ returns $K$ (assuming $\mathcal{A}$ succeeds).
Note that if $\mathcal{A}$ succeeds, there is exactly one $p\in \mathcal{S} $ for which the algorithm $\mathcal{A}_p$ succeeds. If $\mathcal{A}$ fails, all $\mathcal{A}_p$'s fail too. 

Due to the extra structure, it will be simpler to analyze the algorithms $\mathcal{A}_p$ instead of $\mathcal{A}$. Given some upper bound on the probability $\mathcal{A}_p$ succeeds for any $p$, we will derive a bound on the probability $\mathcal{A}$ succeeds, and then derive a bound on $\al_*(\de,l)$. This can be generalized to the case where we have several upper bounds (denote their amount by $B$). We will assume that the $B$ bounds have the following form:
\[\mathbb{P}(\mathcal{A}_p \text{ succeeds})\leq n^{f_1^j(p)k}\cdot 2^{-f_2^j(p)\kct}\]
where $f_1^j,f_2^j:\mathcal{S}\rightarrow \mathbb{R}$ for every $1\leq j\leq B$, and $f_1^j(p),f_2^j(p)$ do not depend on $n$. We give a brief overview to the reason we consider bounds of this form, which will become more clear when we prove the actual bounds.
\begin{itemize}
    \item $n^{f_1^j(p)k}$ will be the number of candidate sets (sets of vertices of $G$) that satisfy the signature $p$, and possibly may be returned as the clique $K$.
    \item $2^{-f_2^j(p)\kct}$ will be the probability that some subset of the edges within a candidate set will be positive. If this event occurs, we refer to the candidate set as a feasible set. A clique $K$ returned by the algorithm must be a feasible set.
    \item The product $n^{f_1^j(p)k}\cdot 2^{-f_2^j(p)\kct}$ will be an upper bound on the expected number of feasible sets. Using this bound we will derive a bound on $\mathbb{P}(\mathcal{A}_p \text{ succeeds})$ (using Markov's inequality).
\end{itemize}
In order to prove upper bounds on $\al_*(\de,l)$ using upper bounds on $\mathbb{P}(\mathcal{A}_p \text{ succeeds})$, we prove the following lemma. 
\begin{lemma}
\label{cons_lemma}
Let $1\leq \de < 2$, $l\geq 1$, and let $\mathcal{S}$ be a set of signatures such that $|\mathcal{S}|=n^{o(\log n)}$. Suppose we have several upper bounds (denote their amount by $B$) on the probability algorithm $\mathcal{A}_p$ succeeds. Suppose the bounds are of the following form: $\mathbb{P}(\mathcal{A}_p \text{ succeeds})\leq n^{f_1^j(p)k}\cdot 2^{-f_2^j(p)\kct}$ where $f_1^j,f_2^j:\mathcal{S}\rightarrow \mathbb{R}$ for every $1\leq j\leq B$, and $f_1^j(p),f_2^j(p)$ do not depend on $n$. Then:
\[\al_*(\de,l)\leq \max_{p\in \mathcal{S}}\min_{j}\frac{2f_1^j(p)}{f_2^j(p)}\]
\end{lemma}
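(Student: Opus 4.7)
The plan is to combine all $B$ probability bounds via a union bound over signatures. The events $\{\mathcal{A}_p\text{ succeeds}\}_{p\in\mathcal{S}}$ are pairwise disjoint (exactly one signature is realized per successful execution) and their union is $\{\mathcal{A}\text{ succeeds}\}$, so
\[
\mathbb{P}(\mathcal{A}\text{ succeeds})=\sum_{p\in\mathcal{S}}\mathbb{P}(\mathcal{A}_p\text{ succeeds})\leq |\mathcal{S}|\cdot\max_{p\in\mathcal{S}}\min_{1\leq j\leq B} n^{f_1^j(p)k}\cdot 2^{-f_2^j(p)\kct}.
\]
Write $\al^{\star}:=\max_{p\in\mathcal{S}}\min_j\tfrac{2f_1^j(p)}{f_2^j(p)}$. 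I would suppose for contradiction that some $(\de,l,\al)$-algorithm exists with $\al=\al^{\star}+\epsilon$ for some fixed $\epsilon>0$, and then show that the right-hand side above tends to $0$ as $n\to\infty$, contradicting the $\frac{1}{2}$ success probability required in Definition \ref{def_rest_1}.

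Substituting $k=\al\log n - o(\log n)$ and $\kct=\tfrac{\al^2}{2}\log^2 n\,(1+o(1))$, each bound becomes $2^{c(p,j)\log^2 n\,(1+o(1))}$, where
\[
c(p,j)=\al f_1^j(p)-\tfrac{\al^2}{2}f_2^j(p)=\tfrac{\al f_2^j(p)}{2}\Big(\tfrac{2f_1^j(p)}{f_2^j(p)}-\al\Big).
\]
For each $p\in\mathcal{S}$ pick the minimizing index $j(p)$; then $\tfrac{2f_1^{j(p)}(p)}{f_2^{j(p)}(p)}\leq\al^{\star}=\al-\epsilon$, whence $c(p,j(p))\leq -\tfrac{\al\epsilon}{2}f_2^{j(p)}(p)$. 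Assuming $f_2^{j(p)}(p)$ is bounded below by a positive constant uniformly over $p$, this gives $\mathbb{P}(\mathcal{A}_p\text{ succeeds})\leq 2^{-\Omega(\log^2 n)}$ with an implicit constant independent of $p$. Multiplying by $|\mathcal{S}|=n^{o(\log n)}=2^{o(\log^2 n)}$ still leaves $\mathbb{P}(\mathcal{A}\text{ succeeds})=o(1)$, the desired contradiction, so $\al_*(\de,l)\leq\al^{\star}$.

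The main obstacle is securing the uniform positive lower bound on $f_2^{j(p)}(p)$, since otherwise the exponential decay per signature could be too weak to absorb the $|\mathcal{S}|$ factor. In the concrete applications $f_2^j(p)$ counts something like the density of positive edges inside $K$ associated with signature $p$, and the worry is with degenerate signatures where this density is near zero. Such signatures can be handled either by excluding them from $\mathcal{S}$ a priori (they contribute negligibly to the maximum) or by arguing that for them a different index $j'\neq j(p)$ provides a bound whose $f_2^{j'}(p)$ is bounded away from zero. Once this uniformity is verified, the lemma follows immediately from the union bound and the assumed exponential form of the per-signature bounds.
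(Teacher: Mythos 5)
Your proof follows essentially the same route as the paper's: union bound over signatures, assume for contradiction that $\al$ exceeds the claimed threshold, rewrite each per-signature bound as $2^{c(p,j)\log^2 n(1+o(1))}$, and argue the total probability tends to $0$. The only real difference is that you make the contradiction hypothesis quantitative ($\al=\al^\star+\e$) and you factor the exponent as $c(p,j)=\tfrac{\al f_2^j(p)}{2}\bigl(\tfrac{2f_1^j(p)}{f_2^j(p)}-\al\bigr)$, which makes the dependence on $f_2^j(p)$ explicit; the paper instead keeps $2f_1^j(p)-\al f_2^j(p)$ unfactored, concludes that $\max_p\min_j\bigl[2f_1^j(p)-\al f_2^j(p)\bigr]<0$, and asserts that the $\log^2 n$ term then dominates.

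The subtlety you flag — that one needs the exponent to be \emph{uniformly} bounded away from zero over $p$, not merely strictly negative, because $\mathcal{S}$ grows with $n$ — is genuine, and the paper glosses over it (it jumps from a pointwise strict inequality to $\to-\infty$). However, your two proposed remedies are not quite right as stated. Excluding degenerate signatures changes the claimed bound unless one separately shows they are negligible, and switching to a different index $j'$ with $f_2^{j'}(p)$ bounded below does not obviously help: that $j'$ may have $\tfrac{2f_1^{j'}(p)}{f_2^{j'}(p)}>\al$, so $2f_1^{j'}(p)-\al f_2^{j'}(p)$ could be \emph{positive}. The clean way to close the gap (which works in all of the paper's applications) is to observe that the $f_1^j$ used are uniformly bounded below by a positive constant (e.g.\ $1-(2-\de)m_i\ge\de/2$), so whenever $2f_1^j(p)-\al f_2^j(p)<0$ one automatically gets $f_2^j(p)>2f_1^j(p)/\al$ bounded below as well, and hence $2f_1^j(p)-\al f_2^j(p)\le -\e f_2^j(p)$ is uniformly negative for that same index $j$. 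With that extra observation your argument is complete and a bit more careful than the paper's.
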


\begin{proof}
Denote $\al= \al_*(\de,l)$, $k=\al\log n$, and let $\mathcal{A}$ be a $(\de,l,\al)$-algorithm. Assume for contradiction $\al>\max_{p\in \mathcal{S}}\min_{j}\frac{2f_1^j(p)}{f_2^j(p)}$. Given the upper bounds, we have that $\mathbb{P}(\mathcal{A}_p \text{ succeeds})\leq \min_j n^{f_1^j(p)k}\cdot 2^{-f_2^j(p)\kct}$. If $\mathcal{A}$ succeeds, there must be some $p\in\mathcal{S}$ for which $\mathcal{A}_p$ succeeds (we define $\mathcal{S}$ to include all possible outcomes). Hence, using union bound we get:
\begin{align*}
    \frac{1}{2}&\leq
\mathbb{P}(\mathcal{A}\text{ succeeds}) \\
&\leq\sum_{p\in \mathcal{S}} \mathbb{P}(\mathcal{A}_p\text{ succeeds}) \\
&\leq\Big[\max_{p\in \mathcal{S}} \mathbb{P}(\mathcal{A}_p\text{ succeeds})\Big]\cdot n^{o(\log n)} \\
&\leq\Big[\max_{p\in \mathcal{S}} \min_j \Big(n^{f_1^j(p)k}\cdot 2^{-f_2^j(p)\kct}\Big)\Big]\cdot n^{o(\log n)}
\end{align*}

Taking $\log$ of both sides we get:
\begin{align*}
    \log \Big(\Big[\max_{p\in \mathcal{S}}\min_j \Big(n^{f_1^j(p)k}\cdot 2^{-f_2^j(p)\kct}\Big)\Big]\cdot n^{o(\log n)}\Big) \geq& -1 \\
    \max_{p\in \mathcal{S}} \min_j \Big[{f_1^j(p)k}\log n -f_2^j(p)\kct\Big]+ o(\log^2 n) \geq& -1 \\
    \max_{p\in \mathcal{S}} \min_j\Big[(2f_1^j(p)-\al f_2^j(p))\Big]\frac{1}{2}\al \log^2 n+ o(\log^2 n) \geq& -1 
\end{align*}
For sufficiently large $n$, this expression is dominated by the term $\log^2 n$. Since we assumed that $\al> \max_{p\in \mathcal{S}}\min_j \frac{2f_1^j(p)}{f_2^j(p)}$, we have that for every $p\in \mathcal{S}$ there exists $j$ such that:
\begin{gather*}
    \al>\frac{2f_1^j(p)}{f_2^j(p)} \\
    \implies 2f_1^j(p)-\al f_2^j(p)<0
\end{gather*}
\[\]
Therefore, $ \max_{p\in \mathcal{S}} \min_j\Big[(2f_1^j(p)-\al f_2^j(p))\Big]<0$ and as $n\rightarrow\infty$ we have that:
\[\max_{p\in \mathcal{S}} \min_j\Big[(2f_1^j(p)-\al f_2^j(p))\Big]\frac{1}{2}\al \log^2 n+ o(\log^2 n)\rightarrow -\infty\]
Which is a contradiction.
\end{proof}


We show examples of the use of Lemma \ref{cons_lemma} in the proofs of lemmas \ref{one_round_lemma} and \ref{gen_lemma}. Lemma \ref{one_round_lemma} was proved in \cite{feige2020finding}. Here we illustrate how Lemma \ref{cons_lemma} can be used in order to prove this lemma.

\begin{lemma}
\label{one_round_lemma}
For every $1\leq \de<2$:
\[\al_*(\de,1)\leq \de\]
\end{lemma}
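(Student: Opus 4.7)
The plan is to apply Lemma \ref{cons_lemma} with the trivial template, i.e.\ a signature set $\mathcal{S}$ of size one. The key observation that makes this work is that for a one-round deterministic algorithm there is no adaptivity at all, so the queries graph $Q_1$ is a fixed graph on $n$ vertices with at most $n^\de$ edges, chosen without any knowledge of $G$. The clique $K$ returned by the algorithm, if it exists, must be a $k$-clique of $Q_1$ all of whose $\binom{k}{2}$ pairs received positive answers.

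With this in mind, the candidate sets (using the terminology of Section \ref{constrained algs}) are exactly the $k$-cliques of $Q_1$, and feasibility corresponds to all of these $\binom{k}{2}$ queried edges being positive. First I would bound the number of candidate sets. By the Kruskal--Katona theorem, a graph with $m = n^\de$ edges contains at most $\binom{x}{k}$ cliques of size $k$, where $x$ satisfies $\binom{x}{2} = m$, so $x \le \sqrt{2}\,n^{\de/2}$. Hence the number of candidate sets is at most
\[
\binom{\sqrt{2}\,n^{\de/2}}{k} \le \frac{2^{k/2}\,n^{(\de/2)k}}{k!}.
\]
Next, since each candidate has all $\binom{k}{2}$ pairs already queried, the probability that a given candidate is feasible is exactly $2^{-\binom{k}{2}}$. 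Markov's inequality then gives
\[
\mathbb{P}(\mathcal{A}_p \text{ succeeds}) \le \frac{2^{k/2}}{k!}\cdot n^{(\de/2)k}\cdot 2^{-\binom{k}{2}},
\]
which is of the form required by Lemma \ref{cons_lemma} with $f_1(p) = \de/2$ and $f_2(p) = 1$ (the constant $\tfrac{2^{k/2}}{k!}$ is absorbed into the $o(\log^2 n)$ slack that appears when taking logarithms in the proof of Lemma \ref{cons_lemma}). Applying the lemma with a single signature yields
\[
\al_*(\de,1) \le \frac{2 f_1}{f_2} = \de,
\]
as claimed.

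The only non-cosmetic step is the Kruskal--Katona bound; the rest is bookkeeping. I would verify that the ignored prefactor is indeed subdominant: $\log(2^{k/2}/k!) = O(k) - \Theta(k \log k) = -\Theta(\log n \cdot \log\log n) = o(\log^2 n)$ for $k = \Theta(\log n)$, so it fits within the error term permitted by Lemma \ref{cons_lemma}. No constraints on intermediate states $K_i$ are needed because there is only one round and $K_0$ is an independent set while $K_1 = K$ is a complete graph, leaving no structural parameter to range over.
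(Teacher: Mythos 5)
Your proof is correct, but it takes a genuinely different route from the paper's. The paper also applies Lemma~\ref{cons_lemma}, but it counts candidates by a much more elementary device: it observes that any $k$-clique of $Q_1$ contains a perfect matching, and a perfect matching is a choice of $\frac{k}{2}$ edges out of the $n^\de$ edges of $Q_1$, so there are at most $\binom{n^\de}{k/2}\le n^{\frac{1}{2}\de k}$ candidates. You instead bound the number of $k$-cliques of $Q_1$ directly via the Kruskal--Katona theorem, getting $\binom{x}{k}$ with $\binom{x}{2}=n^\de$, which gives the same $n^{\frac{1}{2}\de k}$ up to a subexponential prefactor that you correctly argue is absorbed in the $o(\log^2 n)$ slack. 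Both arguments are valid. What the paper's matching-based count buys is more than simplicity: it is the prototype for the entire ``constrained algorithms'' framework used in the rest of the paper, where the size of the maximum matching in the partial states $K_i$ becomes the key template parameter and the candidate counts in later rounds are again bounded by choosing partial matchings from the round-$i$ query graph. Kruskal--Katona does not generalize to that setting (the multi-round candidates are not cliques of a single deterministic graph), so while your version is a perfectly good proof of this one lemma, it would not carry forward to Lemma~\ref{gen_lemma} and beyond, whereas the paper's choice of parametrizing by the matching does. One more cosmetic point: you describe the template as trivial, while the paper uses the maximum-matching size as a parameter and then argues that only the perfect-matching value of that parameter is relevant; the two descriptions are equivalent for one round, but the paper's framing is what sets up the later multi-round arguments.
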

\begin{proof}
We will use Lemma \ref{cons_lemma} with a template which consists of a single parameter, the size of the maximum matching in $K_1$. An algorithm $\mathcal{A}_p$ may succeed only if the maximum matching in $K_1$ is a perfect matching, i.e., of size $\frac{1}{2}k$. Therefore, there is only one relevant signature of the template to consider, which we denote $p_0$. In order to bound $\mathbb{P}(\mathcal{A}_{p_0}\text{ succeeds})$, we first bound the number of subgraphs that may be the returned clique $K$. A set of vertices $a$ will be called a candidate if it satisfies the following:
\begin{itemize}
    \item $|a|=k$.
    \item $Q_1$ induced on $a$ has a perfect matching of size $\frac{1}{2}k$.
\end{itemize}
Note that if the algorithm succeeds, it must return a candidate. The number of candidates is bounded by the number of ways to choose a perfect matching. The perfect matching is a set of $\frac{1}{2}k$ edges out of the $n^\de$ edges in $Q_1$. Hence there are at most ${{n^\de}\choose{\frac{1}{2}k}}\leq n^{\frac{1}{2}\de k}$ candidates. For a candidate to be a clique, all $\kct$ edges within it must be positive, and this happens with probability $2^{-\kct}$. Therefore, $\mathbb{P}(\mathcal{A}_{p_0}\text{ succeeds})\leq n^{\frac{1}{2}\de k}\cdot 2^{-\kct}$, and by Lemma \ref{cons_lemma}:
\[\al_*(\de,1)\leq \de\]
\end{proof}
For the next lemma, we use a template which consists of the following parameters:
\begin{itemize}
    \item $m_i\in[0,\frac{1}{2}]$ where $1\leq i\leq l$ and $m_i k$ is the size of the maximum matching in $K_i$.
    \item $w_i\in[0,1]$ where $0\leq i\leq l$ and $w_i\kct$ is the number of edges of $K$ not queried up to round $i$ (i.e., the number of edges in the complement of $K_i$). Note that $w_0=1$.
\end{itemize}

\begin{lemma}
\label{gen_lemma}
For every $l\in\mathbb{N}$, and $1\leq\de< 2$:
\[\al_*(\de,l)\leq \max_{p\in\mathcal{S}}\min_{1\leq i\leq l}\frac{2-(4-2\de)m_i}{w_{i-1}}\]
\end{lemma}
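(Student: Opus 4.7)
The plan is to apply Lemma \ref{cons_lemma} with the given template, deriving $B=l$ separate bounds on $\mathbb{P}(\mathcal{A}_p\text{ succeeds})$, one per round. First I would verify that $|\mathcal{S}|=n^{o(\log n)}$: each $m_i$ takes $O(k)$ values (a multiple of $1/k$ in $[0,\tfrac{1}{2}]$) and each $w_i$ takes $O(k^2)$ values (a multiple of $1/\kct$ in $[0,1]$), so $|\mathcal{S}|\le (\log n)^{O(l)}$, which is $n^{o(\log n)}$ for constant $l$.

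For each round $i\in\{1,\dots,l\}$ and each signature $p\in\mathcal{S}$, I would introduce a counting random variable $X_i^p$ defined as the number of pairs $(K,M)$ such that $|K|=k$, $M\subseteq\binom{K}{2}$ is a matching of size $m_i k$ with $M\subseteq E(Q_{\leq i})$, the number of pairs of $K$ inside $E(Q_{\leq i-1})$ equals $(1-w_{i-1})\kct$, and every pair of $K$ outside $E(Q_{\leq i-1})$ is positive in $G$. Whenever $\mathcal{A}_p$ succeeds and returns the clique $K^*$, taking $M^*$ to be any matching of size $m_i k$ in $K_i^*$ (which exists by the signature) produces a pair $(K^*,M^*)$ counted by $X_i^p$. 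Hence $\mathbb{P}(\mathcal{A}_p\text{ succeeds})\le\E[X_i^p]$, and it remains to bound $\E[X_i^p]$.

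I would bound $\E[X_i^p]$ by conditioning on the algorithm's history up to round $i$. Given any such history, $Q_{\leq i}$ is determined, so the number of candidate pairs $(K,M)$ with $M\subseteq E(Q_{\leq i})$ is at most $\binom{|E(Q_{\leq i})|}{m_i k}\cdot n^{(1-2m_i)k}\le(ln^\de)^{m_i k}\cdot n^{(1-2m_i)k}=n^{(1-(2-\de)m_i)k+O(1)}$, where the first factor chooses the matching from the at most $l n^\de$ queried edges and the second extends the $2m_i k$ matched vertices to a $k$-set. Conditional on the history, the pairs of $K$ outside $E(Q_{\leq i-1})$ are untouched Bernoulli$(\tfrac{1}{2})$ variables in $G$, and the signature constraint built into $X_i^p$ fixes their count at $w_{i-1}\kct$; hence the probability that they are all positive is exactly $2^{-w_{i-1}\kct}$, independent of the history. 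Combining yields $\E[X_i^p]\le n^{(1-(2-\de)m_i)k+O(1)}\cdot 2^{-w_{i-1}\kct}$, which has the form required by Lemma \ref{cons_lemma} with $f_1^i(p)=1-(2-\de)m_i$ and $f_2^i(p)=w_{i-1}$ (the $n^{O(1)}$ slack is absorbed into the $n^{o(\log n)}$ factor). Invoking the lemma gives
\[\al_*(\de,l)\le\max_{p\in\mathcal{S}}\min_{1\leq i\leq l}\frac{2(1-(2-\de)m_i)}{w_{i-1}}=\max_{p\in\mathcal{S}}\min_{1\leq i\leq l}\frac{2-(4-2\de)m_i}{w_{i-1}},\]
as required.

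The main obstacle is the conditioning step that cleanly separates the candidate count (which depends on the random queries graph $Q_{\leq i}$) from the probability factor $2^{-w_{i-1}\kct}$. The key observation is that fixing the history up to round $i$ determines $Q_{\leq i}$ while leaving the pairs of $G$ not in $E(Q_{\leq i-1})$ as fresh independent Bernoulli variables; hardwiring the signature constraint $|E(K)\cap E(Q_{\leq i-1})|=(1-w_{i-1})\kct$ into the definition of $X_i^p$ is what pins down the exponent $w_{i-1}$ in the final bound, rather than the random quantity $|E(K)\setminus E(Q_{\leq i-1})|$.
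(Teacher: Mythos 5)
Your proposal is correct and takes essentially the same approach as the paper: you define $i$-candidates (your pairs $(K,M)$ with $M$ a matching of size $m_ik$ in $Q_{\leq i}$), bound their number by $\binom{|E(Q_{\leq i})|}{m_ik}\cdot n^{(1-2m_i)k}$, observe that the $w_{i-1}\kct$ pairs outside $Q_{\leq i-1}$ are fresh Bernoulli$(\tfrac12)$ after conditioning on the history, and invoke Lemma \ref{cons_lemma}. Your explicit conditioning on the history to separate the (random) candidate count from the probability factor is the same argument the paper makes more tersely when it notes that the queries up to round $i$ are determined by the answers to the queries up to round $i-1$.
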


\begin{proof}
Let $1\leq i\leq l$, and let $p\in \mathcal{S}$ be some signature. We start with an upper bound on the number of subgraphs of $G$ which after round $i$ may be the subgraph $K_i$. Such subgraphs, which will be called $i$-candidates, must satisfy several conditions in order to comply with the signature. A set of vertices $a$ will be called an $i$-candidate if it satisfies the following:
\begin{itemize}
    \item $|a|=k$.
    \item $Q_{\leq i}$ induced on $a$ has a maximum matching of size $m_ik$.
    \item {$Q_{\leq (i-1)}$} induced on $a$ has $(1-w_{i-1})\kct$ edges.
\end{itemize}
{An $i$-candidate $a$ may be partitioned to $b\sqcup c$ where $b$ consists of the $2m_i k$ vertices covered by a maximum matching (breaking ties arbitrarily), and $c$ is the rest of the vertices ($\abs{c}=(1-2m_i)k$).
The maximum matching must be some subset of $m_ik$ edges among the $in^{\de}$ edges queried up to round $i$, so we can upper bound the number of different matchings by the number of subsets of $m_ik$ queries. Hence, there are at most ${{in^{\de}}\choose{m_ik}}$ ways to choose the set $b$. The $(1-2m_i)k$ vertices in $c$ may be any vertices of {$V$}, so there are at most ${{n}\choose{k(1-2m_i)}}$ ways to choose $c$.} Therefore, the number of $i$-candidates is at most ${{in^{\de}}\choose{m_ik}}{{n}\choose{k(1-2m_i)}}\leq n^{\de m_ik+k(1-2m_i)}=n^{(1-(2-\de)m_i)k}$.

The queries up to round $i$ are determined by the answers to the queries up to round $i-1$. The answers given on round $i$ {and on later rounds,} do not affect which subgraphs belong to the family of $i$-candidates (as opposed to answers given up to round $i-1$). Any pair of vertices not queried up to round $i-1$ has probability $\frac{1}{2}$ of being answered positively when queried, and in any $i$-candidate there are $w_{i-1}\kct$ such potential edges. If all these edges in a given $i$-candidate are answered positively, we say that the candidate is $i$-feasible. The probability an $i$-candidate is $i$-feasible is $2^{-w_{i-1}\kct}$.

The clique $K$ returned by the algorithm must be an $i$-feasible set for every $i$. Hence, in order for $\mathcal{A}_p$ to succeed, there must be a set which is $i$-feasible for every $i$. Using union bound we get that the probability $\mathcal{A}_p$ will succeed is at most $n^{(1-(2-\de)m_i)k}\cdot 2^{-w_{i-1}\kct}$ for every $i$. In addition, we have $|\mathcal{S}|=polylog(n)$. Therefore, by Lemma \ref{cons_lemma}:
\[\al_*(\de,l)\leq \max_{p\in\mathcal{S}}\min_{1\leq i\leq l}\frac{2-(4-2\de)m_i}{w_{i-1}}\]
\end{proof}

In \cite{feige2020finding} a variation of Lemma \ref{gen_lemma} was used, where the constrained parameters were $w_i$, and bounds on $m_i$ were deduced from them. We will use the opposite direction, which is simpler, and deduce bounds on $w_i$ from $m_i$.

\subsubsection{Restricting the signatures set}

When we use Lemma \ref{cons_lemma} it will sometimes be useful to restrict the signatures set over which we maximize, when we bound $\al_*(\de,l)$. For this purpose, we include the size of the maximum independent set in $K_{l-1}$ as a parameter in the template, denoted $I_{l-1}$. An independent set in $K_{l-1}$ must be a clique fully queried in round $l$, in order for $K_l$ to be a clique. Finding any clique of size $I_{l-1}$ in round $l$, considering only subgraphs completely un-queried up to round $l-1$, is at least as hard as finding a clique of size $I_{l-1}$ using a one round algorithm (since all queries are answered independently). Using Lemma \ref{one_round_lemma}, we cannot expect to find a clique if $I_{l-1}> \de\log n$. This is formulated in the following lemma, which is proved in Appendix \ref{appendix_proof_max_indep}.

\begin{lemma}
\label{max_indep_lemma}
    Let $\mathcal{S}_{I_{l-1}\leq \de\log n}\sqcup \mathcal{S}_{I_{l-1}> \de\log n}$ be a partition of the signatures in $\mathcal{S}$ according to the size of $I_{l-1}$ in relation to $\de\log n$. Lemma \ref{cons_lemma} holds when $\mathcal{S}$ is restricted to $\mathcal{S}_{I_{l-1}\leq \de\log n}$, i.e.,
    \[\al_*(\de,l)\leq \max_{p\in \mathcal{S}_{I_{l-1}\leq \de\log n}}\min_{j}\frac{2f_1^j(p)}{f_2^j(p)}\]
\end{lemma}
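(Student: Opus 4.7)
The plan is to augment the bounds available in Lemma~\ref{cons_lemma} with an additional one-round style bound that applies only to signatures $p \in \mathcal{S}_{I_{l-1} > \de \log n}$. The underlying intuition is that whenever $K_{l-1}$ contains an independent set $J$ of size $I = I_{l-1}$ within the returned clique $K$, the $\binom{I}{2}$ edges spanning $J$ were not queried in rounds $1,\dots,l-1$, so their answers in round $l$ are independent uniform bits; and since $K$ is a clique, all of them must be positive. Round $l$ alone therefore performs a one-round clique search of size $I$ with at most $n^\de$ queries on fresh pairs, which is exactly the regime analysed in Lemma~\ref{one_round_lemma}.

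Concretely, for any $p \in \mathcal{S}_{I_{l-1}>\de\log n}$, I would condition on the state after round $l-1$ (so $Q_l$ with $|E(Q_l)| \leq n^\de$ is determined) and count the $I$-subsets of $V$ whose $\binom{I}{2}$ pairs all lie in $Q_l$ using the same perfect-matching trick as in Lemma~\ref{one_round_lemma}: there are at most $n^{\de I/2}$ such subsets. Each such subset is fresh, so the probability that all $\binom{I}{2}$ answers are positive equals $2^{-\binom{I}{2}}$. A Markov-type argument then gives
\[\mathbb{P}(\mathcal{A}_p\text{ succeeds}) \leq n^{\de I/2} \cdot 2^{-\binom{I}{2}},\]
which I would encode as a $(B{+}1)$-th bound via $f_1^{B+1}(p) = \de I/(2k)$ and $f_2^{B+1}(p) = I(I-1)/(k(k-1))$, yielding $2f_1^{B+1}/f_2^{B+1} = \de(k-1)/(I-1)$. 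For $I > \de \log n$ this ratio is strictly less than $\al = k/\log n$.

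Finally, I would rerun the proof of Lemma~\ref{cons_lemma} with this extended bound family. For $p \in \mathcal{S}_{I_{l-1}\leq\de\log n}$ the original bounds give $\min_j 2f_1^j(p)/f_2^j(p) \leq V := \max_{p \in \mathcal{S}_{I_{l-1}\leq\de\log n}} \min_j 2f_1^j/f_2^j$, while for $p \in \mathcal{S}_{I_{l-1}>\de\log n}$ the new bound gives $\min_j \leq \de(k-1)/(I-1) < \al$. Assuming $\al > V$, in both halves of the partition the quantity $2f_1^j - \al f_2^j$ is bounded above by a negative constant (modulo the borderline discussed below), so the log-exponent argument from the proof of Lemma~\ref{cons_lemma} produces the desired contradiction, yielding $\al_*(\de,l) \leq V$.

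The main obstacle lies in the borderline regime $I_{l-1} - \de\log n = o(\log n)$, where both the direct counting bound $n^{\de I/2} 2^{-\binom{I}{2}}$ degenerates to $2^{O(\log n)}$ and the ratio $\de(k-1)/(I-1)$ approaches $\al$ without a constant gap, so the $\log^2 n$-dominance step of Lemma~\ref{cons_lemma} does not go through verbatim. I expect to handle it by exploiting the $o(\log n)$ slack already present in the definition $k = \al \log n - o(\log n)$: signatures with $I_{l-1} - \de \log n = o(\log n)$ can be absorbed into $\mathcal{S}_{I_{l-1} \leq \de\log n}$ without moving $V$ asymptotically, leaving only the non-degenerate regime $I_{l-1} \geq (\de + \Omega(1))\log n$ in which the extra bound provides the required $\Omega(\log^2 n)$ margin.
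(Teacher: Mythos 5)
Your approach differs structurally from the paper's. The paper does not augment the bound family $\{(f_1^j,f_2^j)\}$ with an extra per-signature bound; it instead splits $\sum_{p\in\mathcal{S}}\mathbb{P}(\mathcal{A}_p\text{ succeeds})$ across the two halves of the partition and disposes of the mass on $\mathcal{S}_{I_{l-1}>\de\log n}$ by a single-event reduction: defining $\mathcal{A}^>$ to succeed iff some $\mathcal{A}_p$ with $p$ in that half succeeds, the event $\{\mathcal{A}^>\text{ succeeds}\}$ forces the round-$l$ queries alone to contain a clique of size exceeding $\de\log n$, so $\mathbb{P}(\mathcal{A}^>\text{ succeeds})\leq\mathbb{P}(\mathcal{A}_1\text{ succeeds})$ for a one-round algorithm $\mathcal{A}_1$ with at most $n^\de$ queries, and Lemma \ref{one_round_lemma}'s argument (run with threshold $\frac{1}{4}$ in place of $\frac{1}{2}$) rules this out. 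The remaining $\geq\frac{1}{4}$ of the mass then sits on $\mathcal{S}_{I_{l-1}\leq\de\log n}$ and Lemma \ref{cons_lemma}'s proof is applied once more, again with threshold $\frac{1}{4}$. Your idea of instead encoding the one-round obstacle as a $(B{+}1)$-th bound and rerunning Lemma \ref{cons_lemma} over all of $\mathcal{S}$ is a legitimate reformulation that reaches the same numerics.

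However, the borderline regime you flag is a genuine gap in your write-up, not a detail you can absorb by fiat. When $I_{l-1}=\de\log n+O(1)$ your added bound $n^{\de I/2}2^{-\binom{I}{2}}=2^{\frac{I}{2}(\de\log n-I+1)}$ is $\approx 1$ (and exactly $1$ at $I=\de\log n+1$), and there may be $n^{o(\log n)}$ such signatures, so the union bound inside Lemma \ref{cons_lemma}'s proof gains nothing from them; equivalently, $\de(k-1)/(I-1)$ undershoots $\al$ only by $\Theta(1/\log n)$, whereas the $\log^2 n$-dominance step requires $2f_1^j(p)-\al f_2^j(p)$ to be bounded above by a negative constant uniformly in $n$, which fails here. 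Your proposed patch (push the $o(\log n)$-borderline signatures into $\mathcal{S}_{I_{l-1}\leq\de\log n}$ and argue the resulting $V$ is unchanged) is the right instinct and matches the paper's implicit convention of tolerating $o(\log n)$ slack throughout, but as stated it is a promissory note: one must show $\max_{p:\,I_{l-1}\leq(\de+\epsilon)\log n}\min_j 2f_1^j(p)/f_2^j(p)\to V$ as $\epsilon\to 0$, a continuity claim over a discrete, $n$-dependent signature set. The paper's single-event reduction hands the entire big-$I$ regime to Lemma \ref{one_round_lemma} wholesale; your route requires controlling each borderline signature individually, which is exactly where it currently stalls.
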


\subsection{Gallay-Edmonds decomposition (GED)}
\label{sec_ged}
In order to prove upper bounds in the case of three rounds, we will use the Gallay-Edmonds Decomposition (see, e.g., \cite{lovasz2009matching}) of the subgraphs $K_i$. A graph $H$, with maximum matching of size $m$, decomposes into three sets (see Figure \ref{GED_fig}). The set $C$ consists of all vertices such that there is a maximum matching in which they are not covered. The set $S$ (separator) consists of the neighbors of $C$  ($S=N(C)\setminus C$), and $R$ is the rest of the vertices (i.e., non-neighbors of $C$). The decomposition satisfies the following: 
\begin{itemize}
    \item The set $C$ is composed of $c$ connected components of odd size (referred to as odd components) and no components of even size. The size of $R$ is even.
    \item Any maximum matching satisfies:
    \begin{itemize}
        \item $R$ is matched to itself (perfectly).
        \item in any odd component, all vertices but one are matched to each other (perfectly).
        \item All vertices of $S$ are matched to vertices of $C$ of distinct odd components (to vertices not matched within $C$).
    \end{itemize}
    \item In any maximum matching there are $c-|S|$ unmatched vertices, so $c-|S|=|V(H)|-2m$.
\end{itemize}
Given a specific maximum matching $M$, we will sometimes use another definition of the decomposition, which depends on $M$. We refer to it as a specific matching decomposition. It is composed of the sets $C^{*+},C^{*-},S,D$ defined as follows (see Figure \ref{GED_fig}):
\begin{itemize}
    \item $S$ is the same as in Gallay-Edmonds Decomposition (GED).
    \item $C^{*+}$ is the set of vertices matched to $S$.
    \item $C^{*-}$ is the set of unmatched vertices. Denoting $C^*=C^{*+}\cup C^{*-}$ we have $|C^{*}|=c$.
    \item $D$ is the set of the rest of the vertices(i.e., $R$ and the even matched part of each odd component).
\end{itemize}
Note that $C^*$ is an independent set and that the number of edges between $D$ and $C^*$ is at most linear in $|V(H)|$ (since any vertex in $D$ has at most one neighbor in $C^*$). 

\begin{figure}[h!]
  \centering
  \includegraphics[width=1\textwidth]{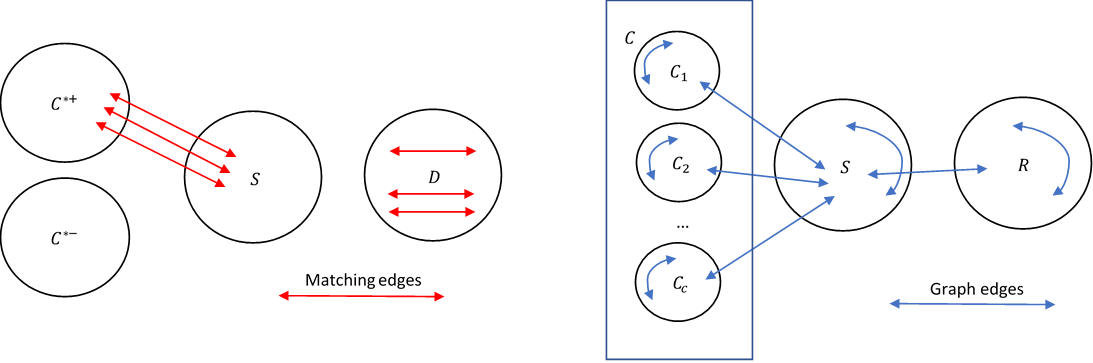}
  \caption{Gallay-Edmonds decomposition (right) and specific matching decomposition (left)}
  \label{GED_fig}
\end{figure}

\section{Upper Bounds}
\subsection{Two rounds}
\subsubsection{Tight bound for two rounds with small $\de$}

We prove an upper bound on $\al_*(\de,2)$ for every $1 \leq \de <2$, which is tight when $1\leq \de\leq \frac{6}{5}$. 
\begin{theorem}
\label{two_rounds_small_delta_thm}
Let $1\leq\de< 2$. Then:
\[\al_*(\de,2)\leq\frac{4}{3}\de\]
\end{theorem}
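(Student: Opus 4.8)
The plan is to apply Lemma \ref{gen_lemma} with $l=2$ and then optimize the resulting two-variable expression over the admissible signatures. By Lemma \ref{gen_lemma}, for $l=2$ we have
\[
\al_*(\de,2)\leq \max_{p\in\mathcal{S}}\min\left\{\frac{2-(4-2\de)m_1}{w_0},\ \frac{2-(4-2\de)m_2}{w_1}\right\}.
\]
Since $w_0=1$ and $K_2=K$ must be a perfect clique, we have $m_2=\frac{1}{2}$, so the second term is exactly $\frac{2-(4-2\de)\cdot\frac12}{w_1}=\frac{\de}{w_1}$. Thus the bound reduces to $\al_*(\de,2)\leq\max_p\min\{\,2-(4-2\de)m_1,\ \de/w_1\,\}$, and everything hinges on controlling the relationship between $m_1$ (the matching number of $K_1$) and $w_1$ (the fraction of $K$-edges still unqueried after round~1).

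The key step I expect to need is a combinatorial inequality relating $w_1$ to $m_1$: the subgraph $K_1$ has matching number $m_1 k$, and I want a \emph{lower} bound on the number of edges of $K$ that $K_1$ can already contain (equivalently an upper bound on $w_1$) — or rather, since larger $w_1$ only helps the adversary, I want to bound how small $w_1$ can be given $m_1$. Here the Gallai–Edmonds / König-type structure is the natural tool: a graph on $k$ vertices with matching number $m_1 k$ has a vertex cover (or in the bipartite-free setting, an appropriate Gallai–Edmonds separator) of controlled size, which forces a large independent set $I = k - \text{(something)}$, and \emph{all} $\binom{I}{2}$ edges among that independent set are un-queried after round~1, giving $w_1\binom{k}{2}\geq\binom{I}{2}$, i.e. roughly $w_1\geq (1-2m_1)^2$ when the uncovered part is an independent set of size $(1-2m_1)k$. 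Actually the cleanest route is: the vertices not covered by the maximum matching of $K_1$ form an independent set of size $(1-2m_1)k$ in $K_1$, so $w_1 \geq (1-2m_1)^2$ (up to lower-order terms), and this is the inequality to feed into the optimization.

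With $w_1\geq(1-2m_1)^2$, writing $x=m_1\in[0,\tfrac12]$, the bound becomes $\al_*(\de,2)\leq \max_{x}\min\{\,2-(4-2\de)x,\ \de/(1-2x)^2\,\}$. The first function decreases in $x$, the second increases in $x$, so the max of the min is attained where they cross; solving $2-(4-2\de)x = \de/(1-2x)^2$ and checking the value at the crossing point should yield exactly $\tfrac{4}{3}\de$ (one verifies the candidate $x = \tfrac{1}{6}$ gives $2-(4-2\de)/6 = 2 - \tfrac{2-\de}{3} = \tfrac{4+\de}{3}$ — hmm, so the matching calculation must instead use a sharper structural bound, e.g. that the $(1-2m_1)k$ uncovered vertices together with the $m_1 k$ even halves of odd Gallai–Edmonds components form a larger un-queried independent-ish region, giving $w_1\gtrsim(1-m_1)^2$ or similar; the precise exponent is exactly what makes the two expressions meet at $\tfrac43\de$). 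The main obstacle, then, is pinning down the correct structural inequality between $w_1$ and $m_1$ — getting the right constant here is the whole content of the theorem, since once the inequality is in hand the optimization is a routine one-variable calculus exercise, and Lemma \ref{cons_lemma} (via Lemma \ref{gen_lemma}) does the rest. I would also need to confirm $|\mathcal{S}|=n^{o(\log n)}$, which is immediate since $m_1,w_1$ range over $O(\log^2 n)$ discretized values.
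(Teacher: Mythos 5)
Your plan has a genuine gap, and it is not merely the missing ``structural inequality'' you flag at the end --- the difficulty is that \emph{no} lower bound on $w_1$ in terms of $m_1$ alone can make your formulation work. In the bound $\max_{m_1}\min\{2-(4-2\de)m_1,\ \de/w_1\}$, consider $m_1=\frac14$: the first term equals $1+\frac{\de}{2}$, which exceeds $\frac43\de$ for every $\de<\frac65$, and $w_1$ can genuinely be as small as $\frac12$ (this is exactly what Algorithm~2 arranges: in round 1 it queries all pairs inside $V_{M_1}$ and half of the $V_{M_1}\times V_{M_2}$ pairs), so the second term can be as large as $2\de$. Hence your max--min is at least $1+\frac{\de}{2}$, and the approach can never prove anything stronger than that; in particular it cannot reach $\frac43\de$ on $1\le\de<\frac65$, precisely the range where the theorem is tight. (Your candidate inequality $w_1\ge(1-2m_1)^2$ is correct but even weaker, giving about $1.755$ at $\de=1$; the best possible inequality of this type is $w_1\ge 1-2m_1$, which still only yields $\frac{1+\sqrt5}{2}\approx 1.618$ at $\de=1$.)

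The paper's proof uses a structurally different bound: it does not take the minimum of the two per-round bounds of Lemma~\ref{gen_lemma}, but forms a single product bound in which the candidate count is $\binom{q_1}{m_1k}\binom{q_2}{m_2k}\le n^{\frac12\de k}$ (choosing a round-1 matching that determines $V_{M_1}$ and a round-2 matching on its complement) and the probability exponent is $(e_1+e_2)\kct$, where $e_1\kct$ counts the round-1 queried edges \emph{inside} $V_{M_1}$ and $e_2\kct$ the round-2 queried edges of $K$. The only edges of $K$ not charged are the round-1 ``free'' edges between $V_{M_1}$ and $V_{M_2}$, and the key combinatorial fact (Lemma~\ref{matching_edges_half}) is that, because $M_1$ is a maximum matching, each matched pair $u,v$ satisfies $\deg^-(u)+\deg^-(v)\le|V_{M_2}|$ (an augmenting-path argument), so there are at most $\frac12|V_{M_1}||V_{M_2}|$ such edges. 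This gives $e_1+e_2\ge 1-4m_1\bigl(\frac12-m_1\bigr)$ and hence
\[\al_*(\de,2)\le\max_{m_1}\frac{\de}{1-4m_1(\frac12-m_1)}=\frac43\de.\]
So the ingredient you are missing is not a sharper bound on $w_1$, but rather charging the round-1 positive edges within $V_{M_1}$ to the probability term as well, combined with the half-density bound on the cross edges.
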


\begin{proof}
Denote by $M_1$ some matching of maximum size in the subgraph of $Q_1$ induced on $V(K)$, i.e., in $K_1$ (breaking ties arbitrarily). Its size is denoted by $m_1 k$, where $0 \le m_1 \le \frac{1}{2}$. The set of $2m_1k$ vertices covered by $M_1$ is denoted by $V_{M_1}$. The number of edges in the subgraph of $Q_1$ induced on $V_{M_1}$ is denoted by $e_1 {k \choose 2}$, where $0\leq e_1 \leq 1$. Both $m_1,e_1$ will be in the template we will use. 

Consider $K_1$ which satisfies the signature $(m_1,e_1)$, and the corresponding set $V_{M_1}$. The set $K_1\setminus V_{M_1}$ is independent, otherwise we could add an edge to the matching $M_1$, increasing its size. Hence, for $K$ to be a clique it is necessary that there will be a perfect matching in the subgraph of $Q_2$ induced on $V(K)\setminus V_{M_1}$. Denote such a matching by $M_2$ (breaking ties arbitrarily). Its size is denoted by $m_2 k$, hence $m_1 + m_2 = \frac{1}{2}$. Denote $V_{M_2}=V(K)\setminus V_{M_1}$, and denote the number of edges in the subgraph of $Q_2$ induced on $V(K)$ by $e_2 {k \choose 2}$. Observe that $e_1 + e_2 \le 1$, but equality need not hold, as some pairs $u,v$ for which $v \in V_{M_1}$ and $u \in V_{M_2}$, may have been queried in the first round. We add the parameter $e_2$ to the template.

For every signature $p=(m_1,e_1,e_2)$ we wish to bound the probability that there is a clique $K$ with signature $p$. For such $K$, $V(K)$ must decompose into $V_{M_1}, V_{M_2}$, and all pairs of vertices in $K$ must be queried and answered positively, particularly the indicated $(e_1+e_2)\kct$ pairs. 

A set of vertices $a$ will be a $1$-candidate if it is a possible $V_{M_1}$ for some $K$, i.e., satisfies the following:
\begin{itemize}
    \item $|a|=2m_1k$.
    \item $Q_1$ induced on $a$ has a perfect matching.
    \item $Q_1$ induced on $a$ has $e_1\kct$ edges.
\end{itemize}
The property of a set being a $1$-candidate depends only on $Q_1$, and hence determined deterministically. There are at most ${q_1}\choose {m_1k}$ ways to choose the matching $M_1$ which defines $V_{M_1}$, hence at most ${q_1\choose m_1k}$ $1$-candidates.

$a$ will be called $1$-feasible if:
\begin{itemize}
    \item $a$ is a $1$-candidate.
    \item All $e_1\kct$ queries of $Q_1$ induced on $a$ were positive.
\end{itemize}
The probability a $1$-candidate is feasible is $2^{-e_1\kct}$, so in expectation there are at most ${q_1\choose m_1k}\cdot 2^{-e_1\kct}$ $1$-feasible sets.

Two sets $a,b$ will be a candidate pair if $(a,b)$ are a possible pair for $V_{M_1},V_{M_2}$ of some $K$, i.e., satisfy the following:
\begin{itemize}
    \item $a$ is $1$-feasible.
    \item $a\cap b=\emptyset$.
    \item $|b|=2m_2k$.
    \item $Q_1$ induced on $b$ is an independent set.
    \item $Q_2$ induced on $b$ has a perfect matching.
    \item $Q_2$ induced on $a\cup b$ has $e_2\kct$ edges.
\end{itemize}
A pair $(a,b)$ will be feasible if: 
\begin{itemize}
    \item $(a,b)$ is a candidate pair.
    \item All $e_2\kct$ queries of $Q_2$ induced on $a\cup b$ were positive.
\end{itemize}

Given $a$, the sets $b$ which make a candidate pair with $a$ 
{depend on $Q_2$, and $Q_2$ depends on the answers to $Q_1$. 
For every choice of answers to $Q_1$,} there are at most ${q_2}\choose {m_2k}$ ways to choose the matching $M_2$ which defines $V_{M_2}$, so this bounds the number of possible such sets $b$. 
Round $2$ queries are answered positively with probability $\frac{1}{2}$ independently of $Q_1$ answers. In addition, the number of relevant pairs, $e_2\kct$, is chosen in advance. Hence, the probability a candidate pair will be feasible is $2^{-e_2\kct}$, and in expectation there are at most ${q_2\choose m_2k}\cdot 2^{-e_2\kct}$ feasible pairs $(a,b)$ for any $1$-feasible set $a$. 

The probability $\mathcal{A}_p$ succeeds (where $p=(m_1,e_1,e_2)$) is at most the probability a feasible pair exists, which is at most the expected number of such pairs (by Markov's inequality). By linearity of expectation, we can write this expectation as the sum of expectations conditioned on a specific $1$-feasible set $a$. Hence:
\[\mathbb{P}(\mathcal{A}_p\text{ succeeds})\leq {q_1\choose m_1k}\cdot 2^{-e_1\kct}\cdot {q_2\choose m_2k}\cdot 2^{-e_2\kct}\]
Given $q_1=q_2=n^\de$ we get:
\[\mathbb{P}(\mathcal{A}_p\text{ succeeds})\leq n^{\frac{1}{2}\de k}\cdot 2^{-(e_1+e_2)\kct}\]
{
The edges in $K$ may be partitioned into three types:
\begin{enumerate}
    \item $e_1\kct$ edges within $V_{M_1}$, queried in round 1.
    \item $e_2\kct$ edges within $V(K)$, queried in round 2.
    \item $(1-e_1-e_2)\kct$ edges between $V_{M_1}$ and $V_{M_2}$, queried in round 1. Note that there are no edges within $V_{M_2}$ queried in round 1.
\end{enumerate}
After receiving $Q_1$ answers, the algorithm might choose $Q_2$ such that all candidate pairs $a,b$ have only positive $Q_1$ queries between them. So the algorithm may choose to include edges in candidates for $K$, based on previous information that they are positive. Hence we cannot consider those edges probability of being positive when bounding the probability a candidate for $K$ is a clique. We therefore refer to such edges (of the third type) as "free edges" for the algorithm. This clarifies why we defined feasibility of a candidate pair to depend only on the $(e_1+e_2)\kct$ edges of the first two types, which are answered after the algorithm chose to include them in a candidate.}

The following lemma bounds the number of free edges. Its proof appears after the end of the proof of the current theorem.

\begin{lemma}
\label{matching_edges_half}
Let $H$ be a graph with an even number of vertices, let $V_M$ be the set of vertices matched by a maximum matching in $H$, and let $V_M^-$ be the rest of the vertices. Denote by $E^-$ the set of edges of $H$ not contained in $V_M$ (i.e., have at least one endpoint in $V_M^-$). Then $E^-= E(V_M,V_M^-)$ and
$|E^-|\leq \frac{1}{2}|V_M||V_M^-|$.
\end{lemma}

Applying Lemma \ref{matching_edges_half} with $H=K_1$ we get $(1-e_1-e_2)\kct\leq \frac{1}{2}\cdot 2m_1k\cdot 2m_2k$ and then (using $\kct\simeq \frac{k^2}{2}$, neglecting low order terms):
\[e_1+e_2\geq 1-4m_1m_2=1-4m_1(\frac{1}{2}-m_1)\]
Therefore:
\[\mathbb{P}(\mathcal{A}_p\text{ succeeds})\leq n^{\frac{1}{2}\de k}\cdot 2^{-(1-4m_1(\frac{1}{2}-m_1))\kct}\]

We apply Lemma \ref{cons_lemma} with the signatures set $\mathcal{S}$ defined by $m_1,e_1,e_2$ which satisfies $\mathcal{S}=O(\log^5 n)$. We get:
\[\al_*(\de,2)\leq \max_{m_1}\frac{\de}{1-4m_1(\frac{1}{2}-m_1)}\]
This expression is maximal when $m_1(\frac{1}{2}-m_1)$ is maximal, i.e., for $m_1=\frac{1}{4}$, and we get the desired upper bound:
\[\al_*(\de,2)\leq\frac{4}{3}\de\]

\end{proof}

\begin{corollary}
For $1\leq \de\leq \frac{6}{5}$ the upper bound is tight and we have:
\[\al_*(\de,2)=\frac{4}{3}\de\]
\end{corollary}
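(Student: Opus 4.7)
The plan is to combine the two matching bounds that are already available in the paper. Theorem \ref{two_rounds_small_delta_thm}, whose proof has just been completed, delivers the upper bound $\al_*(\de,2) \le \frac{4}{3}\de$ for every $1 \le \de < 2$, and in particular for every $\de$ in the range $1 \le \de \le \frac{6}{5}$ covered by the corollary. So one direction is handed to us for free, and the whole task reduces to exhibiting a matching lower bound.

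For the lower bound $\al_*(\de,2) \ge \frac{4}{3}\de$ in the range $1 \le \de \le \frac{6}{5}$, I would simply invoke Algorithm~2 from Section~\ref{algos_sec} (originally from \cite{feige2020finding}). That two-rounds algorithm queries $n^{\de}$ pairs in each round and, with high probability, returns a clique of size $\frac{4}{3}\de \log n$. The hypothesis $\de \le \frac{6}{5}$ is what guarantees that the outer set $T$ of size $n^{5\de/6}$ can be chosen disjointly from $S$ inside $V(G)$ (since $|T| \le n$ precisely when $5\de/6 \le 1$), so the algorithm is legitimate on exactly the range stated in the corollary. By Definition~\ref{def_rest_2}, the existence of such an algorithm certifies $\al_*(\de,2) \ge \frac{4}{3}\de$.

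Combining the two inequalities yields $\al_*(\de,2) = \frac{4}{3}\de$ on the interval $1 \le \de \le \frac{6}{5}$. There is no real obstacle here: the work was done in Theorem \ref{two_rounds_small_delta_thm} (upper bound) and in the analysis of Algorithm~2 (lower bound), and the corollary is the formal book-keeping that these two quantitatively meet, with the cut-off $\de = \frac{6}{5}$ being exactly the largest value for which Algorithm~2's set $T$ still fits inside the ambient $n$ vertices.
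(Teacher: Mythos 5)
Your proposal is correct and matches the paper's reasoning exactly: the upper bound is Theorem \ref{two_rounds_small_delta_thm}, and the matching lower bound is Algorithm~2 from Section \ref{algos_sec}, with the constraint $\de \le \frac{6}{5}$ ensuring $|T| \le n$. Nothing more is needed.
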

The signature which maximizes the objective function has the following setting of parameters: $m_1=\frac{1}{4}$ and $e_1,e_2$ are such that $\frac{1}{2}|V_{M_1}||V_{M_2}|$ edges are queried between $V_{M_1}$ and $V_{M_2}$ on round $1$. These values are equal to those in the signature of Algorithm 2 (see Section \ref{algos_sec}), as we could expect would happen in an optimal algorithm. We finish with a proof of Lemma \ref{matching_edges_half}.

\begin{proof}
The set $V_M^-$ must be independent, because if it contains some edge we could add the edge to the matching, contradicting its maximality. Hence, $E^-= E(V_M,V_M^-)$. Let $v,u\in V_M$ be a pair matched by the maximum matching. Assume without loss of generality (w.l.o.g) that $\deg^-(u)\leq \deg^-(v)$, where $\deg^-$ is the number of neighbors in $V_M^-$ the vertex has. We claim that $\deg^-(u)\geq 1$ implies $\deg^-(u)=\deg^-(v)=1$. Assuming $\deg^-(u)\geq 1$, Let $w\in V_M^-$ be a neighbor of $u$, and assume for contradiction $\deg^-(v)\geq 2$. Then, there is some $w\neq z\in V_M^-$ which is a neighbor of $v$. Replacing the edge $\{u,v\}$ in the matching by $\{u,w\}$ and $\{v,z\}$ we get a larger matching, which is a contradiction. Hence, either $\deg^-(u)=0$ or $\deg^-(u)=\deg^-(v)=1$. In both cases, since $|V_M^-|\neq 1$ (as $|V(H)|$ is even), this implies $\deg^-(u)+\deg^-(v)\leq |V_M^-|$. Since $E^-= E(V_M,V_M^-)$ we have:
\[|E^-|=\sum_{u\in V_M} \deg^-(u)=\sum_{u\text{ matched to }v } \deg^-(u)+\deg^-(v)\leq \frac{1}{2}|V_M||V_M^-|\]
\end{proof}

\subsubsection{Bound for two rounds with large $\de$}
\label{sec_two_rounds_large_delta}
The upper bound $\al_*(\de,2)\leq\frac{4}{3}\de$ proved in Theorem \ref{two_rounds_small_delta_thm} is tight only for $1\leq \de\leq \frac{6}{5}$. For $\frac{6}{5}< \de<2$ we had a lower bound of $1+\frac{\de}{2}$, which is smaller on this range of $\de$ (see Figure \ref{bounds_fig}). Resolving the two rounds with $\frac{6}{5}< \de<2$ case is important before proceeding to the analysis of more rounds with any value of $\de$. The reason is that we may use an algorithm for two rounds with large $\delta$, to solve the problem of $l>2$ rounds with small $\delta$. For example we prove the following lemma.
\begin{lemma}
If there exists $\frac{6}{5}< \de<2$ and $\epsilon>0$ for which $\al_*(\de,2)\geq 1+\frac{\de}{2}+\epsilon$, then for some $\epsilon'>0$, $\al_*(1,4)\geq \frac{3}{2}+\epsilon'$. (Note that the current lower bound is $\al_*(1,4)\geq\frac{3}{2}$).
\end{lemma}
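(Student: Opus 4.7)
The plan is to build a four-round algorithm for $\de=1$ whose first two rounds mimic Algorithm 4 to reduce the instance to a random induced subgraph $T'$, and whose last two rounds apply the hypothesized stronger two-round algorithm on $T'$. Set $\beta_1=\frac{\de-1}{2\de}$; since $\frac{6}{5}<\de<2$ we have $\frac{1}{12}<\beta_1<\frac{1}{4}$. In round 1, pick an arbitrary $S\subseteq V$ with $|S|=n^{\beta_1}$ and query all pairs in $S$; this costs at most $n^{2\beta_1}\leq n^{1/2}$ queries and, w.h.p., produces a clique $S'\subseteq S$ of size $2\beta_1\log n=\frac{\de-1}{\de}\log n$. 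In round 2, pick $T\subseteq V\setminus S$ with $|T|=\frac{n}{|S'|}$ and query all pairs in $S'\times T$; this costs at most $n$ queries, and the common neighborhood $T'\subseteq T$ of $S'$ is concentrated around $|T|\cdot 2^{-|S'|}=n^{1/\de-o(1)}$.

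Since no pair internal to $T'$ has been queried and oracle answers are independent across pairs, conditional on $T'$ the induced subgraph on $T'$ is a fresh $G(|T'|,\frac{1}{2})$. The per-round budget $n$ is at least $|T'|^{\de}$ by the choice of $\beta_1$, so in rounds 3 and 4 we can run the hypothesized $(\de,2,1+\frac{\de}{2}+\e)$-algorithm on $T'$; w.h.p.\ it returns a clique $T''\subseteq T'$ of size
\[\Big(1+\tfrac{\de}{2}+\e\Big)\log|T'|-o(\log|T'|)=\Big(\tfrac{1}{\de}+\tfrac{1}{2}+\tfrac{\e}{\de}\Big)\log n-o(\log n).\]
The set $S'\cup T''$ is a clique (since $S'$ is a clique, the vertices of $T'$ are common neighbors of $S'$, and $T''\subseteq T'$ is a clique) of total size
\[\Big(\tfrac{\de-1}{\de}+\tfrac{1}{\de}+\tfrac{1}{2}+\tfrac{\e}{\de}\Big)\log n-o(\log n)=\Big(\tfrac{3}{2}+\tfrac{\e}{\de}\Big)\log n-o(\log n),\]
so taking any $\e'<\frac{\e}{\de}$ yields $\al_*(1,4)\geq\frac{3}{2}+\e'$.

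The main obstacle is ensuring the overall success probability is at least $\frac{1}{2}$: naively, the conjunction of the two $(1-o(1))$-events used for rounds 1--2 and the $\frac{1}{2}$-success of the hypothesized algorithm on $T'$ only gives $\frac{1}{2}-o(1)$. A standard amplification fixes this: partition $T'$ into $r=O(1)$ disjoint equal-size subsets and run the hypothesized algorithm in parallel on each. Since $\de>1$, the combined query cost is $r\cdot(|T'|/r)^{\de}=r^{1-\de}\cdot n^{1-o(1)}<n$ per round, the probability that all $r$ copies fail is at most $2^{-r}$, and the $O(1)$ loss in $\log|T'|$ caused by partitioning is absorbed into the $o(\log n)$ slack. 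Choosing $r$ large enough boosts the overall success probability above $\frac{1}{2}$ for sufficiently large $n$.
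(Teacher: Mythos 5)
Your proof matches the paper's construction: Round 1 builds a clique $S'$ of size $\frac{\de-1}{\de}\log n$, Round 2 takes its common neighborhood $T'$ of size $\approx n^{1/\de}$, and Rounds 3--4 run the hypothesized two-round algorithm on $T'$ with $|T'|^\de\approx n$ queries per round, yielding the same final clique size (the paper picks $|S|=\sqrt{n}$ while you size $S$ so that $S'$ is its maximum clique, but this is cosmetic). The amplification paragraph is an addition that addresses a real subtlety the paper elides: the bare composition only gives success probability $(1-o(1))\cdot\frac{1}{2}$, just short of the $\frac{1}{2}$ that Definition \ref{def_rest_1} formally demands, and your fix of running $O(1)$ independent copies of $\mathcal{A}^2$ on disjoint pieces of $T'$ (affordable since $\de>1$, and costing only $O(1)$ in clique size, absorbed by the $o(\log n)$ slack) restores a success probability strictly above $\frac{1}{2}$. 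This is a valid refinement rather than a different route.
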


\begin{proof}
Suppose that $\al_*(\de,2)\geq 1+\frac{\de}{2}+\epsilon$, and let $\mathcal{A}^2$ be a $(\de,2,1+\frac{\de}{2}+\epsilon)$-algorithm. We present a $(1,4,\frac{3}{2}+\epsilon')$-algorithm $\mathcal{A}^4$. In round 1, $\mathcal{A}^4$ queries all pairs in a subset $S$ of size $\sqrt{n}$, and finds a clique $S'$ of size $\beta \log n$, for $\beta = 1 - \frac{1}{\delta} < 1$. In round 2, $\mathcal{A}^4$ queries all pairs between $S'$ and a disjoint set $T$ of size $\frac{n}{|S'|}$, and finds the set of mutual neighbors $T'$. $T'$ is of expected size $n^{1-\beta}$ (neglecting low order terms). Then, in rounds 3 and 4, algorithm {$\mathcal{A}^4$ uses algorithm  $\mathcal{A}^2$ on the subgraph $T'\subset G$} and finds a clique $T''\subset T'$. Specifically, denoting $N=|T'|=n^{1-\beta}$ and 
noting that $\de=\frac{1}{1-\beta}$, we have that the number of queries in rounds 3 and 4 is $N^\de=n$. Hence, using $\mathcal{A}^2$ on $T'$ with $N^\de$ queries is feasible. By the assumption, the clique $T''$ found this way is of size at least 
\[\Big(1+\frac{\de}{2}+\epsilon\Big)\log N=\Big(1+\frac{1}{2(1-\beta)}+\epsilon\Big)(1-\beta)\log n=\Big(\frac{3}{2}-\beta+(1-\beta)\epsilon\Big)\log n\]
Algorithm $\mathcal{A}^4$ outputs the clique $K\coloneqq S'\cup T''$, which is of size at least $(\frac{3}{2}+\epsilon')\log n$, where $\epsilon'\coloneqq (1-\beta)\epsilon$.
Hence $\al_*(\de,2)\geq 1+\frac{\de}{2}+\epsilon$ implies $\al_*(1,4)\geq \frac{3}{2}+\epsilon'$.
\end{proof}


In the following theorem we improve the upper bound $\al_*(\de,2)\leq\frac{4}{3}\de$ when $\frac{6}{5}<\de<2$.
\begin{theorem}
\label{two_rounds_big_delta_full_thm}
For every $\frac{6}{5}<\de<2$:
\[\al_*(\de,2)\leq 1+\sqrt{(\de-1)(3-\de)}\]
\end{theorem}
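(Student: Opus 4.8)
The plan is to refine the analysis of Theorem~\ref{two_rounds_small_delta_thm} by keeping track of more structure in the queries graph $Q_1$ than just the maximum matching of $K_1$. The key observation is that when $\de>\frac{6}{5}$, the effective number of vertices available to the algorithm is $n$ (not $n^{\de/6}$), and the bottleneck is no longer purely the matching size in $K_1$; it is the interplay between how many vertices of $K$ are ``covered'' in round~$1$ versus how cheaply the round-$1$ queries among $K$ and between $K$ and the rest of $G$ can be purchased. Concretely, I would take the same three-type partition of the edges of $K$ as in the earlier proof (round-$1$ edges inside $V_{M_1}$, round-$2$ edges, and ``free'' round-$1$ edges between $V_{M_1}$ and $V_{M_2}$), but now I would also track how the round-$1$ queries are distributed: how many of the $k$ vertices of $K$ receive round-$1$ queries at all, and split $V_{M_1}$ or its complement accordingly. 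The template should include $m_1$ (matching size in $K_1$), the $e_i$ parameters, and possibly a parameter for the number of vertices of $K$ that are endpoints of some round-$1$ query to a vertex outside $K$ or inside $V_{M_2}$.

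The first concrete step is to set up the counting bound for $1$-candidates more carefully. A vertex of $V_{M_2}$ (the part matched only in round~$2$) can still be incident to free edges queried in round~$1$; the number of such free edges is bounded by Lemma~\ref{matching_edges_half}, giving $e_1+e_2\geq 1-4m_1(\frac12-m_1)$ as before. The improvement must come from the fact that choosing $V_{M_2}$ is not free: the algorithm spends round-$1$ queries to ``prepare'' the vertices of $V_{M_2}$ (otherwise they are truly arbitrary vertices of $V$ and contribute a factor $n^{(1-2m_1)k}$, which is exactly what gives the $\frac43\de$ bound). So the second step is to argue a trade-off: either the algorithm queries few round-$1$ edges incident to $V_{M_2}$, in which case it cannot have discovered enough about those vertices and the probability computation on round~$2$ does not improve — OR it queries many such edges, consuming a large share of its $n^\de$ round-$1$ budget, which then bounds $m_1$ and the number of $1$-candidates from above. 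Quantifying this trade-off, optimizing over the resulting two-variable region, and checking that the optimum is $1+\sqrt{(\de-1)(3-\de)}$ is the heart of the argument; I expect the optimum to be attained where the two constraints (round-$1$ budget for preparing $V_{M_2}$, and round-$1$ budget for the matching plus free edges) are simultaneously tight.

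The third step is to feed the resulting collection of bounds of the form $\mathbb{P}(\mathcal{A}_p\text{ succeeds})\leq n^{f_1^j(p)k}\cdot 2^{-f_2^j(p)\kct}$ into Lemma~\ref{cons_lemma} (with a polylogarithmic signature set), reducing the theorem to the optimization $\max_p\min_j \frac{2f_1^j(p)}{f_2^j(p)}$; possibly Lemma~\ref{max_indep_lemma} is needed to discard signatures where the round-$2$ clique would have to be too large, i.e. where $I_1>\de\log n$. I would then verify algebraically that this max–min equals $1+\sqrt{(\de-1)(3-\de)}$ for $\frac65<\de<2$, and sanity-check the endpoints: at $\de=\frac65$ it should reduce to $\frac43\cdot\frac65=\frac85$, matching $1+\sqrt{\frac15\cdot\frac95}=1+\frac35=\frac85$ — consistent — and at $\de\to 2$ it tends to $1+\sqrt{1}=2$, also correct.

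The main obstacle I anticipate is isolating the right additional template parameter so that the ``preparing $V_{M_2}$'' cost is captured by a clean linear constraint on the exponents. The subtlety, already flagged in the earlier proof, is that round-$1$ answers let the algorithm choose round-$2$ queries \emph{adaptively}, so edges between $V_{M_1}$ and $V_{M_2}$ are genuinely free; the delicate point is to show that, despite this adaptivity, the algorithm cannot simultaneously make $m_1$ large, keep the free-edge count high, and leave $V_{M_2}$ ``cheap'' to locate — the union bound over round-$2$ outcomes has to be organized so that the $n^{(1-2m_1)k}$ factor for choosing $V_{M_2}$ is paid for, in the relevant regime, by a corresponding reduction elsewhere. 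Getting the bookkeeping of which queries are ``charged'' to round~$1$ versus round~$2$ exactly right, without double-counting the free edges, is where the proof will need the most care.
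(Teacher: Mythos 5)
Your high-level framework is right — keep the bound $\al_*(\de,2)\leq\max_{m_1}\frac{\de}{1-4m_1(\frac12-m_1)}$ from Theorem~\ref{two_rounds_small_delta_thm}, produce a second bound as a function of $m_1$, feed both into Lemma~\ref{cons_lemma}, and locate the optimum where the two bounds cross (your endpoint checks at $\de=\frac65$ and $\de\to2$ are correct). But the second bound, which is the entire new content of the theorem, is never actually derived, and the mechanism you propose for it would not produce one. You argue that locating $V_{M_2}$ must be "paid for" out of the $n^\de$ round-$1$ budget via queries incident to $V_{M_2}$. That cannot work as stated: the number of round-$1$ queries incident to any single candidate clique is $O(k^2)=O(\log^2 n)$, a vanishing fraction of $n^\de$, so no regime exists in which "preparing $V_{M_2}$" consumes a large share of the budget; and the union-bound framework of Lemma~\ref{cons_lemma} has no slot for a global budget argument of that kind. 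You have also slightly misattributed the $\frac43\de$ bound: it comes from counting \emph{both} $V_{M_1}$ and $V_{M_2}$ via matchings in $Q_1$ and $Q_2$ (giving $n^{\de k/2}$), not from an $n^{(1-2m_1)k}$ factor.

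What the paper does instead is much simpler and requires no new template parameters: it invokes Lemma~\ref{gen_lemma} with $l=2$ at round $i=1$. There, $V(K)\setminus V_{M_1}$ is allowed to be an \emph{arbitrary} set of $(1-2m_1)k$ vertices of $V$ — so the candidate count inflates to $n^{(\de m_1+1-2m_1)k}$ — but in exchange the candidate family is fixed before any oracle answers, so there are no free edges and all $\kct$ edges of $K$ are charged to probability. This yields $\mathbb{P}(\mathcal{A}_p\text{ succeeds})\leq n^{(1-(2-\de)m_1)k}2^{-\kct}$, i.e.\ $\al_1^\de(m_1)=2-(4-2\de)m_1$, which is decreasing in $m_1$ and in particular equals $1+\frac{\de}2<\frac43\de$ at $m_1=\frac14$. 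The trade-off is therefore not "budget for preparing $V_{M_2}$ versus budget for the matching," but "smaller candidate count with free edges" versus "larger candidate count with every edge paid." Solving $\al_1^\de(m_1)=\al_2^\de(m_1)$ (a cubic in $m_1$ with a unique root in $(0,\frac12)$) gives the value $1+\sqrt{(\de-1)(3-\de)}$. Without identifying this second bound — or some concrete substitute for it — your proposal does not close.
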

\begin{proof}
    In the proof of Theorem \ref{two_rounds_small_delta_thm} we prove that for every $1\leq\de<2$:
    \[\al_*(\de,2)\leq \max_{m_1}\frac{\de}{1-4m_1(\frac{1}{2}-m_1)}\]
    Denote $\al_2^{\de}(m_1)\coloneqq \frac{\de}{1-4m_1(\frac{1}{2}-m_1)}$.

    {We combine this upper bound with Lemma \ref{gen_lemma}. For convenience, we replace the notation of Lemma \ref{gen_lemma} with the notation defined in Theorem \ref{two_rounds_small_delta_thm}.
    
    \begin{align*}
        \al_*(\de,2)&\leq  \max_{p\in\mathcal{S}}\min\Big\{{2-(4-2\de)m_1}, \frac{2-(4-2\de)(m_1+m_2)}{e_{2}}\Big\}=\\
        &=\max_{p\in\mathcal{S}}\min\Big\{{2-(4-2\de)m_1}, \frac{\de}{e_2}\Big\}
    \end{align*}
    
    Denote $\al_1^{\de}(m_1)\coloneqq 2-(4-2\de)m_1$. We combine both upper bounds by applying Lemma \ref{cons_lemma} on the corresponding upper bounds on $\mathbb{P}(\mathcal{A}_p \text{ succeeds})$. The expression $\frac{\de}{e_2}$ imposes a weaker bound than the one proved in Theorem \ref{two_rounds_small_delta_thm}, so we omit it and get the following upper bound:
    
    \[\al_*(\de,2)\leq \max_{m_1}\min\{ \al_1^{\de}(m_1),\al_2^{\de}(m_1)\}\]
    
    }
    The maximum over $m_1\in[0,\frac{1}{2}]$ may be obtained either on the boundary ($m_1\in\{0,\frac{1}{2}\}$), on a local maximum of $\al_1^{\de}$ or $\al_2^{\de}$, or when $\al_1^{\de}(m_1)=\al_2^{\de}(m_1)$. For $m_1\in\{0,\frac{1}{2}\}$ we have $\min\{ \al_1^{\de}(m_1),\al_2^{\de}(m_1)\}=\de$ which is not maximal. The only local maximum is in $\al_2^{\de}$ when $m_1=\frac{1}{4}$. For this $m_1$ we have $\al_2^{\de}(\frac{1}{4})=\frac{4}{3}\de$ and $\al_1^{\de}(\frac{1}{4})=1+\frac{\de}{2}$. Therefore, for any $\frac{6}{5}<\de<2$ we have $\al_2^{\de}(\frac{1}{4})>\al_1^{\de}(\frac{1}{4})$, so the maximum of $\min\{ \al_1^{\de}(m_1),\al_2^{\de}(m_1)\}$ cannot be obtained in $m_1=\frac{1}{4}$. Hence, the maximum is obtained when $\al_1^{\de}(m_1)=\al_2^{\de}(m_1)$. Solving this equation we get:
    \begin{gather*}
        \frac{\de}{1-4m_1(\frac{1}{2}-m_1)}=2-(4-2\de)m_1 \\
        \de = 2-4m_1+8m_1^2-(4-2\de)m_1+(8-4\de)m_1^2-(16-8\de)m_1^3 \\
        0=-(16-8\de)m_1^3+(16-4\de)m_1^2-(8-2\de)m_1+2-\de
    \end{gather*}
    The only solution of the cubic equation satisfying $m_1\in(0,\frac{1}{2})$ is:
    \[m_1=\frac{-\sq{(\de-1)(3-\de)}+1}{4-2\de}\]
    For this value of $m_1$ we have $\al_1^{\de}(m_1)=\al_2^{\de}(m_1)=1+\sq{(\de-1)(3-\de)}$, hence:
    \[\al_*(\de,2)\leq 1+\sqrt{(\de-1)(3-\de)}\]
\end{proof}
\subsubsection{Tight bound for two rounds with large $\de$, restricted first round}
\label{sec_two_rounds_large_delta_restricted}

{The upper bound proved in Theorem \ref{two_rounds_big_delta_full_thm} may not be tight, as the current best lower bound for $\frac{6}{5}<\de<2$ is $\al_*(\de,2)\geq 1+\frac{\de}{2}$. This lower bound is achieved by Algorithm 3, which finds a clique of size $(1+\frac{\de}{2})\log n$. The gap between the bounds might be due to the fact that Algorithm 3 queries less than $n^\de$ pairs in round 1, while in Theorem \ref{two_rounds_big_delta_full_thm} we prove an upper bound for algorithms that can query $n^\de$ pairs in round 1. Therefore, we raise the question whether Algorithm 3 is optimal among all algorithms which query the same number of queries per round as Algorithm 3 does (which means the number of round 1 queries in those algorithms is restricted). In round $1$, Algorithm 3 queries all pairs between a set of size $n^{\frac{1}{2}-\frac{\de}{4}}$ and the rest of the graph. Hence, the algorithm makes roughly $n^{\frac{3}{2}-\frac{\de}{4}}$ queries in round $1$. In round 2, the algorithm queries all pairs within a set of size $n^{\de/2}$, which is roughly $n^\de$ queries. For $\frac{6}{5}<\de<2$ Denote $\de_1=\frac{3}{2}-\frac{\de}{4}$. We will upper bound the maximum size of a clique that is likely to be found using an algorithm that queries $n^{\de_1}$ pairs in round 1, and $n^{\de}$ pairs in round 2. This is formulated in the following definitions:}

\begin{definition}
\label{def_rest_dif_1}
     Let $l\in\mathbb{N}$, $0<\de_i<2$ for every $1\leq i\leq l$, and let $\al\leq 2$. A deterministic algorithm $\mathcal{A}$, which makes $l$ adaptive round queries with at most $n^{\de_i}$ queries in each round, will be called a $((\de_1,\dots,\de_l),l,\al)$-algorithm
     {if for any sufficiently large $n$, with probability at least $\frac{1}{2}$, $G'$ contains a clique of size $k\ge \al\log  - o(\log n)$.} 
\end{definition}
\begin{definition}
\label{def_rest_dif_2}
    Let $l\in\mathbb{N}$, and let $0<\de_i<2$ for every $1\leq i\leq l$. $\al_{*}((\de_1,\dots,\de_l),l)$ is defined as the supremum over $\al$ such that there exists a $((\de_1,\dots,\de_l),l,\al)$-algorithm.
\end{definition}

{Due to the new definitions, some adjustments are needed in previously proved lemmas.} The proof of Lemma \ref{cons_lemma} depends on $\de$ only implicitly, so the same proof holds also using definitions \ref{def_rest_dif_1} and \ref{def_rest_dif_2}. Lemma \ref{gen_lemma} requires some adjustments.

\begin{lemma}
\label{gen_lemma2}
Using the same notation of $m_i,w_i$ as in Lemma \ref{gen_lemma}, for every $l\in\mathbb{N}$, and $0<\de_i< 2$:
\[\al_*((\de_1,\dots,\de_l),l)\leq \max_{p\in\mathcal{S}}\min_{1\leq i\leq l}\frac{2-(4-2 \max_{j\leq i}{\de_j})m_i}{w_{i-1}}\]
\end{lemma}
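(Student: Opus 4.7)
The plan is to mirror the proof of Lemma \ref{gen_lemma} almost verbatim, and identify the single place where the non-uniform query budgets $\de_1,\ldots,\de_l$ enter the calculation. Concretely, I would fix a round $i \le l$ and a signature $p \in \mathcal{S}$, and repeat the definition of an \emph{$i$-candidate} $a$: a set of $k$ vertices such that $Q_{\le i}$ restricted to $a$ has a maximum matching of size $m_i k$, and $Q_{\le i-1}$ restricted to $a$ has $(1-w_{i-1}){k \choose 2}$ edges. As in the original, I partition $a = b \sqcup c$ where $b$ is the set of $2 m_i k$ vertices covered by a chosen maximum matching and $c$ consists of the remaining $(1-2m_i)k$ vertices, which may be arbitrary.

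The one step that changes is the count of candidates. The matching of $m_i k$ edges must lie inside $E(Q_{\le i}) = \bigcup_{j \le i} E(Q_j)$, so $|E(Q_{\le i})| \le \sum_{j \le i} n^{\de_j}$. Since $i \le l$ is a constant and $\max_{j \le i} \de_j$ dominates, this is $O(n^{\max_{j \le i} \de_j})$. Hence the number of matchings, and therefore the number of choices for $b$, is at most $\binom{O(n^{\max_{j \le i} \de_j})}{m_i k} \le n^{\max_{j \le i} \de_j \cdot m_i k + o(\log n)}$. The choice of $c$ contributes at most $\binom{n}{(1-2m_i)k} \le n^{(1-2m_i)k}$, independent of $i$. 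Multiplying, the number of $i$-candidates is at most $n^{(1 - (2 - \max_{j \le i} \de_j) m_i) k}$.

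The feasibility step is unchanged from Lemma \ref{gen_lemma}: the ${w_{i-1}}{k \choose 2}$ pairs inside an $i$-candidate not yet queried by round $i-1$ are answered independently with probability $\tfrac{1}{2}$, so the probability that all of them are positive is $2^{-w_{i-1}{k \choose 2}}$, and the returned clique $K$ must be $i$-feasible for every $i$. Thus for each $i$,
\[
\mathbb{P}(\mathcal{A}_p \text{ succeeds}) \le n^{(1 - (2 - \max_{j \le i} \de_j) m_i) k} \cdot 2^{-w_{i-1}{k \choose 2}},
\]
which is an upper bound of exactly the form required by Lemma \ref{cons_lemma}, with $f_1^i(p) = 1 - (2 - \max_{j \le i} \de_j) m_i$ and $f_2^i(p) = w_{i-1}$. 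Since $|\mathcal{S}| = n^{o(\log n)}$ and Lemma \ref{cons_lemma} applies equally in the non-uniform setting (its proof never used that the per-round budget was constant), applying it across $i = 1, \ldots, l$ yields
\[
\al_*((\de_1,\ldots,\de_l), l) \le \max_{p \in \mathcal{S}} \min_{1 \le i \le l} \frac{2 - (4 - 2 \max_{j \le i} \de_j) m_i}{w_{i-1}},
\]
as claimed.

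There is really no serious obstacle here: the only subtlety is the bookkeeping observation that the relevant budget at round $i$ is $\max_{j \le i} \de_j$ rather than $\de_i$ itself, since the matching may use queries from any earlier round. Everything else, including the remark that $w_0 = 1$ (so the $i=1$ bound reads $2 - (4 - 2\de_1) m_1$) and the fact that $l$ is a constant absorbed into the $n^{o(\log n)}$ factor, carries over without modification.
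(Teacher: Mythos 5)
Your proof is correct and is essentially identical to the paper's: both note that the only change from Lemma \ref{gen_lemma} is that the matching of size $m_i k$ is now drawn from $\sum_{j\le i} n^{\de_j} \le i\, n^{\max_{j\le i}\de_j}$ queried pairs, giving the candidate-count exponent $\max_{j\le i}\de_j$ in place of $\de$, while the choice of the remaining $(1-2m_i)k$ vertices and the feasibility probability $2^{-w_{i-1}\binom{k}{2}}$ are unchanged, after which Lemma \ref{cons_lemma} applies. Your explicit remark that the relevant budget is $\max_{j\le i}\de_j$ (not $\de_i$) because the matching may reuse edges from earlier rounds is exactly the point the paper is making.
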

\begin{proof}
The only difference between this case and the previous one is that the matching of size $m_ik$ is chosen out of $\sum_{j\leq i}n^{\de_j}\leq i\cdot n^{\max_{j\leq i}\de_j}$ queries, and not $in^{\de}$. The set of vertices not covered by the matching has the same number of candidates and the probability analysis is the same. Hence, we get $\mathbb{P}(\mathcal{A}_p\text{ succeeds})\leq n^{(1-(2- \max_{j\leq i}{\de_j})m_i)k}\cdot 2^{-w_{i-1}\kct}$. We get the desired bound by Lemma \ref{cons_lemma}.

\end{proof}

In the following theorem we show that Algorithm 3 is optimal when the first round is restricted.
\begin{theorem}
\label{two_rounds_big_delta_restricted_thm}
For every $\frac{6}{5}<\de<2$:
\[\al_*\Big(\Big(\frac{3}{2}-\frac{\de}{4},\de\Big),2\Big)\leq 1+\frac{\de}{2}\]
\end{theorem}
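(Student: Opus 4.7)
The plan is to mirror the proof of Theorem \ref{two_rounds_big_delta_full_thm}, with $\de$ replaced by $\de_1 := \frac{3}{2} - \frac{\de}{4}$ in parts of the calculation that track the first-round query budget. Two upper bounds on $\mathbb{P}(\mathcal{A}_p\text{ succeeds})$ will be derived and combined via Lemma \ref{cons_lemma}. First, applying Lemma \ref{gen_lemma2} with $i = 1$ (so that $\max_{j\leq 1}\de_j = \de_1$) yields the bound
\[\al_1(m_1) := 2 - (4 - 2\de_1) m_1 = 2 - \Bigl(1 + \tfrac{\de}{2}\Bigr) m_1.\]

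Second, I would repeat the candidate/feasibility counting of Theorem \ref{two_rounds_small_delta_thm} but with the distinct query budgets $q_1 = n^{\de_1}$ and $q_2 = n^\de$. The counting bound becomes $\binom{n^{\de_1}}{m_1 k}\binom{n^\de}{m_2 k}\cdot 2^{-(e_1 + e_2)\kct}$, and Lemma \ref{matching_edges_half} still gives $e_1 + e_2 \ge 1 - 4m_1 m_2$. Applying Lemma \ref{cons_lemma} and substituting $m_2 = \tfrac{1}{2} - m_1$ produces
\[\al_2(m_1) := \frac{2(\de_1 m_1 + \de m_2)}{1 - 4 m_1 m_2} = \frac{\de + (3 - \tfrac{5\de}{2}) m_1}{1 - 2m_1 + 4m_1^2}.\]
Combining both bounds via Lemma \ref{cons_lemma} gives $\al_*((\de_1, \de), 2) \leq \max_{m_1 \in [0, 1/2]}\min\{\al_1(m_1), \al_2(m_1)\}$.

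The core analytical step is to show that this max-min equals $1 + \tfrac{\de}{2}$. Set $m_1^\star := \tfrac{2-\de}{2+\de}$, which lies in $(0, \tfrac{1}{4})$ for $\tfrac{6}{5} < \de < 2$ and matches the parameters of Algorithm 3. A direct substitution confirms $\al_1(m_1^\star) = \al_2(m_1^\star) = 1 + \tfrac{\de}{2}$. For $m_1 \ge m_1^\star$ the linearly decreasing $\al_1$ satisfies $\al_1(m_1) \leq 1 + \tfrac{\de}{2}$, so the min is no larger. For $m_1 \leq m_1^\star$, clearing the denominator in $\al_2(m_1) \leq 1 + \tfrac{\de}{2}$ reduces it to the quadratic inequality
\[(4 + 2\de) m_1^2 - \Bigl(5 - \tfrac{3\de}{2}\Bigr) m_1 + \Bigl(1 - \tfrac{\de}{2}\Bigr) \ge 0,\]
which opens upward and (by Vieta, or by direct substitution) has roots exactly $m_1^\star$ and $\tfrac{1}{4}$. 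Hence it is nonnegative on $[0, m_1^\star]$, giving $\al_2(m_1) \le 1 + \tfrac{\de}{2}$ throughout this interval.

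The main obstacle is the algebraic identification of the two roots $m_1^\star$ and $\tfrac{1}{4}$ of the relevant quadratic, which is what makes the upper bound match the lower bound from Algorithm 3. A subtle point is that $\al_2$ actually exceeds $1 + \tfrac{\de}{2}$ on the open interval $(m_1^\star, \tfrac{1}{4})$, so both upper bounds are genuinely needed: $\al_1$ becomes the binding constraint precisely on that interval, ensuring the min remains at most $1 + \tfrac{\de}{2}$ throughout $[0, 1/2]$.
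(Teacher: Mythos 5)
Your proof is correct and mirrors the paper's approach: the same two bounds $\al_1$ and $\al_2$ are derived (from Lemma \ref{gen_lemma2} at $i=1$ and from the restricted version of Theorem \ref{two_rounds_small_delta_thm}), combined via Lemma \ref{cons_lemma}, and split at the threshold $m_1^\star = \frac{2-\de}{2+\de}$. The one place where you diverge is in showing $\al_2(m_1) \le 1 + \frac{\de}{2}$ on $[0, m_1^\star]$: the paper (Appendix \ref{appendix_two_rounds_calcs}) computes the derivative of $\al_2$ and argues it has no critical points in $(0, m_1^\star)$, whereas you clear the (everywhere-positive) denominator and factor the resulting upward-opening quadratic, identifying its roots as exactly $m_1^\star$ and $\frac{1}{4}$ by Vieta; since $m_1^\star < \frac{1}{4}$ for $\de > \frac{6}{5}$, nonnegativity on $[0, m_1^\star]$ follows immediately. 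Your finish is slightly cleaner and avoids the square-root manipulations in the paper's appendix, and your remark that $\al_2 > 1 + \frac{\de}{2}$ strictly on $(m_1^\star, \frac{1}{4})$ correctly explains why both bounds are genuinely necessary.
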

\begin{proof}
Denote $\de_1=\frac{3}{2}-\frac{\de}{4}$. {Similarly to}
the proof of Theorem \ref{two_rounds_big_delta_full_thm}, we will use two upper bounds. The first bound is by Lemma \ref{gen_lemma2}, and it will be useful when $m_1\geq \frac{1-\frac{\de}{2}}{1+\frac{\de}{2}}$. The second bound will be a variation of the bound proved in Theorem \ref{two_rounds_small_delta_thm} and will be useful when $m_1\leq \frac{1-\frac{\de}{2}}{1+\frac{\de}{2}}$. Note that this threshold is not coincidental, as it is the value of $m_1$ in Algorithm 3. {By Lemma \ref{gen_lemma2} (using the notation of Theorem \ref{two_rounds_small_delta_thm}) we have that: 
\begin{align*}
    \al_*((\de_1,\de),2)&\leq \max_{p\in\mathcal{S}}\min\Big\{2-(4- 2{\de_1})m_1, \frac{2-(4-2\de)(m_1+m_2)}{e_2}\Big\}=\\
    &=\max_{p\in\mathcal{S}}\min\Big\{2-(4- 2{\de_1})m_1, \frac{\de}{e_2}\Big\}
\end{align*}
\[\] 
}

Denote $\al_1^{\de}(m_1)\coloneqq 2-(4-2\de_1)m_1$. For the second bound, we will use a variation of the proof of Theorem \ref{two_rounds_small_delta_thm}. The only required adjustment to that proof is that the number of $1$-candidate sets for $V_{M_1}$ is at most ${{n^{\de_1}}\choose{m_1k}}\leq n^{\de_1m_1k}$. The rest of the proof does not depend on the number of queries in round $1$, and we get the following bound for this case:
\begin{align*}
    \al_*((\de_1,\de),2)&\leq\max_{m_1}\frac{2(\de_1m_1+\de m_2)}{1-4m_1m_2}=
    \max_{m_1}\frac{2((\frac{3}{2}-\frac{\de}{4})m_1+\de (\frac{1}{2}-m_1))}{1-4m_1(\frac{1}{2}-m_1)}=\\
    &=\max_{m_1}\frac{\de-(\frac{5}{2}\de-3)m_1}{4m_1^2-2m_1+1}
\end{align*}

Denote $\al_2^{\de}(m_1)=\frac{\de-(\frac{5}{2}\de-3)m_1}{4m_1^2-2m_1+1}$. As we did in the proof of Theorem \ref{two_rounds_big_delta_full_thm}, we combine both upper bounds using Lemma \ref{cons_lemma}, {while omitting the expression $\frac{\de}{e_2}$}. We also partition the domain of $m_1$ by the threshold $m_1^*\coloneqq \frac{1-\frac{\de}{2}}{1+\frac{\de}{2}}$, and use a different bound for each case:
\[\al_*((\de_1,\de),2)\leq\max_{m_1}\min\{\al_1^{\de}(m_1),\al_2^{\de}(m_1)\}\leq \max\{\max_{m_1\geq m_1^*}\al_1^{\de}(m_1),\max_{m_1\leq m_1^*}\al_2^{\de}(m_1)\}\]

We start with $\max_{m_1\geq m_1^*}\al_1^{\de}(m_1)$. The expression $\al_1^{\de}(m_1)$ is maximized when $m_1$ is minimal, i.e., for $m_1=m_1^*$. We get:
\begin{align*}
    \max_{m_1\geq m_1^*}\al_1^{\de}(m_1)&= 2-(4- 2{\de_1})\frac{1-\frac{\de}{2}}{1+\frac{\de}{2}}=
    2-\Big(4- \Big(3-\frac{\de}{2}\Big)\Big)\frac{1-\frac{\de}{2}}{1+\frac{\de}{2}}=\\
    &= 2-\Big(1+\frac{\de}{2}\Big)\frac{1-\frac{\de}{2}}{1+\frac{\de}{2}}=
    1+\frac{\de}{2}
\end{align*}

Regarding $\max_{m_1\leq m_1^*}\al_2^{\de}(m_1)$, we prove in Appendix \ref{appendix_two_rounds_calcs} that for $\frac{6}{5}<\de<2$ and $m_1\leq m_1^*$ the maximum of $\al_2^{\de}(m_1)$ is obtained when $m_1= m_1^*$, and in that case $\al_2^{\de}(m_1)=1+\frac{\de}{2}$. Therefore $\max_{m_1\leq m_1^*}\al_2^{\de}(m_1)=1+\frac{\de}{2}$.

We get:
\[\al_*\Big(\Big(\frac{3}{2}-\frac{\de}{4},\de\Big),2\Big)\leq 1+\frac{\de}{2}\]
\end{proof}

Since Algorithm 3 is a $((\de_1,\de),2,1+\frac{\de}{2})$-algorithm for $\de_1=\frac{3}{2}-\frac{\de}{4}$, we have:
\begin{corollary}
For every  $\frac{6}{5}< \de<2$:

\[\al_*\Big(\Big(\frac{3}{2}-\frac{\de}{4},\de\Big),2\Big)= 1+\frac{\de}{2}\]

\end{corollary}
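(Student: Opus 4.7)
The plan is short because the corollary packages a two-sided bound whose ingredients are both already available. The upper bound $\al_*((\frac{3}{2}-\frac{\de}{4},\de),2)\leq 1+\frac{\de}{2}$ is exactly the content of Theorem \ref{two_rounds_big_delta_restricted_thm}, whose proof has just been completed. So the only thing I need to do is supply the matching lower bound $\al_*((\frac{3}{2}-\frac{\de}{4},\de),2)\geq 1+\frac{\de}{2}$ by exhibiting a suitable two-round algorithm.

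The natural candidate, as the sentence preceding the corollary already advertises, is Algorithm 3 from Section \ref{algos_sec}. The key step is simply a per-round query-count audit against the constraints $(\de_1,\de_2)=(\frac{3}{2}-\frac{\de}{4},\de)$. In round 1, Algorithm 3 makes $\binom{|S|}{2}+|S|\cdot|T|$ queries, where $|S|=n^{(1-\de/2)/2}$ and $|T|\leq n$; the first term is $\Theta(n^{1-\de/2})$ and the second is $\Theta(n^{3/2-\de/4})$, and since $3/2-\de/4>1-\de/2$ for every $\de>0$, the round-1 total is $O(n^{\de_1})$ with $\de_1=\frac{3}{2}-\frac{\de}{4}$. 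In round 2, Algorithm 3 queries all pairs inside $T'$, whose expected size is $n^{\de/2}$, giving $\Theta(n^\de)$ queries, matching $\de_2=\de$. The constraint $\de\geq\frac{6}{5}$ is exactly what ensures $\de_1<\de$, so the restriction is nontrivial.

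The success analysis of Algorithm 3, already given in Section \ref{algos_sec}, yields w.h.p.\ a clique $S'\subset S$ of size $(1-\frac{\de}{2})\log n$ and, conditional on that, w.h.p.\ a clique $T''\subset T'$ of size $\de\log n$, so $K=S'\cup T''$ has size $(1+\frac{\de}{2})\log n$ with probability bounded away from $0$ (in fact w.h.p., hence certainly at least $\frac{1}{2}$ for large $n$), exhibiting a $((\de_1,\de),2,1+\frac{\de}{2})$-algorithm in the sense of Definition \ref{def_rest_dif_1}. Combining this with the upper bound from Theorem \ref{two_rounds_big_delta_restricted_thm} gives equality.

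I do not anticipate any real obstacle: the only arithmetic content is the one-line comparison $\frac{3}{2}-\frac{\de}{4}>1-\frac{\de}{2}$ (which in fact holds throughout $\de\in(0,2)$, and in particular on the range $(\frac{6}{5},2)$ of interest), and the only probabilistic content is re-using the standard maximum-clique concentration statements already invoked in the analysis of Algorithm 3. The whole proof should fit in a couple of lines.
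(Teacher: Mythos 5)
Your proof is correct and is precisely the argument the paper intends: the upper bound is Theorem \ref{two_rounds_big_delta_restricted_thm} and the lower bound is Algorithm~3, whose per-round query counts ($O(n^{3/2-\de/4})$ in round 1 and $O(n^{\de})$ in round 2) meet the restriction $(\de_1,\de_2)=(\tfrac{3}{2}-\tfrac{\de}{4},\de)$. The paper simply states this in the sentence preceding the corollary, while you spell out the per-round audit explicitly; the content is the same.
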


\subsection{Three rounds}
\label{sec_three_rounds}
\subsubsection{Free edges and vertices}
The concept of free edges, introduced in the proof of Theorem \ref{two_rounds_small_delta_thm}, may be extended to any number of rounds $l$. Given a prefect matching $M$ of the returned clique $K$, we partition its edges according to the round in which they were queried, i.e., $M=\bigcup_{i\leq l}M_i$. We denote $|M_i|=m_ik$, and include the parameters $m_1,\dots,m_l$ in the template. We denote by $V_{M_i}$ the set of vertices covered by $M_i$, and denote $V_M^i\coloneqq\bigcup_{j\leq i}V_{M_j}$. Intuitively, an edge $e$ is free for the algorithm, if the algorithm chooses queries such that many candidates for $K$ will contain $e$, based on previous information that $e$ is positive. Since $V(K)=\bigcup_{i\leq l}V_{M_i}$, and the candidates for the set $V_{M_i}$ are chosen in round $i$, if $e$ has some endpoint $u\in V_{M_i}$ and $e$ was answered positively prior to round $i$, then the algorithm may choose to include $u$ in $V_{M_i}$ based on the information that $e$ is positive. Therefore, the algorithm will not 'pay' for the probability $e$ is positive. This intuition suggests that the free edges are exactly the edges that were queried on some round $j$, and have an endpoint $u\in V_{M_i}$ where $i>j$.

In order to prove strong upper bounds on $\mathbb{P}(\mathcal{A}_p \text{ succeeds})$ we want the probability factor of the bound to be small (i.e., $2^{-f_2(p)\kct}$, see Section \ref{constrained algs}). The free edges will not be included in the $f_2(p)\kct$ edges which account for this probability, so we would like to reduce the number of free edges. One way of doing this is by a careful choice of the matching $M$ (as will be done in the proof of Theorem \ref{thm_three_rounds}). {Yet another way is as follows. 
Consider the matching $M^{i^*}\subseteq M$ which contains only matching edges queried up to round $i^*$ for some $1\leq i^*\leq l$, i.e., $M^{i^*}=\bigcup_{j\leq i^*}M_j$. This defines the set $V_M^{i^*}$ of vertices covered by $M^{i^*}$. In order to reduce the number of free edges, we ignore the template parameters $m_{i^*+1},\dots,m_l$, and allow the rest of the vertices to be any set of vertices, regardless of the rounds in which they participate in queries (we used a similar approach in the proof of Lemma \ref{gen_lemma}). This means that the family of candidates for this set of vertices is chosen deterministicly before any answers are given, thus reducing the number of free edges. We denote this set of vertices by $V_{free}^{i^*}$, and refer to them as 'free' vertices for the algorithm, 
{because they are not restricted to participate in any matching. Hence, the family of their candidates will be larger (we will choose single vertices out of $n$ vertices, and not pairs of vertices out of $n^\de$ edges).} Therefore, as we decrease the value of $i^*$, the number of free edges decreases and the number of free vertices increases, which is a trade-off that we may use in order to prove different upper bounds on $\mathbb{P}(\mathcal{A}_p \text{ succeeds})$.}

Determine some $1\leq i^*\leq l$. Given the partition $V(K)=V_M^{i^*}\cup V_{free}^{i^*}= \bigcup_{i\leq i^*}V_{M_i} \cup V_{free}^{i^*}$, for every $1\leq i \leq i^*$ we denote by $e_i\kct$ the number of edges queried on round $i$ that have both endpoints in $V_M^i\cup V_{free}^{i^*}$. The parameters $e_i$, together with the parameters $m_i$, will be a part of the template. We denote the number of free vertices by $|V_{free}^{i^*}|= v_{free}^{i^*}k$. We have that: 
\[v_{free}^{i^*}=2\sum_{i^*<i\leq l}m_i\] Similar to the proof of Theorem \ref{two_rounds_small_delta_thm}, we will define candidate and feasible sets for each round number $i\leq i^*$, and prove an upper bound on the expected number of feasible sets. The definition will be recursive since candidate sets contain feasible sets and vice versa. 
We start with the definition of $0$-candidate sets which are the candidates for the set of free vertices. A $0$-candidate set $a$ satisfies the following: 
\begin{itemize}
    \item $|a|=v_{free}^{i^*}k$.
\end{itemize}

Next, we define $1$-candidate sets (which is the base case of the recursive definition). A $1$-candidate is a set $a$ which may be partitioned into $a=b\sqcup c$ such that:
\begin{itemize}
    \item $b$ is a $0$-candidate.
    \item $b\cap c=\emptyset$.
    \item $|c|=2m_1k$.
    \item $Q_1$ induced on $c$ has a perfect matching.
    \item $Q_1$ induced on $a$ has $e_1\kct$ edges.
\end{itemize}
$a$ will be called $1$-feasible if:
\begin{itemize}
    \item $a$ is a $1$-candidate.
    \item All $e_1\kct$ queries of $Q_1$ induced on $a$ were positive.
\end{itemize}
For $2\leq i \leq i^*$, a set $a$ will be an $i$-candidate if it may be partitioned into disjoint sets $a=b\sqcup c$ which satisfy the following:
\begin{itemize}
    \item $b$ is $i-1$-feasible.
    \item $b\cap c=\emptyset$.
    \item $|c|=2m_ik$.
    \item $Q_i$ induced on $c$ has a perfect matching.
    \item $Q_i$ induced on $a$ has $e_i\kct$ edges.
\end{itemize}
Note that in a general matching $M$, which is not required to be a maximum matching at any stage, we do not require $c$ to be edge-free prior to round $i$ (as opposed to the definition of a candidate pair in the proof of Theorem \ref{two_rounds_small_delta_thm}). For $2\leq i \leq i^*$, a set $a$ will be $i$-feasible if:
\begin{itemize}
    \item $a$ is an $i$-candidate.
    \item All $e_i\kct$ queries of $Q_i$ induced on $a$ were positive.
\end{itemize}

Denote by $F^i$ the number of $i$-feasible sets. Then we have the following lemma:
\begin{lemma}
\label{free_edges_vers_lemma}
For every $1\leq i^*\leq l$ and $1\leq i\leq i^*$, assuming $q_i=n^{\de_i}$, we have:
\[\E_{1,\dots,i}[F^i]\leq n^{\Big[v_{free}^{i^*}+\sum_{j\leq i}\de_jm_j\Big]k}\cdot 2^{-\sum_{j\leq i}e_j\kct}\]
The subscript indicates the rounds over which the expectation is taken.
\end{lemma}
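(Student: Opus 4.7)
The plan is to prove the bound by induction on $i$, with both the base case $i=1$ and the inductive step following the same two-part recipe: first bound the number of $i$-candidates by a deterministic combinatorial argument, then multiply by the probability that an $i$-candidate is $i$-feasible, exploiting the independence of the oracle's round-$i$ coin flips from everything that happened before.

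For the base case $i=1$, a $1$-candidate is a partition $a = b \sqcup c$. The set $b$ is an arbitrary subset of $V$ of size $v_{free}^{i^*}k$, which accounts for at most $\binom{n}{v_{free}^{i^*}k} \leq n^{v_{free}^{i^*}k}$ choices. The set $c$ of size $2m_1 k$ is determined by specifying a perfect matching of size $m_1 k$ in $Q_1$ restricted to $c$; since this matching is some choice of $m_1 k$ edges out of the $n^{\de_1}$ edges of $Q_1$, there are at most $\binom{n^{\de_1}}{m_1 k} \leq n^{\de_1 m_1 k}$ ways to choose it (and hence $c$). Multiplying, the number of $1$-candidates is at most $n^{(v_{free}^{i^*} + \de_1 m_1)k}$. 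Each $1$-candidate is $1$-feasible precisely when its $e_1\kct$ designated round-$1$ edges are all positive, an event of probability $2^{-e_1 \kct}$. This yields the base case after taking expectation.

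For the inductive step $i \geq 2$, I would condition on all query answers in rounds $1, \dots, i-1$. Under this conditioning, $Q_i$ is determined, and $F^{i-1}$ (the number of $(i-1)$-feasible sets) is a fixed quantity. For each such $b$, an extension to an $i$-candidate amounts to choosing a disjoint set $c$ of size $2m_i k$ together with a perfect matching in $Q_i$ on $c$; the matching is some subset of $m_i k$ edges out of the $n^{\de_i}$ edges of $Q_i$, so the number of extensions per $b$ is at most $n^{\de_i m_i k}$. Crucially, the oracle's answers on round $i$ are independent of prior answers, so conditional on the identity of the $e_i \kct$ round-$i$ edges inside $a$, they are all positive with probability exactly $2^{-e_i \kct}$. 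Taking conditional expectation,
\[
\E_{i}\!\left[F^i \,\middle|\, \text{rounds } 1,\dots,i-1\right] \leq F^{i-1} \cdot n^{\de_i m_i k} \cdot 2^{-e_i \kct},
\]
and combining with the inductive hypothesis via the tower property gives the claimed bound.

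The main subtlety, though routine, is the separation of randomness across rounds. When bounding candidate extensions in round $i$ we must use only the \emph{deterministic} cardinality $|E(Q_i)| = n^{\de_i}$ (the identity of these edges is a random variable measurable with respect to rounds $1,\dots,i-1$), and then separately invoke the independence of round-$i$ coin flips from that history to obtain the positivity factor $2^{-e_i \kct}$. Keeping this bookkeeping clean is what makes the multiplicative accumulation across rounds legitimate and gives the exponents the stated additive form $\sum_{j \le i} \de_j m_j$ and $\sum_{j \le i} e_j$.
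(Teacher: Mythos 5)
Your proposal is correct and follows essentially the same route as the paper: induction on the round index, bounding the number of $i$-candidates by $\binom{n}{v_{free}^{i^*}k}\cdot\prod_{j\le i}\binom{n^{\de_j}}{m_j k}$ and multiplying by the round-$i$ feasibility probability $2^{-e_i\kct}$, which is independent of earlier answers, then closing via the tower property of expectation. The one point you emphasize explicitly (only $|E(Q_i)|=n^{\de_i}$ is deterministic while the edge identities are history-measurable) is implicit in the paper's argument but not spelled out; your remark is a fair clarification rather than a deviation.
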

\begin{proof}
The proof is by induction on the round number $i$. For $i=1$, the number of $0$-candidates is at most ${{n}\choose{v_{free}^{i^*}k}}\leq n^{v_{free}^{i^*}k}$, and the number of $1$-candidates is at most $n^{v_{free}^{i^*}k}\cdot {{n^\de}\choose{m_1k}} \leq n^{(v_{free}^{i^*}+\de_1m_1)k} $. The family of $1$-candidates is determined deterministically by the algorithm before it receives answers. Hence, each candidate is $1$-feasible with probability $2^{-e_1\kct}$, so we get the desired bound:
\[\E_{1}[F^1]\leq n^{\Big[v_{free}^{i^*}+\de_1m_1\Big]k}\cdot 2^{-e_1\kct}\]

Assuming the claim is correct for some $i-1$, we wish to prove it for $i$. An $i$-candidate $a$ is the union of two sets $b\cup c$ where $b$ is an $i-1$-feasible set, and $c$ is a set with a perfect matching in $Q_i$. After round $i$ there are at most $F^{i-1}\cdot n^{\de_im_ik}$ such pairs, and the probability such a pair induces an $i$-feasible set is at most $2^{-e_i\kct}$ independently of the answers up to round $i-1$ (since this probability comes from round $i$ answers). By linearity of expectation we get:
\begin{align*}
    \E_{1,\dots,i}[F^i]&=\E_{1,\dots,i-1}[\E_{i}[F^i]]\leq \E_{1,\dots,i-1}[F^{i-1}\cdot n^{\de_im_ik}\cdot 2^{-e_i\kct}]=\\
    &=n^{\de_im_ik}\cdot 2^{-e_i\kct}\cdot\E_{1,\dots,i-1}[F^{i-1}]\leq n^{\Big[v_{free}^{i^*}+\sum_{j\leq i}\de_jm_j\Big]k}\cdot 2^{-\sum_{j\leq i}e_j\kct}
\end{align*}
\end{proof}
For $i=i^*$ we get an upper bound on $\E_{1,\dots,i^*}[F^{i^*}]$, and since the clique $K$ returned by the algorithm must be an $i^*$-feasible set, we deduce a bound on the probability such a clique exists. 

As we defined in the intuitive explanation, an edge $e$ queried on round $i$ for which $i<i^*$, will be a free edge if it has an endpoint $u$ in $V_{M_j}$ where $i<j\leq i^*$. This happens if and only if the edge is not contained in $V_M^i$, i.e., it is not one of the $e_i\kct$ edges accounted for the probability factor of the bound in Lemma \ref{free_edges_vers_lemma}. Therefore, denoting by $e_{free}^{i^*}\kct$ the number of free edges, we have that:
\[e_{free}^{i^*}=1-\sum_{1\leq i\leq i^*}e_i\]
Using Lemma \ref{cons_lemma} with the template $m_1,\dots,m_{i^*}, e_1,\dots,e_{i^*}$, and the $l$ bounds proved in Lemma \ref{free_edges_vers_lemma} (i.e., for any $1\leq i^*\leq l$) we get the following corollary:

\begin{corollary}

\label{free_edges_vertices_cor}
\[\al_*((\de_1,\dots,\de_l),l)\leq \max_{p\in \mathcal{S}}\min_{1\leq i^*\leq l}\frac{2\Big[v_{free}^{i^*}+\sum_{i\leq i^*}\de_im_i\Big]}{1-e_{free}^{i^*}}\]

\end{corollary}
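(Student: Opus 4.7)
The plan is to assemble Corollary \ref{free_edges_vertices_cor} directly from Lemma \ref{free_edges_vers_lemma} and Lemma \ref{cons_lemma}, since the heavy lifting (counting candidates and handling the dependencies between rounds) is already packaged inside Lemma \ref{free_edges_vers_lemma}. First I fix a template consisting of all $2i^*$ parameters $m_1,\dots,m_{i^*},e_1,\dots,e_{i^*}$ ranging over $1\le i^*\le l$; since each parameter lies in $[0,1]$ and is multiplied by $k$ or $\kct$ to yield an integer, there are at most $\mathrm{poly}(k)$ possible values per parameter, giving $|\mathcal{S}|\le k^{O(l)}=n^{o(\log n)}$ as required by Lemma \ref{cons_lemma}.

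Next, for each choice of $1\le i^*\le l$, I observe that any clique $K$ returned by $\mathcal{A}_p$ must in particular be an $i^*$-feasible set, so by Markov's inequality $\mathbb{P}(\mathcal{A}_p \text{ succeeds})\le \E_{1,\dots,i^*}[F^{i^*}]$. Plugging in the bound from Lemma \ref{free_edges_vers_lemma} and using the identity $\sum_{i\le i^*}e_i=1-e_{free}^{i^*}$ (by definition of free edges), I obtain for every $i^*$ a bound of the form
\[\mathbb{P}(\mathcal{A}_p \text{ succeeds})\le n^{f_1^{i^*}(p)k}\cdot 2^{-f_2^{i^*}(p)\kct},\]
where $f_1^{i^*}(p)=v_{free}^{i^*}+\sum_{i\le i^*}\de_im_i$ and $f_2^{i^*}(p)=1-e_{free}^{i^*}$. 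Neither $f_1^{i^*}$ nor $f_2^{i^*}$ depends on $n$.

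Finally, I feed the $B=l$ bounds indexed by $i^*\in\{1,\dots,l\}$ into Lemma \ref{cons_lemma}. The lemma yields
\[\al_*((\de_1,\dots,\de_l),l)\le \max_{p\in\mathcal{S}}\min_{1\le i^*\le l}\frac{2f_1^{i^*}(p)}{f_2^{i^*}(p)}=\max_{p\in\mathcal{S}}\min_{1\le i^*\le l}\frac{2\bigl[v_{free}^{i^*}+\sum_{i\le i^*}\de_im_i\bigr]}{1-e_{free}^{i^*}},\]
which is exactly the claimed inequality. There is no real obstacle here: the only thing to double-check is that Lemma \ref{cons_lemma} (stated for $\al_*(\de,l)$) applies unchanged to $\al_*((\de_1,\dots,\de_l),l)$, which is immediate since its proof does not use any specific form of the per-round query budget, only that $|\mathcal{S}|=n^{o(\log n)}$ and that the bounds have the required shape.
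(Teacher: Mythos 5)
Your proposal is correct and follows essentially the same path as the paper: applying Markov's inequality to the bound on $\E_{1,\dots,i^*}[F^{i^*}]$ from Lemma \ref{free_edges_vers_lemma}, rewriting via $\sum_{i\le i^*}e_i=1-e_{free}^{i^*}$, and feeding the resulting $l$ bounds into Lemma \ref{cons_lemma}. The paper elides some of the bookkeeping you make explicit (the template size and the remark that Lemma \ref{cons_lemma} transfers to $\al_*((\de_1,\dots,\de_l),l)$, which the paper justifies separately in Section \ref{sec_two_rounds_large_delta}), but the argument is the same.
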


\subsubsection{Three rounds analysis}
\label{sec_three_rounds_analysis}

In the following sections we analyze the case of three rounds algorithms for any $1\leq \de < 2$, and prove upper bounds on $\al_*(\de,3)$ using Corollary \ref{free_edges_vertices_cor}. We introduce a specific perfect matching $M$ of $K$, and prove upper bounds on the number of free edges corresponding to it. $M$ partitions $V(K)$ into three sets $V_{M_1},V_{M_2},V_{M_3}$, where $|V_{M_i}|=2m_ik$, and we have $m_1+m_2+m_3=\frac{1}{2}$. We will use the following notation: $M_{ij}=M_i\cup M_j$ and $V_{M_{ij}}=V_{M_i}\cup V_{M_j}$. $M$ is defined as the perfect matching for which $m_1+m_2$ is maximal, breaking ties by maximizing the value of $m_1$ (and then breaking ties arbitrarily). We consider $m_1,m_2,m_3$ of the matching $M$ defined this way, as parameters in the template. The reason for choosing the matching $M$ this way is that we get a good upper bound on the total number of free edges. 
For example, we have the following upper bound:

\begin{lemma}
\label{free_edges_at_most_half}
\[e_{free}^3\leq \frac{1}{2}\]
\end{lemma}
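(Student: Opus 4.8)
The statement $e_{free}^3 \le \frac{1}{2}$ concerns the special matching $M = M_1 \cup M_2 \cup M_3$ chosen to maximize $m_1+m_2$ (breaking ties by $m_1$). Recall that with $i^* = 3$ a free edge is one queried on some round $j < 3$ with an endpoint in $V_{M_i}$ for some $i > j$ with $i \le 3$. So the free edges (relative to $i^*=3$) are precisely: edges queried in round 1 with an endpoint in $V_{M_2}\cup V_{M_3}$, together with edges queried in round 2 with an endpoint in $V_{M_3}$. I would first reorganize this count: every such edge has at least one endpoint in $V_{M_2}\cup V_{M_3}$, and moreover, whenever both endpoints lie in $V_{M_3}$, or both lie in $V_{M_{23}}$ but the edge is queried in round 1, there may be subtleties — so the cleanest route is to bound the free edges by $|E(V_{M_1}, V_{M_{23}})| + |E(V_{M_2}, V_{M_3})| + (\text{edges inside } V_{M_2} \text{ queried in round } 1) + (\text{edges inside }V_{M_3}\text{ queried in round 1 or 2})$ and argue each piece.

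\textbf{Key steps.} First, I would invoke the maximality of $M$ in the form used for Lemma \ref{matching_edges_half}: since $M_1\cup M_2$ is the matching of maximum size among those using only rounds-$1$-and-$2$ queries restricted appropriately — more precisely, I would look at $K_2$ (the graph of all edges queried in rounds $1$ and $2$) and note that by the choice of $M$, the matching $M_1 \cup M_2$ on $V_{M_{12}}$ together with the structure on $V_{M_3}$ must be such that $V_{M_3}$ carries a perfect matching $M_3$ in round-$3$ queries but $V_{M_3}$ is "$K_2$-independent in a matching sense": if $V_{M_3}$ contained an edge queried in rounds $1$–$2$, we could enlarge $m_1+m_2$, contradicting maximality. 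Hence there are \emph{no} round-1 or round-2 edges inside $V_{M_3}$, killing one of the terms. Second, I would apply Lemma \ref{matching_edges_half} to $H = K_2$ with $V_M = V_{M_{12}}$ (the vertices matched by the maximum matching $M_1\cup M_2$ of $K_2$) and $V_M^- = V_{M_3}$: this gives $|E(V_{M_{12}}, V_{M_3})| \le \frac{1}{2}|V_{M_{12}}||V_{M_3}| = \frac{1}{2}(2m_1k+2m_2k)(2m_3k)$, i.e. a bound of $4(m_1+m_2)m_3\kct$ on the edges (from any round $\le 2$) between $V_{M_{12}}$ and $V_{M_3}$. Third, I would apply Lemma \ref{matching_edges_half} again, this time to $H = K_1$ with $V_M = V_{M_1}$ (max matching of $K_1$, using the tie-break that maximizes $m_1$, so that $V_{M_1}$ is a maximum matching of $K_1$ — one needs $V_{M_{23}}$ to be $K_1$-independent in the matching sense, again from maximality) and $V_M^- = V_{M_{23}}$, giving $|E(V_{M_1}, V_{M_{23}})| \le \frac{1}{2}(2m_1k)(2m_2k+2m_3k) = 4m_1(m_2+m_3)\kct$ for round-1 edges crossing from $V_{M_1}$ to $V_{M_{23}}$. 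Finally, I would also need to handle round-1 edges internal to $V_{M_2}$; here I expect to use that $M_2$ is a maximum matching of $K_2$ restricted to $V_{M_2}$ together with $V_{M_2}$ being round-1-matching-independent, so Lemma \ref{matching_edges_half} bounds internal round-1 edges of $V_{M_2}$ by $\frac12 (2m_2 k)(\text{uncovered part})$, which is a lower-order or absorbable term. Summing, $e_{free}^3 \le 4(m_1+m_2)m_3 + 4m_1 m_3 + (\text{lower order})$, and then using $m_1+m_2+m_3=\frac12$ and maximizing over the feasible region should yield $\le \frac12$; the extremum occurs at the symmetric-type point analogous to $m_1=\frac14$ in Theorem \ref{two_rounds_small_delta_thm}.

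\textbf{Main obstacle.} The delicate part is making precise \emph{which} maximum matchings the sets $V_{M_1}, V_{M_{12}}$ realize in the auxiliary graphs $K_1, K_2$, so that Lemma \ref{matching_edges_half} applies cleanly. The tie-breaking in the definition of $M$ (first maximize $m_1+m_2$, then $m_1$) is precisely engineered so that $M_1\cup M_2$ is a maximum matching of $K_2$ and $M_1$ is a maximum matching of $K_1$ among those extending to the optimal $M$; I would need a short combinatorial exchange argument showing that otherwise one could increase $m_1+m_2$ (or $m_1$) while keeping $M$ a perfect matching of $K$, contradicting the choice. A secondary subtlety is double-counting: an edge queried in round 1 with one endpoint in $V_{M_2}$ and one in $V_{M_3}$ gets counted in the $E(V_{M_{12}},V_{M_3})$ bound, and I must make sure I am not also charging it elsewhere; organizing the free edges into the disjoint classes "crossing $V_{M_1}$–$V_{M_{23}}$ (round 1)", "crossing $V_{M_2}$–$V_{M_3}$ (rounds $\le 2$)", "inside $V_{M_2}$ (round 1)", "inside $V_{M_3}$ (rounds $\le 2$, shown empty)" handles this, and the first two together are exactly $E(V_{M_{12}},V_{M_3}) \cup E(V_{M_1}, V_{M_2})$ with the round restriction, so Lemma \ref{matching_edges_half} governs both. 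Once the structural facts are pinned down, the remaining optimization is the routine kind already appearing in the earlier proofs.
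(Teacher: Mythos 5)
Your proposal diverges substantially from the paper's proof, and it contains genuine gaps that would prevent it from closing. The paper's argument is purely local: it pairs up the matching edges of $M$, observes that every non-matching pair of $V(K)$ lies between exactly one pair of distinct matching edges $\{u,v\},\{w,z\}$, and shows by an exchange argument (replace $\{u,v\},\{w,z\}$ by $\{u,w\},\{v,z\}$, contradicting the maximality of $m_1+m_2$ or the tie-break on $m_1$) that at most one of $\{u,w\},\{v,z\}$ and at most one of $\{u,z\},\{v,w\}$ can be free. This gives $e_{free}^3\leq\frac12$ immediately, with no global decomposition and no optimization over $m_1,m_2,m_3$.

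The first concrete problem with your route is the application of Lemma \ref{matching_edges_half} to $H=K_1$ with $V_M=V_{M_1}$. That lemma requires $V_M$ to be covered by a \emph{maximum} matching of $H$, and in particular its proof needs $V_M^-$ to be independent in $H$. But $M_1$ is not in general a maximum matching of $K_1$, and $V_{M_{23}}$ is not $Q_1$-independent: two vertices $u,w\in V_{M_2}$ can be joined by a $Q_1$ edge without contradicting the choice of $M$, because swapping their two $M_2$-edges for $\{u,w\}$ forces the partners $u',w'$ onto a pair that may only be queried in round 3, which \emph{decreases} $m_1+m_2$. The paper is explicit about this: it introduces the auxiliary matching $M_1'$ inside $V_{M_{23}}$ precisely because such $Q_1$ edges exist, and in the proof of Theorem \ref{thm_three_rounds} it notes that ``$M_1$ is not a maximal matching in $K_1$.'' For the same reason, your claim that the round-1 edges internal to $V_{M_2}$ are ``lower-order or absorbable'' is false; Corollary \ref{free_edges_bounds_cor} bounds them by $2m_1'(\st+2\dt)\kct$, which is $\Theta(\kct)$ when $m_1'=\Theta(1)$.

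The second problem is that even if one grants all your individual bounds, the final optimization does not yield $\frac12$. Taking $m_1=\epsilon$, $m_2=m_3=\frac14$ (minus $\epsilon$ adjustments) with $\st=0$, $\dt=\frac14$, $m_1'=\frac14$, the crossing bound $4(m_1+m_2)m_3\kct$ contributes about $\frac14\kct$ and the internal-$V_{M_2}$ term contributes up to another $\frac14\kct$, so the sum already reaches $\frac12\kct$ before adding the $V_{M_1}$-crossing edges; for $\epsilon>0$ the naive sum exceeds $\frac12\kct$. The individual bounds are not simultaneously tight, but your argument as structured gives no mechanism for exploiting that. The local pairwise exchange in the paper is exactly such a mechanism: it charges each free edge to a pair of matching edges and shows that within each such pair of disjoint cross edges at most one can be free, which is the statement that fails to follow from summing the global bounds.
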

\begin{proof}

Let $\{u,v\},\{w,z\}$ be two edges in the matching $M$. Between the vertices $u,v,w,z$ there are four non-matching edges (i.e., not in $M$) - $\{u,w\}$, $\{u,z\}$, $\{v,w\}$, $\{v,z\}$, which are adjacent to both matching edges. For any non-matching edge $e$, there is exactly one pair of distinct matching edges adjacent to $e$. Therefore, going over all pairs of matching edges, and the four edges adjacent to them, we count each non-matching edge exactly once. Since the fraction of matching edges is negligible, it is enough to prove that for any pair of matching edges, at most two edges out of $\{u,w\}$, $\{u,z\}$, $\{v,w\}$, $\{v,z\}$ are free. 

We prove that at most one of the edges $\{u,w\},\{v,z\}$ is free (the proof for the other pair is equivalent). Assume for the sake of contradiction that both edges are free. {We will introduce a matching $\Bar{M}$ that contradicts the maximality properties of $M$.} There are no free edges queried in round $3$, so the edges are in $Q_1\cup Q_2$. We divide into the following cases:

\begin{itemize}
    \item \textbf{Both $\{u,w\},\{v,z\}$ were queried in round 1.} {Suppose w.l.o.g that $\{u,w\}$ is free because $u$ is covered by $M$ after $\{u,w\}$ is answered, i.e., $\{u,v\}\in M_{23}$, so $\{u,v\}\in Q_2\cup Q_3$.}
    We define a matching $\Bar{M}$ which is equal to $M$ except on the vertices $u,v,w,z$. Instead of the edges $\{u,v\},\{w,z\}$, $\Bar{M}$ contains $\{u,w\},\{v,z\}$. Denote by $\Bar{m}_1,\Bar{m}_2,\Bar{m}_3$ the relative parts of the partition of $\Bar{M}$ to rounds. Since $\{u,w\},\{v,z\}\in Q_1$ and $\{u,v\}\in Q_2\cup Q_3$ we have:
    \begin{align*}
        \Bar{m}_1+\Bar{m}_2&\geq m_1+m_2\\
        \Bar{m}_1&>m_1
    \end{align*}
    contradicting the maximality of $M$ (which maximizes $m_1+m_2$, breaking ties by maximizing $m_1$).
    
    \item \textbf{At least one edge was queried on round 2}. Suppose w.l.o.g that $\{u,w\}\in Q_2$ and it is free because $u$ is covered by $M$ after $\{u,w\}$ is answered, i.e., $\{u,v\}\in M_{3}$, so $\{u,v\}\in Q_3$. We use the same matching $\Bar{M}$ as in the previous case. Since $\{u,w\},\{v,z\}\in Q_1\cup Q_2$, and $\{u,v\}\in Q_3$ we have:
    \[\Bar{m}_1+\Bar{m}_2> m_1+m_2\]
    contradicting the maximality of $M$.
\end{itemize}
Hence, at most one edge of each pair is free and $e_{free}^3\leq \frac{1}{2}$.
\end{proof}

The edges in $M_{12}$ form a maximum matching in $K_2$ (since $m_1+m_2$ is maximal), so we can look at the Gallai-Edmonds decomposition after round $2$. It is composed of the sets $C,S,R$, or the specific matching representation - $C^{*+},C^{*-},S,D$ where $C^{*}=C^{*+}\cup C^{*+}$ (as defined in Section \ref{sec_ged}). 
We partition further according to the partition of $V(K)$ into $V_{M_1}\cup V_{M_2}\cup V_{M_3}$. We get the following sets (see Figure \ref{partition1}):
\begin{itemize}
    \item $S$ decomposes to $S_1\cup S_2$ where $S_i=S\cap V_{M_i}$.
    \item $C^{*+}$ decomposes to the sets $C^{*+}_1, C^{*+}_2$ which are matched to $S_1, S_2$ respectively.
    \item $C^{*-}=V_{M_3}$. In particular, $C^{*-}$ contains all isolated vertices in $K_2$.
    \item The set $D$ decomposes to $D_1\cup D_2$ where $D_i=D\cap V_{M_i}$. 
\end{itemize}

\begin{figure}[h!]
  \centering
  \includegraphics[width=1\textwidth]{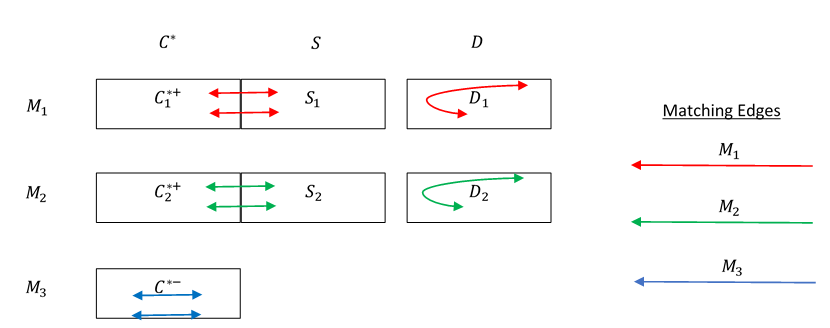}
  \caption{Specific matching decomposition of $K$. The arrows indicate the partition of $M$ according to the rounds and the decomposition.}
  \label{partition1}
\end{figure}

The edges in $M_2$ form a maximum matching in $V_{M_{23}}$ after round $2$, because if there was a larger matching we could add $M_1$ and get a matching larger than $M_{12}$ in $K_2$. Therefore we can look at the GED of $V_{M_{23}}$, and the specific matching decomposition. We denote the relevant sets by $\Tilde{C},\Tilde{S},\Tilde{R}$. We have that $\Tilde{C}\subseteq C\cap V_{M_{23}}$ because for any $v\in \Tilde{C}$ there is a maximum matching in $V_{M_{23}}$ in which $v$ does not participate. If we add $M_1$ to this matching, we get a maximum matching in $K$ in which $v$ does not participate, so $v \in C$. Therefore we have that $\Tilde{S}=N(\Tilde{C})\subseteq N(C)=S$, so $\Tilde{S}\subseteq S\cap V_{M_{23}}=S_2$, and also $R\cap V_{M_{23}}\subseteq \Tilde{R}$. Looking at the specific matching decomposition $\Tilde{C}^{*}=\Tilde{C}^{*+}\cup \Tilde{C}^{*-},\Tilde{S},\Tilde{D}$, we have $\Tilde{C}^{*-}=V_{M_3}$ and $\Tilde{S}$ is matched to $\Tilde{C}^{*+}$. Since $\Tilde{S}\subseteq S_2$ and $\Tilde{C}^*\subseteq C^*\cap V_{M_{23}}$ we get $D_2\subseteq \Tilde{D}$, so we can write $\Tilde{D}$ as $D_2\cup \Tilde{D_2}$ (see Figure \ref{partition2}).

For each vertex $v\in V(K)$, and $K'\subseteq K$ we denote the degree of $v$ in the graph $Q_i$ induced on $V(K')$ by $deg_i^{K'}(v)$. 
\begin{lemma}
\label{deg1_lemma}
For every $v\in \Tilde{C}$, $deg_1^{V_{M_{23}}}(v)=0$.
\end{lemma}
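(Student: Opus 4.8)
The plan is to show that any edge of $Q_1$ incident to a vertex $v\in\Tilde{C}$ would let us modify the matching $M$ so as to increase $m_1$ (or $m_1+m_2$), contradicting the maximality properties defining $M$. Recall that $\Tilde{C}$ is the Gallai--Edmonds "core" of $V_{M_{23}}$ computed after round~2, so for every $v\in\Tilde{C}$ there is a maximum matching of $V_{M_{23}}$ (using only round-$\le 2$ edges) that misses $v$; moreover any two vertices of the same odd component of $\Tilde{C}$ are joined by an even-length alternating path inside $\Tilde{C}$, and vertices of distinct odd components, or a vertex of $\Tilde{C}$ and a vertex of $\Tilde{D}\cup\Tilde{S}$, behave as dictated by the decomposition. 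The key structural fact I want to use is that $M_2$ restricted to $V_{M_{23}}$ is a maximum matching there, so $|M_2|=m_2k$ is exactly the maximum matching size of the round-$\le 2$ graph on $V_{M_{23}}$.

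First I would suppose, for contradiction, that some $v\in\Tilde{C}$ has a round-1 neighbour $u$ inside $V_{M_{23}}$, i.e. $\{u,v\}\in E(Q_1)$ with $u\in V_{M_{23}}$. Using the Gallai--Edmonds structure of $V_{M_{23}}$, I would rotate a maximum matching of $V_{M_{23}}$ along the alternating path from $v$ to the exposed vertex of its odd component so that $v$ becomes the exposed vertex; call the resulting maximum matching $N$ of $V_{M_{23}}$. Then $N\cup M_1$ is still a maximum matching of $K_2$ (same size as $M_{12}$), but now $v$ is exposed by it and $\{u,v\}\in Q_1$ is available. Adding $\{u,v\}$ to the round-1 part: if $u$ is $N$-exposed too we directly enlarge $M_{12}$, contradicting maximality of $m_1+m_2$; otherwise $u$ is matched in $N$ by some edge $\{u,u'\}$, and swapping so that $\{u,v\}$ (a round-1 edge) replaces $\{u,u'\}$ yields a matching with the same total size $m_1k+m_2k$ but with strictly more round-1 edges — this increases $m_1$ while keeping $m_1+m_2$ fixed, again contradicting the tie-breaking definition of $M$. (One has to check the round labels of the edges removed: the edge $\{u,u'\}$ removed from $N$ is a round-$\le 2$ edge, so replacing it by a round-1 edge can only move mass from round 2 to round 1 or keep it, never the reverse; in every case $m_1$ does not decrease and in the relevant case strictly increases.)

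The remaining case is a round-1 edge $\{u,v\}$ with $v\in\Tilde{C}$ but $u\notin V_{M_{23}}$, i.e. $u\in V_{M_1}$; this is exactly the situation that could a priori make $\{u,v\}$ "not free" and hence is the only subtlety. Here $u$ is covered by the round-1 edge of $M$ incident to it, say $\{u,u'\}\in M_1$. I would again rotate within $V_{M_{23}}$ to expose $v$, obtaining $N$, and then form $\Bar M$ by deleting $\{u,u'\}$ from $M_1$ and adding $\{u,v\}$; this keeps $|M_1|$ and $|M_1\cup M_2|$ unchanged but now $u'$ is exposed while $v$ is matched by a round-1 edge. Since $u'\in V_{M_1}$ and one can check $u'$ together with $V_{M_{23}}\setminus\{v\}$ still admits a matching of size $m_1k+m_2k$ (we merely performed an augmenting/rotating swap), no contradiction yet — so the actual argument must instead derive that the total matching could be made larger, using that $v\in\Tilde C$ has another neighbour structure; more cleanly, I expect the intended proof simply notes that such an edge $\{u,v\}$ is, by the very definition of "free" edges in the paragraph preceding Lemma~\ref{free_edges_vers_lemma}, a free edge (its endpoint $v$ lies in $V_{M_{23}}$, which is covered by $M$ only from round $2$ onward, while the edge is in $Q_1$), hence already excluded from the $e_i$-count, so it need not be handled here — the claim "$deg_1^{V_{M_{23}}}(v)=0$" is about edges \emph{inside} $V_{M_{23}}$, and the contradiction comes purely from the case in the previous paragraph.

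The main obstacle, then, is getting the alternating-path rotation in $V_{M_{23}}$ exactly right and tracking the round labels through the swap so that the inequalities $\Bar m_1+\Bar m_2\ge m_1+m_2$ and $\Bar m_1>m_1$ (or the strict total inequality) come out in the correct direction; this is the same style of bookkeeping as in the proof of Lemma~\ref{free_edges_at_most_half}, and I would model the write-up closely on that argument, first disposing of the case $\{u,v\}\in Q_2$ (which is easier because it cannot hurt $m_1+m_2$) and then the case $\{u,v\}\in Q_1$, in each subcase exhibiting the explicit modified matching $\Bar M$ and citing the Gallai--Edmonds properties listed in Section~\ref{sec_ged}.
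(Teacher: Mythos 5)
Your proof is correct and takes essentially the same route as the paper's: expose $v$ by passing to a maximum matching $N$ of $V_{M_{23}}$ in the round-$\le 2$ graph that misses $v$ (the paper organizes this as first treating $v\in V_{M_3}=\Tilde{C}^{*-}$ and then reducing general $v\in\Tilde{C}$ to that case), and derive a contradiction with the maximality of $m_1+m_2$, or with the tie-breaking maximality of $m_1$, via a local swap involving the $Q_1$ edge $\{u,v\}$. Your one loose end --- the removed edge $\{u,u'\}$ possibly lying in $Q_1$ --- closes in one line, since any $Q_1$ edge in $N$ already yields $\Bar{m}_1>m_1$ with $\Bar{m}_1+\Bar{m}_2\ge m_1+m_2$ (as $M_2\subseteq Q_2$), contradicting the tie-break; and you are right that edges from $v$ to $V_{M_1}$ are outside the scope of $deg_1^{V_{M_{23}}}(v)$, so your middle paragraph is an unnecessary digression.
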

\begin{proof}
First, suppose for contradiction that for some $v\in V_{M_3}$, $deg_1^{V_{M_{23}}}(v)\geq 1$. All edges within $\Tilde{C}^*$ are in $Q_3$, so there must be $u\in \Tilde{S}\cup \Tilde{D}$ s.t. $\{u,v\}\in Q_1$. Let $u',v'$ be the vertices matched to $u,v$. We have $\{u,u'\}\in Q_2$ and $\{v,v'\}\in Q_3$. Hence, by replacing the matching edges with $\{u,v\}$ and $\{u',v'\}$ in the original full matching of $K$, we get a matching in which $m_1+m_2$ is not smaller, but $m_1$ is larger, which is a contradiction. 

Now suppose there is $v\in \Tilde{C}$ such that  $deg_1^{V_{M_{23}}}(v)\geq 1$. This means there is $u\in V_{M_{23}}$ s.t. $\{u,v\}\in Q_1$. $v\in \Tilde{C}$ so by definition there is maximum matching of $V_{M_{23}}$ in which $v$ is not matched. In this new matching, which might consist of a different decomposition -  $\Tilde{C}^*_{new},\Tilde{S},\Tilde{D}_{new}$, it must be that $u\notin \Tilde{C}^*_{new}$, because there are only $Q_3$ edges in $\Tilde{C}^*_{new}$. So $u\in \Tilde{S} \cup \Tilde{D}_{new}$. Now we can apply the previous argument about $v\in V_{M_3}$ but using the new matching and get a contradiction to $\{u,v\}\in Q_1$.
\end{proof}

Another parameter of the decomposition that we will use in the template, is the size of the maximum $Q_1$ matching in $V_{M_{23}}$, denoted $M_1'$ (see Figure \ref{partition2}). The set of vertices covered by this matching will be denoted $V_{M_1'}$, and its size will be $|V_{M_1'}|=2m_1'k$. Note that by Lemma \ref{deg1_lemma}, $V_{M_1'}\subseteq\Tilde{S}\cup\Tilde{R}$.

\begin{figure}[h!]
  \centering
  \includegraphics[width=1\textwidth]{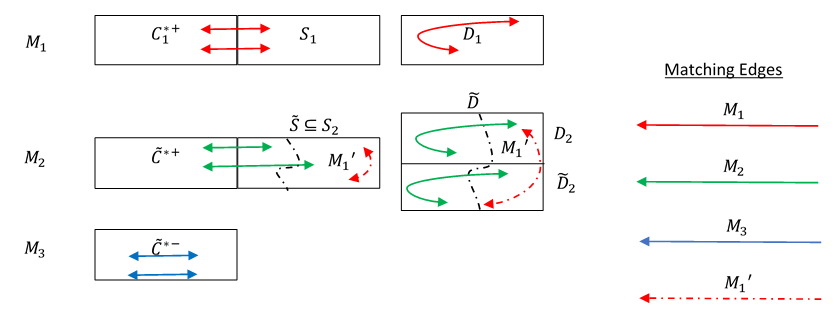}
  \caption{Specific matching decomposition of $V_{M_{23}}$ and $K$, including $M_1'$. }
  \label{partition2}
\end{figure}

We will use the notation of $E_i(A,B)$ and $E_i(A)$ as the number of $Q_i$ edges between $A$ and $B$ or within $A$. If no index is mentioned it means all edges between the sets.

The decomposition described is rich enough to extract strong bounds on the number of free edges. In order to do so, we replace the template parameters $m_1,m_2$ by the parameters $s_1,d_1,\st,\dt$ where $|S_1|=s_1k$, $|D_1|=2d_1k$, $|\Tilde{S}|=\st k$, $|\Tilde{D}|=2\dt k$. Note that $m_1=s_1+d_1$ and $m_2=\st+\dt$. In the three rounds setting, free edges may be in either $Q_1$ or $Q_2$, and we start with the analysis of round 2 free edges, which is simpler (see Figure \ref{free_egdes_fig} for a summary of all free edges in $K$). 

\paragraph*{Round 2 free edges}
Any free edge of $Q_2$ must have an endpoint in $V_{M_3}$, which equals $C^{*-}$ in the decomposition. Looking at $K_2$, $C^{*-}$ is an independent set and has all neighbors in $S\cup D$. Moreover, any $v\in D$ satisfies $\deg_2^{C^{*-}}(v)\leq 1$ (because the vertices of $C^{*-}$ belong to different connected components in the G.E.D). Hence, $E_2(C^{*-},D)$ is at most linear in $k$, and so neglecting low order terms we may assume the number of free edges of $Q_2$ is bounded by $E_2(C^{*-},S)$. This can be further reduced by looking at $M_2$ as a maximal matching of $V_{M_{23}}$ after round 2. All $Q_2$ edges in $V_{M_{23}}$ with an endpoint in $C^{*-}=\Tilde{C}^{*-}$ have the other endpoint in $\Tilde{S}\subseteq S_2$ (up to a linear in $k$ amount). Hence, the number of such free edges is bounded by \[E_2(C^{*-},S_1\cup \Tilde{S})\leq 2m_3k\cdot (s_1+\st)k\simeq4m_3(s_1+\st)\kct\]

\paragraph*{Round 1 free edges}
Free edges of $Q_1$ must have an endpoint in either $V_{M_3}$ or $V_{M_2}$. A free edge $e$ that has an endpoint in $V_{M_3}$, must have the other endpoint in $V_{M_1}$ (by Lemma \ref{deg1_lemma}). So up to low order terms, the number of such edges is at most 
\[E_1(C^{*-},S_1)\leq 2m_3k\cdot s_1k\simeq4m_3s_1\kct\]

Note that when we bound both $Q_1$ and $Q_2$ free edges, the $Q_1$ bound is redundant because we have: 
\[E_2(C^{*-},S_1\cup \Tilde{S})+E_1(C^{*-},S_1)\leq E(C^{*-},S_1\cup \Tilde{S}) \leq 4m_3(s_1+\st)\kct\]

Assuming a free edge $e$ has no endpoints in $V_{M_3}$, it must have an endpoint in $V_{M_2}$. For the case where the other endpoint is in $V_{M_1}$ we have to bound $E_1(V_{M_1}, V_{M_2})$. For any pair of matching edges $(e_1,e_2)\in M_1\times M_2$, there are four non-matching edges between the four vertices incident to $e_1,e_2$, and at most two of those edges are in $Q_1$.
Otherwise, we could enlarge $m_1$ by using two edges of $Q_1$ instead of $(e_1,e_2)$. Going over all pairs $(e_1,e_2)\in M_1\times M_2$ and the edges between them, we count all edges in $E(V_{M_1},V_{M_2})$ exactly once, so at most half of them are in $Q_1$. Hence:
\[E_{1}(V_{M_1},V_{M_2}) \leq \frac{1}{2}\cdot 2m_1k\cdot 2m_2k\simeq 4(s_1+d_1)(\st+\dt)\kct\]

For the last case, where both endpoints are in $V_{M_2}$, we have to bound $E_1(V_{M_2})$. By Lemma \ref{deg1_lemma}, $Q_1$ edges inside $V_{M_2}$ can be only in $\Tilde{S}\cup \Tilde{D}$. We partition $\Tilde{S}\cup \Tilde{D}$ into $V_{M_1'}$ and $[\Tilde{S}\cup \Tilde{D}]\setminus{V_{M_1'}}$ ( $|[\Tilde{S}\cup \Tilde{D}]\setminus{V_{M_1'}}|=(\st+2\dt-2m_1')k$). Since $M_1'$ is the maximum $Q_1$ matching in $V_{M_2}$, there are no $Q_1$ edges within $[\Tilde{S}\cup \Tilde{D}]\setminus{V_{M_1'}}$, and at most half the edges between the two sets are from $Q_1$. This is because if we look at the GED of $Q_1$ in $V_{M_2}$, at most one endpoint of each maximum matching edge can be connected to an unmatched vertex. Therefore there are at most $\frac{1}{2}\cdot 2m_1'k\cdot (\st+2\dt-2m_1')k\simeq 2m_1'(\st+2\dt-2m_1')\kct$ such edges. The rest of the edges are those within $V_{M_1'}$. There are ${{2m_1'k}\choose{2}}\simeq 4{m'_1}^2\kct$ such edges, all may be in $Q_1$. We get 
\[E_{1}(V_{M_2})\leq [4{m'_1}^2+2m_1'(\st+2\dt-2m_1')]\kct=2m_1'(\st+2\dt)\kct\]

\begin{figure}[h!]
  \centering
  \includegraphics[width=1\textwidth]{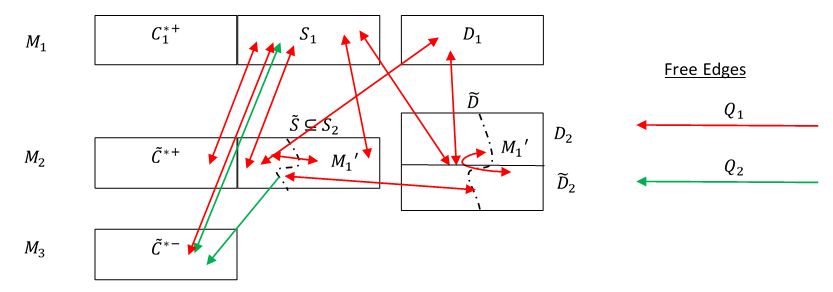}
  \caption{Free edges in $K$}
  \label{free_egdes_fig}
\end{figure}

\begin{corollary}
\label{free_edges_bounds_cor}
We have the following bounds on $e_{free}^2,e_{free}^3$:
\[e_{free}^2\leq 2[2(s_1+d_1)(\st+\dt)+m_1'(\st+2\dt)]\]
\[e_{free}^3\leq 2[2m_3(s_1+\st)+2(s_1+d_1)(\st+\dt)+m_1'(\st+2\dt)]\]
\end{corollary}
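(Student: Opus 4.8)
The corollary is an assembly step: all the substantive edge-count inequalities it needs were already proved in the discussions of round~$1$ and round~$2$ free edges above, so the plan is just to sort the free edges of $K$, for the two cutoffs $i^*=2$ and $i^*=3$, according to the round in which they were queried and which of $V_{M_1},V_{M_2},V_{M_3}$ their endpoints meet, and then add the corresponding bounds.

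For $i^*=2$ I would use that a free edge is exactly a $Q_1$-edge with an endpoint in $V_{M_2}$. By Lemma~\ref{deg1_lemma} no $Q_1$-edge joins $V_{M_3}$ to $V_{M_2}$, so such an edge either has its other endpoint in $V_{M_1}$, hence is counted by $E_1(V_{M_1},V_{M_2})\le\tfrac12|V_{M_1}||V_{M_2}|\simeq 4(s_1+d_1)(\st+\dt)\kct$, or lies inside $V_{M_2}$, hence is counted by $E_1(V_{M_2})\le 2m_1'(\st+2\dt)\kct$; the two classes are disjoint, one having an endpoint in $V_{M_1}$ and the other not. Adding them and using $m_1=s_1+d_1$, $m_2=\st+\dt$ yields $e_{free}^2\le 2[\,2(s_1+d_1)(\st+\dt)+m_1'(\st+2\dt)\,]$.

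For $i^*=3$ the free edges are the $Q_1$-edges meeting $V_{M_2}\cup V_{M_3}$ together with the $Q_2$-edges meeting $V_{M_3}$, and I would split them into (i) all edges, of $Q_1$ or $Q_2$, with an endpoint in $V_{M_3}=C^{*-}$, and (ii) the $Q_1$-edges with an endpoint in $V_{M_2}$ and none in $V_{M_3}$. For (i): $C^{*-}\subseteq C^*$ is independent in $K_2$, so $V_{M_3}$ spans no $Q_1$- or $Q_2$-edge internally; by Lemma~\ref{deg1_lemma} a $Q_1$-edge out of $V_{M_3}$ lands in $V_{M_1}$, and after discarding $E_1(C^{*-},C^{*+}_1)=0$ and the $O(k)$ edges of $E(C^*,D)$ it lands in $S_1$; likewise a $Q_2$-edge out of $V_{M_3}$ lands in $S_1\cup\Tilde{S}$ up to an $O(k)$ error; so, exactly as in the redundancy remark above, class~(i) is contained, up to $o(\kct)$, in $E(C^{*-},S_1\cup\Tilde{S})\le|C^{*-}|(s_1+\st)k\simeq 4m_3(s_1+\st)\kct$. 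For (ii) the $i^*=2$ argument applies verbatim, bounding it by $4(s_1+d_1)(\st+\dt)\kct+2m_1'(\st+2\dt)\kct$, and classes (i) and (ii) are disjoint since (ii) has no endpoint in $V_{M_3}$. Summing gives $e_{free}^3\le 2[\,2m_3(s_1+\st)+2(s_1+d_1)(\st+\dt)+m_1'(\st+2\dt)\,]$.

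There is no deep step here; the only point that needs care — and the one I would treat as the crux — is checking that the enumeration of free edges above is exhaustive and involves no double counting. Exhaustiveness is immediate from the definition of a free edge once Lemma~\ref{deg1_lemma} is invoked to pin every $Q_1$-edge leaving $V_{M_3}$ inside $V_{M_1}$ (and, up to lower-order terms, inside $S_1$); non-overlap uses that each pair of $V(K)$ is queried in a single round and that $C^*$ is independent in $K_2$, so $V_{M_3}$ contributes no internal $Q_1$- or $Q_2$-edge. Everything else is substitution of the previously established inequalities and discarding $o(\kct)$ terms such as $E(C^*,D)$, the matching edges themselves, and $E_2(C^{*-},D)$.
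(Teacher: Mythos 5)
Your proposal is correct and follows essentially the same route as the paper: the corollary is indeed just the assembly of the bounds $E_1(V_{M_1},V_{M_2})\le 4(s_1+d_1)(\st+\dt)\kct$, $E_1(V_{M_2})\le 2m_1'(\st+2\dt)\kct$, and $E(C^{*-},S_1\cup\Tilde{S})\le 4m_3(s_1+\st)\kct$ established in the preceding paragraphs, with the same use of Lemma \ref{deg1_lemma}, the independence of $C^*$, and the negligibility of $E(C^*,D)$. Your explicit case analysis of which edges are free for $i^*=2$ versus $i^*=3$ and the disjointness check match the paper's (largely implicit) argument.
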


\subsubsection{Tight bound for three rounds, restricted first rounds}
Algorithm 4 (see Section \ref{algos_sec}) provides the lower bound $\al_*(\de,3)\geq 1+\frac{\de}{2}$ for every $1\leq\de<2$. As we did in Section \ref{sec_two_rounds_large_delta_restricted}, we will show that Algorithm 4 is optimal among all algorithms which query the same number of queries per round as Algorithm 4 does. Algorithm 4 queries roughly $n^{1-\frac{\de}{2}}$ pairs in round 1, $\theta (n)$ pairs in round 2, and $n^\de$ pairs in round 3. We therefore have $(\de_1,\de_2,\de_3)=(1-\frac{\de}{2},1,\de)$. We prove the following theorem:
\begin{theorem}
\label{thm_three_rounds}
For every $1\leq\de<2$:
\[\al_*((1-\frac{\de}{2},1,\de),3)\leq 1+\frac{\de}{2}\]
\end{theorem}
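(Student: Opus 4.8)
The plan is to apply Corollary~\ref{free_edges_vertices_cor} with $l=3$ and $(\de_1,\de_2,\de_3)=\big(1-\tfrac{\de}{2},\,1,\,\de\big)$, instantiated with the lex‑maximal matching $M$ of the clique $K$ and the Gallai–Edmonds decomposition built in Section~\ref{sec_three_rounds_analysis}. Substituting the free‑vertex counts $v_{free}^{3}=0$, $v_{free}^{2}=2m_3$, $v_{free}^{1}=2(m_2+m_3)$, the weights $\sum_{i\le i^*}\de_i m_i$, and the free‑edge bounds of Corollary~\ref{free_edges_bounds_cor} together with $e_{free}^{3}\le\tfrac12$ from Lemma~\ref{free_edges_at_most_half}, the theorem reduces to the purely combinatorial claim that over every admissible signature
\[
\min_{i^*\in\{1,2,3\}}\ \frac{2\big[v_{free}^{i^*}+\sum_{i\le i^*}\de_i m_i\big]}{1-e_{free}^{i^*}}\ \le\ 1+\tfrac{\de}{2}.
\]
The case $i^*=1$ is of no help, since there $e_{free}^{1}=1-e_1$ with no usable lower bound on the fraction $e_1$ of $K$‑edges queried in round~$1$; so the two active bounds are $i^*=2$ and $i^*=3$, and the goal is to show that for every signature at least one of them is $\le 1+\tfrac{\de}{2}$.

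Next I would assemble the constraints on the parameters $s_1,d_1,\st,\dt,m_1',m_3$ (with $m_1=s_1+d_1$ and $m_2=\st+\dt$). Besides $m_1+m_2+m_3=\tfrac12$ and $2m_1'\le\st+2\dt$ (Lemma~\ref{deg1_lemma}), the crucial extra input is Lemma~\ref{max_indep_lemma}: the set $C^{*}$ is independent in $K_2$ with $|C^{*}|=|S|+|V_{M_3}|$, and since $S_1$ and $\Tilde{S}$ are disjoint subsets of $S$, the restriction $I_{2}\le\de\log n$ yields $s_1+\st+2m_3\le\de/\al$. This is exactly the combination $s_1+\st$ that multiplies the extra free‑edge term $4m_3(s_1+\st)\kct$ occurring in $e_{free}^{3}$ but not in $e_{free}^{2}$, which is why the bound is needed. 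One must also use the constraints implied by the lex‑maximal choice of $M$ (those already behind Corollary~\ref{free_edges_bounds_cor}, Lemma~\ref{deg1_lemma} and Lemma~\ref{free_edges_at_most_half}); these, in particular, forbid degenerate signatures such as $m_1=0$ coexisting with a $Q_1$‑perfect matching of $V_{M_{23}}$, for which the free‑edge bounds would be vacuous and the displayed inequality false. Isolating precisely the right such constraints — so that the feasible region is neither too large (false theorem) nor artificially small — is part of the difficulty.

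With the feasible region in hand, the remaining work is to maximize $\min\{A_2,A_3\}$, where $A_3=\dfrac{2[(1-\de/2)m_1+m_2+\de m_3]}{1-4m_3(s_1+\st)-4m_1m_2-2m_1'(\st+2\dt)}$ and $A_2=\dfrac{2[2m_3+(1-\de/2)m_1+m_2]}{1-4m_1m_2-2m_1'(\st+2\dt)}$, subject to the above. I expect the maximizer to lie on the boundary of the feasible region — at the parameters realised by Algorithm~4 ($m_1=0$, $m_2=\st=\tfrac{2-\de}{2+\de}$, $\dt=0$, $m_1'=\tfrac{\st}{2}$, $m_3=\tfrac{3\de-2}{2(2+\de)}$), where a direct computation gives $A_2=A_3=1+\tfrac{\de}{2}$. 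The proof then amounts to a case split: the $i^*=3$ bound stays below $1+\tfrac{\de}{2}$ outside a middle range of $m_3$, and on that range the $i^*=2$ bound drops below it (its larger numerator compensated by the larger denominator, since the $4m_3(s_1+\st)$ term is absent), using the $\de/\al$ constraint to control $m_3$ and $s_1+\st$ jointly. This optimization — deciding which of $A_2,A_3$ is binding in each regime, keeping the outer $\min$ honest, and checking all boundary and interior critical points against $1+\tfrac{\de}{2}$ — is the main obstacle; everything before it is assembly of already‑established lemmas, and the coincidence with Algorithm~4's lower bound is what makes the bound tight.
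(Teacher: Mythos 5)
Your overall frame is the right one (Corollary~\ref{free_edges_vertices_cor} with the lex-maximal matching, the Gallai--Edmonds based free-edge bounds, and the $I_{2}\le\de\log n$ restriction), but there is a genuine gap: you discard the $i^*=1$ bound, and the proof does not go through without it. Your justification for discarding it is a misreading of the setup: for $i^*=1$ \emph{all} of $V(K)\setminus V_{M_1}$ consists of free vertices and there are \emph{no} free edges --- the family of $1$-candidates is fixed before any oracle answers, so every one of the $\kct$ pairs of $K$ is paid for in the probability, giving the perfectly usable bound $\al_1=2-(2+\de_1)m_1$, which the paper further strengthens to $\al_1=2-(2+\de)(s_1+d_1+m_1')$ by replacing $M_1$ with the maximal round-1 matching $M_1\cup M_1'$ of $K_1$. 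This bound is essential for two reasons. First, combined with the lower bound $\al\ge 1+\frac{\de}{2}$ it yields $m_1'\le\frac{1-\de/2}{2+\de}-s_1$, which is the only thing capping the free-edge term $m_1'(\st+2\dt)$ in the denominators of $\al_2,\al_3$; your constraint $2m_1'\le\st+2\dt$ is too weak. Second, without it your reduced claim $\min\{\al_2,\al_3\}\le 1+\frac{\de}{2}$ is simply false on your feasible region: take $\de=1$, $s_1=d_1=0$, $\st=\frac13$, $\dt=0.05$, $m_1'=\frac{\st}{2}+\dt\approx0.2167$ (so $m_2\approx0.383$, $m_3\approx0.117$); this satisfies your independent-set constraint ($s_1+\st+2m_3\le\frac23$) and $2m_1'\le\st+2\dt$, yet a direct computation gives $\al_2\approx1.518$ and $\al_3\approx1.523$, both exceeding $\frac32$, while $\al_1=2-3(s_1+m_1')=1.35$ is what eliminates the point.

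Two further remarks. You also skip the reduction to $d_1=0$ (Lemma~\ref{lemma_d_1_equals_0}), which is needed because $\al_3$ is not monotone in the split of $m_1$ into $s_1+d_1$ under the exact constraint~\eqref{eq:max_indep_constraint}; the paper handles this by relaxing to~\eqref{eq:relaxed_max_indep_constraint}. And even with the correct constraint set (three variables $s_1,\st,\dt$ after substituting the maximal $m_1'$), the final maximization of $\min\{\al_2,\al_3\}$ is not resolved in the paper by a hand case-split: it is verified computationally via Lipschitz bounds on $\al_2,\al_3$, an adaptive grid over the four-dimensional domain (including $\de$), and SLSQP runs in the surviving cells (Appendix~\ref{appendix_optimization}). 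So the step you label ``the main obstacle'' is indeed nontrivial, but it only becomes a true statement to be proved once the $i^*=1$ bound is restored.
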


\begin{proof}

By Corollary \ref{free_edges_vertices_cor} and Lemma \ref{max_indep_lemma}, we get the following upper bound:
\[\al_*((1-\frac{\de}{2},1,\de),3)\leq \max_{p\in \mathcal{S}_{I_{l-1}\leq \de\log n}}\min\{\al_1,\al_2,\al_3\}\]
where $\al_{i^*}$ is the upper bound corresponding to round $i^*$. We use the upper bounds proved in Corollary \ref{free_edges_bounds_cor} on the number of free edges in order to upper bound $\al_2,\al_3$.

\begin{itemize}
    \item For $i^*=3$, i.e., there are no free vertices, we get:
    \begin{align*}
        \al_3&= \frac{2\big[(1-\frac{\de}{2})m_1+m_2+\de m_3\big]}{1-e_{free}^3}=\\
        &=\frac{2\big[(1-\frac{\de}{2})m_1+m_2+\de (\frac{1}{2}-m_1-m_2)\big]}{1-e_{free}^3}=\\
        &=\frac{\de-(3\de-2)m_1-(2\de-2)m_2}{1-e_{free}^3}\leq \\
        &\leq \frac{\de-(3\de-2)(s_1+d_1)-(2\de-2)(\st+\dt)}{1-2[2m_3(s_1+\st)+2(s_1+d_1)(\st+\dt)+m_1'(\st+2\dt)]}
    \end{align*}
        
    \item For $i^*=2$, i.e., the free vertices are those matched in round 3, we get:
    \begin{align*}
        \al_2&= \frac{2\big[2m_3+(1-\frac{\de}{2})m_1+m_2\big]}{1-e_{free}^2}=\\
        &= \frac{2\big[2(\frac{1}{2}-m_1-m_2)+(1-\frac{\de}{2})m_1+m_2\big]}{1-e_{free}^2}\leq\\
        &\leq\frac{2-(2+\de)(s_1+d_1)-2(\st+\dt)}{1-2[2(s_1+d_1)(\st+\dt)+m_1'(\st+2\dt)]}
    \end{align*}
    \item For $i^*=1$, i.e., the free vertices are those matched in rounds 2,3, and there are no free edges. We get:
    \begin{align*}
        \al_1&= 2(1-(2-\de_1)m_1) =2-(2+\de)m_1
    \end{align*}
    This bound is stronger when $m_1$ is larger. $M_1$ is not a maximal matching in $K_1$, so using the matching $M$ in this case is not optimal. Instead, we look at $M_1\cup M_1'$ which is a maximal matching in $K_1$, of size $(m_1+m_1')k$. We get the following stronger bound:
    \begin{align*}
        \al_1&= 2-(2+\de)(m_1+m_1')=2-(2+\de)(s_1+d_1+m_1')
    \end{align*}
    
\end{itemize}

By Lemma \ref{max_indep_lemma}, the signatures set is restricted to the signatures in which $I_{l-1}\leq \de\log n$. Since the set $C_1^{*+}\cup \Tilde{C}^{*+}\cup \Tilde{C}^{*-}$ is an independent set in $K_2$, we have $(s_1+\st+2m_3)k\leq I_{l-1}\leq \de\log n$ and therefore: $\al\leq\frac{\de}{s_1+\st+2m_3}$. Using the lower bound $\al_*((1-\frac{\de}{2},1,\de),3)\geq 1+\frac{\de}{2}$ implied by Algorithm 4, we may assume that $\al\geq 1+\frac{\de}{2}$. We get:
\begin{gather*}
    1+\frac{\de}{2}\leq \frac{\de}{s_1+\st+2m_3} \\
    s_1+\st+2(\frac{1}{2}-(s_1+d_1+\st+\dt))\leq \frac{2\de}{2+\de}
\end{gather*}
\begin{equation}
    \Rightarrow s_1+2d_1+\st+2\dt\geq \frac{2-\de}{2+\de}
    \label{eq:max_indep_constraint} \tag{$\ast$}
\end{equation}

We now wish to find the signature which satisfies the constraint \eqref{eq:max_indep_constraint} and maximizes $\min\{\al_1,\al_2,\al_3\}$. We start with the value of $d_1$, and prove the following lemma:

\begin{lemma}
\label{lemma_d_1_equals_0}
The maximum of $\min\{\al_1,\al_2,\al_3\}$ under the following relaxation of \eqref{eq:max_indep_constraint}: 
\begin{equation}
    2s_1+2d_1+\st+2\dt\geq \frac{2-\de}{2+\de}
    \label{eq:relaxed_max_indep_constraint} \tag{$\ast\ast$}
\end{equation}
is obtained when $d_1=0$.
\end{lemma}
\begin{proof}
Suppose the maximum of $\min\{\al_1,\al_2,\al_3\}$ is obtained for some signature $(s_1,d_1,\st,\dt,m_1')=(a,b,c,d,e)$ where $b>0$. We claim that for $(s_1,d_1,\st,\dt,m_1')=(a+b,0,c,d,e)$, $\min\{\al_1,\al_2,\al_3\}$ is not smaller and \eqref{eq:relaxed_max_indep_constraint} is satisfied. Note that $m_1=s_1+d_1$ is the same in both signatures. Moreover, $\al_1,\al_2$ and the constraint \eqref{eq:relaxed_max_indep_constraint} may be expressed with $m_1$ instead of $s_1$ and $d_1$ (so they are the same for both signatures). Regarding $\al_3$, it is larger when $(s_1,d_1)=(a+b,0)$ than when $(s_1,d_1)=(a,b)$. Hence, $\min\{\al_1,\al_2,\al_3\}$ is not smaller using the signature $(a+b,0,c,d,e)$, and the constraint \eqref{eq:relaxed_max_indep_constraint} is satisfied. (Note that this may not hold for the original constraint \eqref{eq:max_indep_constraint}.)
\end{proof}

According to Lemma \ref{lemma_d_1_equals_0} we assume $d_1=0$, and maximize $\min\{\al_1,\al_2,\al_3\}$ under the relaxed constraint \eqref{eq:relaxed_max_indep_constraint}. We can simplify the bounds further by using the lower bound $\al_*((1-\frac{\de}{2},1,\de),3)\geq 1+\frac{\de}{2}$ implied by Algorithm 4. Any signature $p$ for which the expression that bounds $\al_1$ is smaller than $1+\frac{\de}{2}$, is necessarily not the signature which maximizes the objective. Hence, we may assume that:

\begin{align*}
    &1+\frac{\de}{2}\leq 2-(2+\de)(s_1+m_1')\\
    \Rightarrow &m_1'\leq \frac{1-\frac{\de}{2}}{2+\de}-s_1
\end{align*}

Both expressions that bound $\al_2,\al_3$ are larger when $m_1'$ is larger. Therefore the bounds hold also when we assume $m_1'$ is maximal. We also have that $m_3=\frac{1}{2}-(s_1+\st+\dt)$. We get the bound $\al_*((1-\frac{\de}{2},1,\de),3)\leq \max_{s_1,\st,\dt}\min\{\al_2,\al_3\}$ when:

\begin{gather*}
    \al_3\leq \frac{\de-(3\de-2)s_1-(2\de-2)(\st+\dt)}{1-2[2(\frac{1}{2}-(s_1+\st+\dt))(s_1+\st)+2s_1(\st+\dt)+\big[\frac{1-\frac{\de}{2}}{2+\de}-s_1\big](\st+2\dt)]}\\
    \al_2\leq \frac{2-(2+\de)s_1-2(\st+\dt)}{1-2[2s_1(\st+\dt)+\big[\frac{1-\frac{\de}{2}}{2+\de}-s_1\big](\st+2\dt)]} 
\end{gather*}
Under the constraints:
\begin{gather*}
    s_1,\st,\dt\geq 0 \\
    s_1+\st+\dt\leq \frac{1}{2}\\
    2s_1+\st+2\dt\geq \stg
\end{gather*}

We refer the reader to Appendix \ref{appendix_optimization} for a description of the optimization method we used to solve this problem. The signature which maximizes the objective is as follows:  $s_1=0$, $\st=\frac{2-\de}{2+\de}$, $\dt=0$. For this signature, which is equal to the one used in Algorithm 4, the value of $\min(\al_2,\al_3)$ is $1+\frac{\de}{2}$. Hence, for every $1\leq\de<2$:
\[\al_*((1-\frac{\de}{2},1,\de),3)\leq 1+\frac{\de}{2}\]
\end{proof}
Since Algorithm 4 is a $((1-\frac{\de}{2},1,\de),3,1+\frac{\de}{2})$-algorithm, we have:
\begin{corollary}
For every  $1\leq \de<2$:

\[\al_*((1-\frac{\de}{2},1,\de),3)= 1+\frac{\de}{2}\]
\end{corollary}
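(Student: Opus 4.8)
The plan is to apply the ``free edges and vertices'' framework of Corollary~\ref{free_edges_vertices_cor} with $l=3$ and $(\de_1,\de_2,\de_3)=(1-\frac{\de}{2},1,\de)$, which gives
\[
\al_*\Big(\Big(1-\frac{\de}{2},1,\de\Big),3\Big)\ \leq\ \max_{p\in\mathcal{S}}\ \min_{1\leq i^*\leq 3}\ \frac{2\big[v_{free}^{i^*}+\sum_{i\leq i^*}\de_i m_i\big]}{1-e_{free}^{i^*}} .
\]
First I would fix the distinguished perfect matching $M$ of $K$ (maximizing $m_1+m_2$, then $m_1$), pass to the Gallai--Edmonds decomposition of $K_2$ and of the subgraph on $V_{M_{23}}$, and refine the template to the parameters $s_1,d_1,\st,\dt,m_1'$, where $m_1=s_1+d_1$, $m_2=\st+\dt$, and $m_1'k$ is the size of the largest $Q_1$-matching inside $V_{M_{23}}$. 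Substituting the free-edge bounds of Corollary~\ref{free_edges_bounds_cor} into the three choices $i^*=1,2,3$ yields explicit rational upper bounds $\al_1,\al_2,\al_3$ in these parameters; for $i^*=1$ I would sharpen $\al_1$ by using the maximal $Q_1$-matching $M_1\cup M_1'$ of $K_1$ in place of $M_1$, so that $\al_1\leq 2-(2+\de)(s_1+d_1+m_1')$.

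Next I would invoke Lemma~\ref{max_indep_lemma} to restrict attention to signatures with $I_{l-1}\leq\de\log n$. Since the union of the $S_1$-partners, the $\Tilde{S}$-partners, and $V_{M_3}$ is an independent set in $K_2$ of size $(s_1+\st+2m_3)k$, this restriction forces $\al\leq \frac{\de}{s_1+\st+2m_3}$, and combining it with the lower bound $\al\geq 1+\frac{\de}{2}$ supplied by Algorithm~4 yields the linear constraint $s_1+2d_1+\st+2\dt\geq \stg$ on the relevant signatures. I would then trim the optimization in two steps: first, at the maximizer one may take $d_1=0$ (replacing $(s_1,d_1)$ by $(s_1+d_1,0)$ leaves $\al_1,\al_2$ and a suitable relaxation of the constraint unchanged while it can only increase $\al_3$); second, since any signature with $\al_1<1+\frac{\de}{2}$ is irrelevant and since $\al_2,\al_3$ are increasing in $m_1'$, one may set $m_1'=\frac{1-\frac{\de}{2}}{2+\de}-s_1$. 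Using $m_3=\frac{1}{2}-(s_1+\st+\dt)$, this reduces everything to maximizing $\min\{\al_2,\al_3\}$ over the polytope $\{\,s_1,\st,\dt\geq 0,\ s_1+\st+\dt\leq\frac{1}{2},\ 2s_1+\st+2\dt\geq\stg\,\}$.

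Finally I would solve this three-variable optimization and argue that its value is $1+\frac{\de}{2}$, attained at $s_1=0$, $\st=\stg$, $\dt=0$ --- precisely the signature realized by Algorithm~4. The main obstacle is exactly this last step: it is an optimization of the minimum of two rational functions over a three-dimensional polytope, and certifying that no interior point, no face, and no vertex beats $1+\frac{\de}{2}$ requires a genuine case analysis (or a KKT/Lagrange-multiplier argument) over which of the constraints and which of the two objectives are active, with the active configuration depending on $\de$. A secondary, more technical hazard is the Gallai--Edmonds bookkeeping behind Corollary~\ref{free_edges_bounds_cor} (ultimately resting on Lemma~\ref{deg1_lemma}): one must verify that the $Q_1$ contribution to the free edges is subsumed by the $Q_2$ contribution, so that the denominator of $\al_3$ is as written, and that the degree-at-most-one behaviour of the exposed vertices kills all the lower-order, linear-in-$k$ terms. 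Once $\al_*((1-\frac{\de}{2},1,\de),3)\leq 1+\frac{\de}{2}$ is established, the matching lower bound of Algorithm~4 yields equality, as asserted in the corollary.
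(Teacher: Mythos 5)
Your proposal follows the paper's route for Theorem~\ref{thm_three_rounds} essentially step for step: Corollary~\ref{free_edges_vertices_cor} with $(\de_1,\de_2,\de_3)=(1-\frac{\de}{2},1,\de)$, the distinguished matching $M$ maximizing $m_1+m_2$ and then $m_1$, the Gallai--Edmonds refinement into $s_1,d_1,\st,\dt,m_1'$, the free-edge bounds of Corollary~\ref{free_edges_bounds_cor}, the independent-set constraint from Lemma~\ref{max_indep_lemma}, the reduction $d_1=0$, the substitution of the maximal admissible $m_1'$, and finally the matching lower bound from Algorithm~4. All of that is the paper's argument, and your identification of the two hazards (the final optimization, and the Gallai--Edmonds bookkeeping behind Lemma~\ref{deg1_lemma}) is accurate.

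The one point worth flagging is that the step you correctly single out as the main obstacle --- certifying that $\max\min\{\al_2,\al_3\}= 1+\frac{\de}{2}$ over the polytope $\{s_1,\st,\dt\geq 0,\ s_1+\st+\dt\leq\frac{1}{2},\ 2s_1+\st+2\dt\geq\stg\}$ --- is \emph{not} resolved analytically in the paper either. There is no clean KKT or case analysis; instead the paper gives a computer-assisted verification (Appendix~\ref{appendix_optimization}): explicit Lipschitz constants $L_2,L_3$ for $\al_2,\al_3$ as functions of $(s_1,\st,\dt,\de)$, an adaptive grid elimination of everything far from the path $\gamma(\de)=(0,\stg,0,\de)$ and the exceptional point $(0,0,\frac{1}{2},1)$, and then SLSQP runs on the surviving hypercubes, with the conclusion certified only up to numerical precision of roughly $10^{-6}$. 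So if your intent is a fully self-contained analytic proof, you would have to supply that optimization argument yourself; it would be genuinely new content rather than a routine completion. With the optimization granted (by whatever means), the rest of your argument, including the concluding appeal to Algorithm~4 for the matching lower bound, is exactly the paper's.
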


\subsubsection{Bound for three rounds, $q_1=q_2=q_3$}
\label{sec_three_rounds_non_restricted}

In this section we give upper bounds on $\al_*(\de,3)$ (i.e., when no round is restricted) for several values of $\de$. The bounds are achieved using Corollary \ref{free_edges_vertices_cor}, Lemma \ref{max_indep_lemma}, and Corollary \ref{free_edges_bounds_cor}. 
\[\al_*(\de,3)\leq \max_{p\in \mathcal{S}_{I_{l-1}\leq \de\log n}}\min\{\al_1,\al_2,\al_3\}\]
Where $\al_{i^*}$ is the upper bound corresponding to round $i^*$. 

\begin{itemize}
    \item For $i^*=3$, i.e., there are no free vertices, we get:
    \begin{align*}
        \al_3&= \frac{2\big[\de m_1+ \de m_2+\de m_3\big]}{1-e_{free}^3}\leq \\
        &\leq \frac{\de}{1-2[2m_3(s_1+\st)+2(s_1+d_1)(\st+\dt)+m_1'(\st+2\dt)]}
    \end{align*}
        
    \item For $i^*=2$, i.e., the free vertices are those matched in round 3, we get:
    \begin{align*}
        \al_2&= \frac{2\big[2m_3+\de m_1+\de m_2\big]}{1-e_{free}^2}=\\
        &= \frac{2\big[2(\frac{1}{2}-m_1-m_2)+\de m_1+\de m_2\big]}{1-e_{free}^2}\leq\\
        &\leq\frac{2-(4-2\de)(s_1+d_1+\st+\dt)}{1-2[2(s_1+d_1)(\st+\dt)+m_1'(\st+2\dt)]}
    \end{align*}
    \item For $i^*=1$, as we did in the proof of Theorem \ref{thm_three_rounds}, we use the matching $M_1\cup M_1'$ instead of $M_1$. We get the bound:
    \begin{align*}
        \al_1&= 2(1-(2-\de)(m_1+m_1')) = 2-(4-2\de)(s_1+d_1+m_1')
    \end{align*}
    
\end{itemize}
{Note that the arguments we used in the proof of Theorem \ref{thm_three_rounds} regarding Lemma \ref{max_indep_lemma}, the constraint \eqref{eq:max_indep_constraint} and Lemma \ref{lemma_d_1_equals_0}, hold similarly in the current analysis (i.e., when $q_1=q_2=q_3$), so we may assume $d_1=0$ and maximize $\min\{\al_1,\al_2,\al_3\}$ under the constraint \eqref{eq:relaxed_max_indep_constraint}.} We get the following optimization problem:

$\al_*(\de,3)\leq \max_{s_1,\st,\dt,m_1'}\min\{\al_1,\al_2,\al_3\}$ when:
\begin{gather*}
    \al_3\leq \frac{\de}{1-2[2(\frac{1}{2}-(s_1+\st+\dt))(s_1+\st)+2s_1(\st+\dt)+m_1'(\st+2\dt)]}\\
    \al_2\leq \frac{2-(4-2\de)(s_1+\st+\dt)}{1-2[2s_1(\st+\dt)+m_1'(\st+2\dt)]} \\
    \al_1=2-(4-2\de)(s_1+m_1')
\end{gather*}
Under the constraints:
\begin{gather*}
    s_1,\st,\dt,m_1'\geq 0 \\
    s_1+\st+\dt\leq \frac{1}{2}\\
    m_1'\leq \frac{\st}{2}+\dt \\
    2s_1+\st+2\dt\geq \stg
\end{gather*}

We solved the optimization problem for several values of $\de$, and we refer the reader to Appendix \ref{appendix_optimization_non_restriceted} for more details. We received the following upper bounds:

\begin{center}
\begin{tabular}{ |c||c c c c c c c c c c| } 
 \hline
 $\de$ & 1 & 1.1 & 1.2 & 1.3 & 1.4 & 1.5 & 1.6 & 1.7 & 1.8 & 1.9  \\ 
 $\al_*(\de,3)\leq$ & 1.62 & 1.69 & 1.77 & 1.83 & 1.88 & 1.92 & 1.95 & 1.97 & 1.99 & 1.997\\ 

 \hline
\end{tabular}
\end{center}


\section*{Acknowledgements}

This project has received funding from the European Research Council (ERC) under the European Union’s Horizon 2020 research and innovation programme (grant agreement No. 819702)

\clearpage
\printbibliography
\clearpage
\begin{appendices}


\section{Proof of Lemma \ref{max_indep_lemma}}
\label{appendix_proof_max_indep}

\begin{proof}
    Let $\mathcal{A}$ be a $(\de,l,\al)$-algorithm. We have:
    \begin{align*}
        \frac{1}{2}&\leq
        \mathbb{P}(\mathcal{A}\text{ succeeds}) \leq\sum_{p\in \mathcal{S}} \mathbb{P}(\mathcal{A}_p\text{ succeeds}) =\\
        &=\sum_{p\in \mathcal{S}_{I_{l-1}\leq \de\log n}} \mathbb{P}(\mathcal{A}_p\text{ succeeds})+\sum_{p\in \mathcal{S}_{I_{l-1}> \de\log n}} \mathbb{P}(\mathcal{A}_p\text{ succeeds})
    \end{align*}
    Assume for contradiction that $\sum_{p\in \mathcal{S}_{I_{l-1}> \de\log n}} \mathbb{P}(\mathcal{A}_p\text{ succeeds})\geq \frac{1}{4}$. Let $\mathcal{A}^{>}$ be an algorithm with the same queries as $\mathcal{A}$, that succeeds if $\mathcal{A}_p$ succeeds for some $p\in \mathcal{S}_{I_{l-1}> \de\log n}$. We have:
    \[\mathbb{P}(\mathcal{A}^>\text{ succeeds})=\sum_{p\in \mathcal{S}_{I_{l-1}> \de\log n}} \mathbb{P}(\mathcal{A}_p\text{ succeeds})\geq \frac{1}{4}\]
    Let $\mathcal{A}_1$ be the (randomized) 1-round algorithm that queries the same pairs as $\mathcal{A}^>$ queries in round $l$. $\mathcal{A}_1$ succeeds if it finds a clique of size $I_{l-1}$ for some $I_{l-1}>\de\log n$. Since $\mathcal{A}^{>}$ succeeds only if an independent set of size $I_{l-1}$ in $K_{l-1}$ is a clique in $K_l$, we have that $\mathcal{A}^{>}$ succeeds only if $\mathcal{A}_1$ succeeds, so:
    \[\mathbb{P}(\mathcal{A}_1\text{ succeeds})\geq\mathbb{P}(\mathcal{A}^>\text{ succeeds})\geq \frac{1}{4}\]
    Denote $\al_1=\frac{I_{l-1}}{\log n}$, and assume $\al_1>\de$. In the proof of Lemma \ref{cons_lemma} we get a contradiction after assuming there is a $(\de,l,\al)$-algorithm when $\al>\max_{p\in \mathcal{S}}\min_{j}\frac{2f_1^j(p)}{f_2^j(p)}$, which in our case means $\al_1>\de$. $\mathcal{A}_1$ is not a $(\de,1,\al_1)$-algorithm, because it finds a clique of size $\al_1 \log n$ with probability $\frac{1}{4}$ and not $\frac{1}{2}$. However, replacing the constant $\frac{1}{2}$ with $\frac{1}{4}$ in the proof of Lemma \ref{cons_lemma} leads to the same contradiction. 
    
    Therefore, we have $\sum_{p\in \mathcal{S}_{I_{l-1}> \de\log n}} \mathbb{P}(\mathcal{A}_p\text{ succeeds})< \frac{1}{4}$, and we get:
    \[\frac{1}{4}\leq \sum_{p\in \mathcal{S}_{I_{l-1}\leq \de\log n}} \mathbb{P}(\mathcal{A}_p\text{ succeeds})\]
    Applying the proof of Lemma $\ref{cons_lemma}$, again with $\frac{1}{4}$ instead of $\frac{1}{2}$, we get:
    \[\al_*(\de,l)\leq \max_{p\in \mathcal{S}_{I_{l-1}\leq \de\log n}}\min_{j}\frac{2f_1^j(p)}{f_2^j(p)}\]
    
\end{proof}

\section{Theorem \ref{two_rounds_big_delta_restricted_thm} calculations}
\label{appendix_two_rounds_calcs}
\begin{lemma}
Let $\frac{6}{5}<\de<2$, and denote $m^*\coloneqq\frac{1-\frac{\de}{2}}{1+\frac{\de}{2}}$, and $\al_2^{\de}(m)\coloneqq\frac{\de-(\frac{5}{2}\de-3)m}{4m^2-2m+1}$. The maximum of $\al_2^{\de}(m)$ over the domain $0\leq m\leq m^*$ is obtained when $m= m^*$, and in that case $\al_2^{\de}(m^*)=1+\frac{\de}{2}$. 

\end{lemma}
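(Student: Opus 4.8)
The plan is to reduce the claim to determining the sign of a single quadratic polynomial. First I would note that the denominator $4m^2-2m+1$ has discriminant $4-16=-12<0$ and positive leading coefficient, so it is strictly positive for all real $m$; hence on the interval in question the inequality $\al_2^{\de}(m)\le 1+\tfrac{\de}{2}$ is equivalent to $g(m)\ge 0$, where
\[
g(m):=\Big(1+\tfrac{\de}{2}\Big)(4m^2-2m+1)-\Big(\de-\big(\tfrac{5}{2}\de-3\big)m\Big).
\]
Expanding, $g$ is a quadratic in $m$ with positive leading coefficient, namely $g(m)=(4+2\de)m^2+\big(\tfrac{3}{2}\de-5\big)m+\big(1-\tfrac{\de}{2}\big)$.

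The key step is to evaluate $g$ at $m^{*}=\tfrac{2-\de}{2+\de}$. A short substitution, using the identity $\tfrac{7}{2}\de^2-6\de+6=\tfrac12(7\de^2-12\de+12)$, gives $g(m^{*})=0$; this simultaneously establishes the claimed value $\al_2^{\de}(m^{*})=1+\tfrac{\de}{2}$. Since the product of the two roots of $g$ equals $\tfrac{1-\de/2}{4+2\de}=\tfrac{2-\de}{4(2+\de)}$ and one root is $m^{*}$, the other root must be $\tfrac14$. Thus $g(m)=(4+2\de)\big(m-\tfrac14\big)\big(m-m^{*}\big)$.

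It remains to locate $m^{*}$ relative to $\tfrac14$: for $\tfrac{6}{5}<\de<2$ one checks that $0<m^{*}<\tfrac14$ (at $\de=\tfrac65$ one has $m^{*}=\tfrac14$, as $\de\to2$ one has $m^{*}\to0$, and $m^{*}$ is strictly decreasing in $\de$). Consequently, for every $m\in[0,m^{*}]$ both factors $m-\tfrac14$ and $m-m^{*}$ are nonpositive, so $g(m)\ge0$, with equality exactly at $m=m^{*}$. This gives $\al_2^{\de}(m)\le 1+\tfrac{\de}{2}$ on $[0,m^{*}]$ with the maximum attained at $m=m^{*}$, as claimed. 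The only real work here is verifying the substitution $g(m^{*})=0$; once that algebraic identity is in hand, the factorization and the interval sign analysis are immediate, so I do not expect a genuine obstacle beyond clean bookkeeping of the arithmetic.
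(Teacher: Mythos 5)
Your proof is correct, and it takes a genuinely different route from the paper's. The paper first verifies $\al_2^{\de}(m^*)=1+\frac{\de}{2}$ by a direct rational computation, then differentiates $\al_2^{\de}$, solves for the roots $m=\frac{4\de\pm\sqrt{21\de^2-36\de+36}}{2(5\de-6)}$ of the derivative's numerator, and shows by a chain of inequalities (ending in the contradiction $3(5\de-6)^2<0$) that the smaller root lies at or beyond $m^*$, so that no interior critical point exists and the maximum over $[0,m^*]$ must sit on the boundary. You instead observe that, since $4m^2-2m+1>0$, the claim is equivalent to the nonnegativity of the quadratic $g(m)=(4+2\de)m^2+\big(\tfrac{3}{2}\de-5\big)m+\big(1-\tfrac{\de}{2}\big)$ on $[0,m^*]$; the single computation $g(m^*)=0$ (which I verified: the numerator collapses to $(2-\de)\big[4(2-\de)+(3\de-10)+(2+\de)\big]=0$) both yields the claimed value at $m^*$ and, via the product of roots, forces the factorization $g(m)=(4+2\de)\big(m-\tfrac14\big)(m-m^*)$, after which the sign analysis on $[0,m^*]\subset[0,\tfrac14)$ is immediate. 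Your argument avoids calculus and square roots entirely, shows that the maximum is attained \emph{only} at $m=m^*$, and makes the threshold $\de=\tfrac{6}{5}$ (where $m^*=\tfrac14$, the optimizer from Theorem \ref{two_rounds_small_delta_thm}) transparent; the paper's derivative-based argument is more mechanical but requires no insight about where the second root lands.
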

\begin{proof}
We start with showing that $\al_2^{\de}(m^*)=1+\frac{\de}{2}$:
\begin{align*}
\al_2^{\de}(m^*)&=
\frac{\de-(\frac{5}{2}\de-3)\frac{1-\frac{\de}{2}}{1+\frac{\de}{2}}}{4\Big(\frac{1-\frac{\de}{2}}{1+\frac{\de}{2}}\Big)^2-2\Big(\frac{1-\frac{\de}{2}}{1+\frac{\de}{2}}\Big)+1}=
\frac{\de(1+\frac{\de}{2})-(\frac{5}{2}\de-3)(1-\frac{\de}{2})}{4(1-\frac{\de}{2})^2-2((1-\frac{\de}{2})(1+\frac{\de}{2}))+(1+\frac{\de}{2})^2}(1+\frac{\de}{2})=\\
&=\frac{\de+\frac{1}{2}\de^2+\frac{5}{4}\de^2-4\de+3}
{\de^2-4\de+4+\frac{1}{2}\de^2-2+\frac{1}{4}\de^2+\de+1}
(1+\frac{\de}{2})=
\frac{\frac{7}{4}\de^2-3\de+3}{\frac{7}{4}\de^2-3\de+3}(1+\frac{\de}{2})=1+\frac{\de}{2}
\end{align*}
Next, we want to show this value is maximal. Since $4m^2-2m+1>0$ for any $m$, $\al_2^{\de}(m)$ is continuous and therefore the maximum over the interval $[0,m^*]$ is obtained either on the boundaries or on a local maximum. For $m=0$ we have $\al_2^{\de}(0)=\de<1+\frac{\de}{2}$, so proving there are no roots of the derivative in the interval $(0,m^*)$ will prove the lemma. 

We calculate the derivative of $\al_2^{\de}(m)$ by $m$:
\[\frac{\partial \al_2^{\de}(m)}{\partial m}=\frac{-(\frac{5}{2}\de-3)(4m^2-2m+1)-(8m-2)(\de-(\frac{5}{2}\de-3)m)}{(4m^2-2m+1)^2}=\frac{4(\frac{5}{2}\de-3)m^2-8\de m+3-\frac{1}{2}\de}{(4m^2-2m+1)^2}\]
$\frac{\partial \al_2^{\de}(m)}{\partial m}=0$ is obtained for:
\[m=\frac{8\de\pm\sqrt{(8\de)^2-16(\frac{5}{2}\de-3)(3-\frac{1}{2}\de)}}{8(\frac{5}{2}\de-3)}=\frac{4\de\pm\sqrt{21\de^2-36\de+36}}{2(5\de-6)}\]
Since $5\de-6>0$, the smaller root is the one with minus sign, so it is enough to prove $\frac{4\de-\sqrt{21\de^2-36\de+36}}{2(5\de-6)}\geq m^*$. Assuming for contradiction the opposite we get:
\begin{align*}
    \frac{4\de-\sqrt{21\de^2-36\de+36}}{2(5\de-6)}<& \frac{1-\frac{\de}{2}}{1+\frac{\de}{2}}\\
    (2+\de)(4\de-\sqrt{21\de^2-36\de+36})<&(2-\de)(10\de-12)\\
    14\de^2-24\de+24<&(2+\de)\sqrt{21\de^2-36\de+36}\\
    \sqrt{21\de^2-36\de+36}<&\frac{3}{2}(2+\de)\\
    21\de^2-36\de+36<&\frac{9}{4}(2+\de)^2\\
    75\de^2-180\de+108<&0\\
    3(5\de-6)^2<&0
\end{align*}
Which is a contradiction.

\end{proof}


\section{Optimization method}
\label{appendix_optimization}
In this section we discuss the optimization problems described in Section \ref{sec_three_rounds}. We start with the problem in the proof of Theorem \ref{thm_three_rounds} (for the problem regarding the case $q_1=q_2=q_3$ see Appendix \ref{appendix_optimization_non_restriceted}). For every $1\leq \de<2$ we have: \[\al_*((1-\frac{\de}{2},1,\de),3)\leq \max_{s_1,\st,\dt}\min\{\al_2,\al_3\}\]
where the bounds $\al_2,\al_3$ are as follows:
\begin{gather*}
    \al_3 = \frac{\de-(3\de-2)s_1-(2\de-2)(\st+\dt)}{1-2[2(\frac{1}{2}-(s_1+\st+\dt))(s_1+\st)+2s_1(\st+\dt)+\big[\frac{1-\frac{\de}{2}}{2+\de}-s_1\big](\st+2\dt)]}\\
    \al_2 = \frac{2-(2+\de)s_1-2(\st+\dt)}{1-2[2s_1(\st+\dt)+\big[\frac{1-\frac{\de}{2}}{2+\de}-s_1\big](\st+2\dt)]} 
\end{gather*}
and we have the following constraints:
\begin{gather*}
    s_1,\st,\dt\geq 0 \\
    s_1+\st+\dt\leq \frac{1}{2}\\
    2s_1+\st+2\dt\geq \stg
\end{gather*}
First, we note that for every $1\leq\de<2$, the signature $(s_1,\st,\dt)=(0,\stg,0)$ (which is also the signature of Algorithm 4) satisfies the constraints and the corresponding $\al_2,\al_3$ both equal $1+\frac{\de}{2}$. We will show that this signature maximizes $\min\{\al_2,\al_3\}$. We look at $\al_2,\al_3$ as functions of the variables $s_1,\st,\dt,\de$ over the subset of $\mathbb{R}^4$ satisfying the constraints. We prove the following lemma:
\begin{lemma}
\label{optimization lemma}
The only points in the domain of $\al_2,\al_3$ for which \[\min\{\al_2(s_1,\st,\dt,\de), \al_3(s_1,\st,\dt,\de)\}\geq 1+\frac{\de}{2}\]
are the points along the path $\gamma:[1,2]\rightarrow \mathbb{R}^4$, $\gamma(\de)=(0,\stg,0,\de)$, and the point $(s_1,\st,\dt,\de)=(0,0,\frac{1}{2},1)\eqqcolon v_{\dt=\frac{1}{2}}$ (for which $\al_2=\al_3=\frac{3}{2}$ as well).
\end{lemma}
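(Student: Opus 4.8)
The plan is to clear denominators and reduce the claim to a pair of polynomial inequalities. Write $\mu:=1+\frac{\de}{2}$, and let $N_i,D_i$ denote the numerator and denominator of $\al_i$ (for $i=2,3$), with the substitution $m_1'=\frac{1-\de/2}{2+\de}-s_1$ already carried out. On the feasible polytope both numerators satisfy $N_i\ge\frac{2-\de}{2}\ge 0$: for $N_2$ because $(2+\de)s_1+2(\st+\dt)\le(2+\de)(s_1+\st+\dt)\le\mu$, and for $N_3$ because (using $3\de-2\ge 2\de-2\ge 0$) $(3\de-2)s_1+(2\de-2)(\st+\dt)\le(3\de-2)(s_1+\st+\dt)\le\frac{3\de-2}{2}$. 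Since $\mu\ge\frac32>0$, the inequality $\min\{\al_2,\al_3\}\ge\mu$ can hold only where both denominators are strictly positive, and there it is equivalent to $g_i:=N_i-\mu D_i\ge 0$ for $i=2,3$. A direct, if tedious, computation — using $2\mu\stg=2-\de$ and watching the quadratic ($s_1^2$, $s_1\st$, $m_1'$-) terms collapse — should yield the closed forms
\[ g_2=\mu(\stg-\st)-\de\dt-(2+\de)\,s_1(1-\st),\qquad g_3=g_2+(2+\de)(1-2\tau)(\sigma-\stg), \]
where $\sigma:=s_1+\st$ and $\tau:=s_1+\st+\dt\le\tfrac12$. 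Since $\st\le\tau\le\frac12<1$, the last two terms of $g_2$ are $\le 0$, so $g_2\ge 0$ already forces $\st\le\stg$.

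I would then split on the sign of $\sigma-\stg$. If $\sigma\ge\stg$, then $1-2\tau\ge 0$ gives $g_3\ge g_2$, so only $g_2\ge 0$ is binding; moreover $s_1\ge\stg-\st\ge 0$, and substituting $s_1\ge\stg-\st$ and $\dt\ge 0$ into $g_2$ gives $g_2\le(2+\de)(\stg-\st)\big(\st-\tfrac12\big)\le 0$ (using $\st\le\stg\le\tfrac13$). Hence $g_2=0$, which forces $\st=\stg$ and then $\dt=s_1=0$: exactly the path $\gamma$, on which one checks $g_2=g_3=0$, i.e. $\al_2=\al_3=\mu$.

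The case $\sigma<\stg$ is the core. Here $g_3\le g_2$, so $g_3\ge 0$ is the binding constraint, and the goal is to show $g_3\le 0$ on the feasible region, with equality only on $\gamma$ and at the isolated point $v_{\dt=1/2}$. The structural fact to exploit is that $g_3$ is affine in $\dt$ ($\dt$ enters only through $-\de\dt$ in $g_2$ and through $\tau$ in the linear factor $1-2\tau$), so over the feasible $\dt$-interval — with lower endpoint $\max\{0,\tfrac{\stg-\st-2s_1}{2}\}$ coming from the max-independent-set constraint $2s_1+\st+2\dt\ge\stg$ and upper endpoint $\tfrac12-s_1-\st$ — the maximum of $g_3$ is attained where one of these constraints or $\dt=0$ is tight. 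On the slice $s_1=0$ this already settles things: at the lower face $\dt=\tfrac{\stg-\st}{2}$ one gets $g_3=(\stg-\st)\big[3(1-\de)-(2+\de)(\stg-\st)\big]\le 0$, with equality only at $\st=\stg$ (i.e. $\gamma$); at the upper face $\dt=\tfrac12-\st$ one gets $g_3=\tfrac{2-\de}{2}(\stg-\st)-\tfrac{\de(3\de-2)}{2(2+\de)}$, which is increasing in $\stg-\st$ and hence maximal at $\st=0$ with value $1-\de\le 0$ — and equality forces $\de=1$, giving exactly $v_{\dt=1/2}=(0,0,\tfrac12,1)$ (where $\al_2=\al_3=\tfrac32$). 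The remaining task is to rerun this face analysis on the regions with $s_1>0$.

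I expect the main obstacle to be precisely this last step. One difficulty is already the first-paragraph bookkeeping: verifying that $g_2$ and $g_3$ really simplify to the stated two-line expressions (all the $m_1'$- and quadratic contributions must cancel). The more serious difficulty is the $s_1>0$ part of the $\sigma<\stg$ case: one must track which face of the polytope is active as $(s_1,\st,\dt)$ varies — the lower bound on $\dt$ switches between $0$ and $\tfrac{\stg-\st-2s_1}{2}$ according to whether $\st+2\dt$ lies below or above $\stg$, and the admissible range of $s_1$ depends on the same dichotomy — so the case split has to be organised so that every face reduces to one of the one-variable inequalities above. It is only in those final inequalities that $\de\ge 1$ is used to fix the sign, and only in their equality analysis that $\de=1$ gets forced at the isolated solution $v_{\dt=1/2}$; getting the organisation right is what makes the argument close.
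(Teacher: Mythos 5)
Your approach is genuinely different from the paper's: the paper proves this lemma computationally (Lipschitz bounds on $\nabla\al_2,\nabla\al_3$, an adaptive $\epsilon$-net to eliminate most of the domain, then SLSQP on the surviving hypercubes), and explicitly only claims the result up to numerical precision $\sim 10^{-6}$. You instead clear denominators and reduce to the polynomial inequalities $g_i=N_i-\mu D_i\ge 0$. I checked your two closed forms: writing $\mu=1+\frac{\de}{2}$ and using $2\mu\stg=2-\de$, one indeed gets $g_2=\mu(\stg-\st)-\de\dt-(2+\de)s_1(1-\st)$, and $N_3-N_2=-(2-\de)(1-2\tau)$, $D_3-D_2=-2\sigma(1-2\tau)$ give $g_3=g_2+(2+\de)(1-2\tau)(\sigma-\stg)$ exactly as you state. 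Your treatment of the case $\sigma\ge\stg$ is correct and pins down $\gamma$, and the $s_1=0$ slice of the case $\sigma<\stg$ is also correct (both face evaluations check out). If completed, this would be strictly stronger than the paper's proof, since it is exact rather than up to solver precision.

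However, as submitted the proof has a genuine gap: the case $\sigma<\stg$ with $s_1>0$ is announced but not carried out, and it is not a formality — it is the only place where the constraint geometry (the switch of the active lower bound on $\dt$ between $\dt=0$ and $2s_1+\st+2\dt=\stg$) actually matters. The good news is that the same face analysis does close it. On the upper face $\tau=\frac12$ one gets $g_3=g_2=(1-\de)-\frac{2-\de}{2}\st+s_1\big[(2+\de)\st-2\big]\le 0$ since $(2+\de)\st\le\mu\le 2$, with equality forcing $\de=1,\st=0,s_1=0$, i.e.\ $v_{\dt=\frac12}$. On the lower face with $2s_1+\st+2\dt=\stg$ active (so $s_1\le u:=\stg-\sigma$) one gets $g_3=u\big[1-2\de+(2+\de)\st\big]+s_1\big[(2+\de)\st-1\big]$, and both brackets are $\le 0$ because $(2+\de)\st\le 2-\de\le 1$. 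On the face $\dt=0$ with $\sigma+s_1\ge\stg$ (so $s_1\ge u$) one gets $g_3=(2+\de)\big[u(2\sigma-\tfrac12)+s_1(\st-\tfrac12)\big]$, which is decreasing in $s_1$ for fixed $\sigma$ and at $s_1=u$ equals $(2+\de)u(4\sigma-\stg-1)<0$ since $4\sigma<4\stg\le\stg+1$. In each case equality analysis returns only $\gamma$ (as a boundary limit) or $v_{\dt=\frac12}$. So the obstacle you flagged is real but surmountable; you should include these three face computations to make the argument complete.
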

{
\begin{remark}
Note that since we prove Lemma \ref{optimization lemma} using a computer program and not analytically, we only prove it up to a certain precision (of roughly $10^{-6}$, which is the maximal error of the optimization algorithm we use).
\end{remark}
}

The proof is divided into two phases. We sketch the proof process, and elaborate in the following subsections. First, we show that for any point which is not close to either the path $\gamma$ or the point $v_{\dt=\frac{1}{2}}$, at least one of $\{\al_2,\al_3\}$ satisfies $\al_i<1+\frac{\de}{2}$. We prove this by finding Lipschitz bounds on the functions $\al_2,\al_3$ and then evaluating the functions on a dense enough net of points over the whole domain (the density of the net depends on the Lipschitz bounds, and it is higher in areas where $1+\frac{\de}{2}-\min\{\al_2,\al_3\}$ is small). In the second phase, we look at neighborhoods of $\gamma$ and $v_{\dt=\frac{1}{2}}$. We go over all points in a dense net over these neighborhoods, and for each point use a sequential least squares programming algorithm to find a local maximum of $\min\{\al_2,\al_3\}$ around the point. All local maximum points are either the point $v_{\dt=\frac{1}{2}}$ or on the path $\gamma$, hence all points in the domain satisfy $\min\{\al_2,\al_3\}\leq 1+\frac{\de}{2}$. 

\subsection{Phase 1 - Lipschitz bounds}
We start by proving Lipschitz bounds on the functions $\al_2,\al_3$. Given a function $f:\mathbb{R}^d\rightarrow\mathbb{R}$ for which  $\forall x\in \mathbb{R}^d:\, \norm{\nabla{f}(x)}\leq L$, the mean value theorem implies $\abs{f(x)-f(y)}\leq L\norm{x-y}$. Therefore we prove upper bounds on $\norm{\nabla{\al_2}(s_1,\st,\dt,\de)}$ and $\norm{\nabla{\al_3}(s_1,\st,\dt,\de)}$, starting with the former. We have:
\begin{align*}
    \al_2(s_1,\st,\dt,\de)&= \frac{2-(2+\de)s_1-2(\st+\dt)}{1-2[2s_1(\st+\dt)+\big[\frac{1-\frac{\de}{2}}{2+\de}-s_1\big](\st+2\dt)]} \\
    &= \frac{2-(2+\de)s_1-2(\st+\dt)}{1-2[2s_1\st+2s_1\dt+\frac{1}{2}\cdot\stg\st+\stg \dt-s_1\st-2s_1\dt]} \\
    &= \frac{2-(2+\de)s_1-2(\st+\dt)}{1-2s_1\st-\stg\st-2\cdot\stg\dt}\eqqcolon\frac{f}{g}
\end{align*}
The function $f$ is bounded since $f\in[0,2]$, and the function $g$ equals $1-e_{free}^2$ (assuming $m_1'=\frac{1-\frac{\de}{2}}{2+\de}-s_1$) so by Lemma \ref{free_edges_at_most_half} we have $g\in[\frac{1}{2},1]$. We compute the partial derivatives of $f,g$ and bound them according to the constraints on the variables:
\begin{itemize}
\begin{minipage}{0.45\linewidth}
    \item $\frac{\partial f}{\partial s_1} = -(2+\de)\in[-4,-3]$
    \item $\frac{\partial f}{\partial \st} = -2$
    \item $\frac{\partial f}{\partial \dt} = -2$
    \item $\frac{\partial f}{\partial \de} = -s_1\in[-\frac{1}{2},0]$
\end{minipage}
\begin{minipage}{0.45\linewidth}
    \item $\frac{\partial g}{\partial s_1} = -2\st \in[-1,0]$
    \item $\frac{\partial g}{\partial \st} = -2s_1-\stg \in[-\frac{4}{3},0]$
    \item $\frac{\partial g}{\partial \dt} = -2\cdot\stg \in[-\frac{2}{3},0]$
    \item $\frac{\partial g}{\partial \de} = 0$
\end{minipage}
\end{itemize}
The function $\al_2$ is differentiable on the whole domain (since $g\geq \frac{1}{2}$) and we have the following bounds on its partial derivatives (using interval arithmetic):
\begin{itemize}
    \item $\abs{\frac{\partial \al_2}{\partial s_1}} = \abs{\frac{\frac{\partial f}{\partial s_1}g-\frac{\partial g}{\partial s_1}f}{g^2}}\in\abs{\frac{[-4,-3]\cdot[\frac{1}{2},1]-[-1,0]\cdot[0,2]}{[\frac{1}{4},1]}}\subseteq\abs{\frac{[-4,-\frac{3}{2}]+[0,2]}{[\frac{1}{4},1]}}\subseteq[0,16]$
    \item $\abs{\frac{\partial \al_2}{\partial \st}} = \abs{\frac{\frac{\partial f}{\partial \st}g-\frac{\partial g}{\partial \st}f}{g^2}}\in\abs{\frac{-2\cdot[\frac{1}{2},1]-[-\frac{4}{3},0]\cdot[0,2]}{[\frac{1}{4},1]}}\subseteq\abs{\frac{[-2,-1]+[0,\frac{8}{3}]}{[\frac{1}{4},1]}}\subseteq[0,8]$
    \item $\abs{\frac{\partial \al_2}{\partial \dt}} = \abs{\frac{\frac{\partial f}{\partial \dt}g-\frac{\partial g}{\partial \dt}f}{g^2}}\in\abs{\frac{-2\cdot[\frac{1}{2},1]-[-\frac{2}{3},0]\cdot[0,2]}{[\frac{1}{4},1]}}\subseteq\abs{\frac{[-2,-1]+[0,\frac{4}{3}]}{[\frac{1}{4},1]}}\subseteq[0,8]$
    \item $\abs{\frac{\partial \al_2}{\partial \de}} = \abs{\frac{\frac{\partial f}{\partial \de}}{g}}\in\abs{\frac{[-\frac{1}{2},0]}{[\frac{1}{2},1]}}\subseteq[0,1]$
\end{itemize}
Therefore, we get the following bound on $\norm{\nabla{\al_2}(s_1,\st,\dt,\de)}$:
\[\norm{\nabla{\al_2}(s_1,\st,\dt,\de)}\leq \sqrt{16^2+8^2+8^2+1^2}<19.7\eqqcolon L_2\]

We repeat the same calculations for $\al_3$:

\begin{gather*}
    \al_3(s_1,\st,\dt,\de)= \frac{\de-(3\de-2)s_1-(2\de-2)(\st+\dt)}{1-2[2(\frac{1}{2}-(s_1+\st+\dt))(s_1+\st)+2s_1(\st+\dt)+\big[\frac{1-\frac{\de}{2}}{2+\de}-s_1\big](\st+2\dt)]} \\
    = \frac{\de-(3\de-2)s_1-(2\de-2)(\st+\dt)}{1-2[s_1+\st-2s_1^2-4s_1\st-2\st^2-2\dt s_1-2\dt\st+2s_1\st+2\dt s_1+\frac{1}{2}\cdot\stg\st+\stg\dt-s_1\st-2\dt s_1]} \\
    = \frac{\de-(3\de-2)s_1-(2\de-2)(\st+\dt)}{1+4s_1^2+4\st^2+6s_1\st+4\dt s_1+4\dt\st-2s_1-\frac{6+\de}{2+\de}\st-2\cdot\stg\dt}\eqqcolon\frac{f}{g}
\end{gather*}

\begin{itemize}
\begin{minipage}{0.53\linewidth}
    \item $f\in[0,2]$
    \item $\frac{\partial f}{\partial s_1} = -(3\de-2)\in[-4,-1]$
    \item $\frac{\partial f}{\partial \st} = -(2\de-2)\in[-2,0]$
    \item $\frac{\partial f}{\partial \dt} = -(2\de-2)\in[-2,0]$
    \item $\frac{\partial f}{\partial \de} = 1-3s_1-2(\st+\dt)\in[-\frac{1}{2},1]$
\end{minipage}
\begin{minipage}{0.5\linewidth}
    \item $g\in[\frac{1}{2},1]$
    \item $\frac{\partial g}{\partial s_1} = 8s_1+6\st+4\dt-2 \in[-2,2]$
    \item ${\frac{\partial g}{\partial \st} = 8\st+6s_1+4\dt-\frac{6+\de}{2+\de} \in[-\frac{7}{3},2]}$
    \item $\frac{\partial g}{\partial \dt} = 4s_1+4\st-2\cdot\stg \in[-\frac{2}{3},2]$
    \item $\frac{\partial g}{\partial \de} = 0$
\end{minipage}
\end{itemize}

\begin{itemize}
    \item $\abs{\frac{\partial \al_3}{\partial s_1}} = \abs{\frac{\frac{\partial f}{\partial s_1}g-\frac{\partial g}{\partial s_1}f}{g^2}}\in\abs{\frac{[-4,-1]\cdot[\frac{1}{2},1]-[-2,2]\cdot[0,2]}{[\frac{1}{4},1]}}\subseteq\abs{\frac{[-4,-\frac{1}{2}]+[-4,4]}{[\frac{1}{4},1]}}\subseteq[0,32]$
    \item $\abs{\frac{\partial \al_3}{\partial \st}} = \abs{\frac{\frac{\partial f}{\partial \st}g-\frac{\partial g}{\partial \st}f}{g^2}}\in\abs{\frac{[-2,0]\cdot[\frac{1}{2},1]-[-\frac{7}{3},2]\cdot[0,2]}{[\frac{1}{4},1]}}\subseteq\abs{\frac{[-2,0]+[-4,\frac{14}{3}]}{[\frac{1}{4},1]}}\subseteq[0,24]$
    \item $\abs{\frac{\partial \al_3}{\partial \dt}} = \abs{\frac{\frac{\partial f}{\partial \dt}g-\frac{\partial g}{\partial \dt}f}{g^2}}\in\abs{\frac{[-2,0]\cdot[\frac{1}{2},1]-[-\frac{2}{3},2]\cdot[0,2]}{[\frac{1}{4},1]}}\subseteq\abs{\frac{[-2,0]+[-4,\frac{4}{3}]}{[\frac{1}{4},1]}}\subseteq[0,24]$
    \item $\abs{\frac{\partial \al_3}{\partial \de}} = \abs{\frac{\frac{\partial f}{\partial \de}}{g}}\in\abs{\frac{[-\frac{1}{2},1]}{[\frac{1}{2},1]}}\subseteq[0,2]$
\end{itemize}
\[\norm{\nabla{\al_3}(s_1,\st,\dt,\de)}\leq \sqrt{32^2+24^2+24^2+2^2}<46.7\eqqcolon L_3\]

The bounds we proved on the gradients of $\al_2,\al_3$ imply that any points $x,y$ in the domain for which $\norm{x-y}\leq \epsilon$, satisfy $\abs{\al_i(x)-\al_i(y)}\leq L_i\epsilon$. Therefore, if for some point $y$ we have that for at least one $i\in\{2,3\}$, $\al_i(y) + L_i\epsilon\leq 1+\frac{\de}{2}$, we deduce that for any $x$ for which $\norm{x-y}\leq \epsilon$ we have $\min\{\al_2(x),\al_3(x)\}\leq 1+\frac{\de}{2}$. 

We partitioned the domain into $4$-dimensional hypercubes with edges of size $\epsilon$, and calculated the values of $\al_2,\al_3$ on an $\epsilon$-net of points which includes all centers of the hypercubes. Given a point $x$ in the domain, let $y$ be the center of the hypercube containing $x$. The distance between $x$ and $y$ on each coordinate is at most $\frac{\epsilon}{2}$. Hence, $\norm{x-y}\leq \sqrt{4\Big(\frac{\epsilon}{2}\Big)^2}=\epsilon$. For any center $y$ which satisfies $\al_i(y) + L_i\epsilon\leq 1+\frac{\de}{2}$ for some $i\in\{2,3\}$, we deduce that all points in the hypercube satisfy $\min\{\al_2(x),\al_3(x)\}\leq 1+\frac{\de}{2}$. For $\epsilon=0.01/L_3$ this process eliminates most area of the domain, and all remaining hypercubes are close to either $\gamma$ or $v_{\dt=\frac{1}{2}}$ (the distance is at most $0.041$). Since all centers satisfy $\al_i(y) \leq 1+\frac{\de}{2}$ for some $i\in\{2,3\}$, this phase is enough for proving the weaker claim: $\forall x:\, \min\{\al_2(x),\al_3(x)\}\leq 1+\frac{\de}{2}+0.01$. 

We chose $\epsilon=0.01/L_3$ in order to eliminate a large fraction of the domain. However, an $\epsilon$-net of the domain includes roughly $10^{13}$ points, and evaluating $\al_2,\al_3$ on all of them is computationally heavy. Instead, we used a search tree to create a net in which the density of points in some area depends on how close $\al_2,\al_3$ are to $1+\frac{\de}{2}$ in that area. We start with a sparse $\epsilon_0$-net for some large $\epsilon_0$. For any hypercube with edge size $\epsilon_i$ (starting with initial ones of size $\epsilon_0$), we evaluate $\al_2,\al_3$ on the center $y$, and if $\al_i(y) + L_i\epsilon_i\leq 1+\frac{\de}{2}$, we deduce that all points in the hypercube may be eliminated (we do not need a denser net in this area). Otherwise, we partition the hypercube into $16$ hypercubes with edge size $\epsilon_{i+1}\coloneqq \frac{\epsilon_i}{2}$ and process them recursively (near the boundary of the domain we may partition into less than $16$ hypercubes). In the search tree this means we either reach a leaf (i.e., we stop partitioning the relevant path), or we create $16$ children of the current hypercube and continue the search on their subtrees. When we reach a hypercube of edge size $\epsilon_i\leq \epsilon$ which needs to be partitioned, we do not partition it further because we reached maximum density. This process is more efficient, and it eliminates roughly the same area as a full $\epsilon$-net. 

\subsection{Phase 2 - sequential least squares programming}

In Phase 2 we prove that in any remaining hypercube all points satisfy $\min\{\al_2,\al_3\}\leq 1+\frac{\de}{2}$. For each hypercube we use a sequential least squares programming (SLSQP) algorithm which maximizes a non linear function under non linear constraints. Since SLSQP finds a local maximum, we do not use it over the whole domain, but rather on small hypercubes (small in relation to the second derivatives of the functions). The SLSQP algorithm works when the objective function and the constraints are twice continuously differentiable. Our objective function is not differentiable (minimum of two functions) so we modify the optimization problem before applying SLSQP. Any point $x\in\gamma$ satisfies $\al_2(x)=\al_3(x)=1+\frac{\de}{2}$. Therefore, in order to find $\max_{s_1,\st,\dt}\min\{\al_2,\al_3\}$ for every $\de$, it is enough to limit the search only to points in which $\al_2\geq 1+\frac{\de}{2}$ and $\al_3\geq 1+\frac{\de}{2}$ (we used the same argument on $\al_1$). We get the following constraints:
\begin{gather*}
    1+\frac{\de}{2}\leq \frac{\de-(3\de-2)s_1-(2\de-2)(\st+\dt)}{1-2[2(\frac{1}{2}-(s_1+\st+\dt))(s_1+\st)+2s_1(\st+\dt)+\big[\frac{1-\frac{\de}{2}}{2+\de}-s_1\big](\st+2\dt)]}\\
    1+\frac{\de}{2}\leq \frac{2-(2+\de)s_1-2(\st+\dt)}{1-2[2s_1(\st+\dt)+\big[\frac{1-\frac{\de}{2}}{2+\de}-s_1\big](\st+2\dt)]} 
\end{gather*}
We use SLSQP to show that the only points which satisfy all constraints are either the point $v_{\dt=\frac{1}{2}}$ or they are on the path $\gamma$. This will prove Lemma \ref{optimization lemma}.
Given a remaining hypercube close to $\gamma$, and assuming a point which satisfies all constraints exists in it, we want to guarantee it will be found by the SLSQP algorithm, when it is initialized in the center of the hypercube. We create a maximization problem with the previous and new constraints, where the objective function is the square of the distance from the path $\gamma$:
\[\max_{s_1,\st,\dt,\de} s_1^2+(\st-\frac{2-\de}{2+\de})^2+\dt^2\]
If the algorithm returns a point on $\gamma$ (i.e., with distance $0$ from the path) this means that no point in the hypercube satisfies all constraints (unless the hypercube intersects the path). Equivalently, for remaining hypercubes close to $v_{\dt=\frac{1}{2}}$, the maximized objective function will be the square of the distance from $v_{\dt=\frac{1}{2}}$:
\[\max_{s_1,\st,\dt,\de} s_1^2+\st^2+(\frac{1}{2}-\dt)^2+(1-\de)^2\]

For each hypercube remaining after Phase 1 (roughly $2\cdot 10^8$ hypercubes) we applied SLSQP initialized in its center. No run of the algorithm failed to find a maximizing point, and in all runs this point was on the path $\gamma$ (or it was $v_{\dt=\frac{1}{2}}$ for the hypercubes in its neighborhood). Therefore, there is no other point in any of the remaining hypercubes which satisfies all constraints, and Lemma \ref{optimization lemma} follows.

\subsection{Optimization method for non-restricted case}
\label{appendix_optimization_non_restriceted}
We solve the optimization problem described in Section \ref{sec_three_rounds_non_restricted} for any $\de$ of the form $\de=1+0.1z$ for the integers $0\leq z\leq 9$. Since we do not prove an upper bound {that needs to hold for all} $\de$, we skip phase 1, and instead use SLSQP in order to find all local maximum points of the objective function. For $\epsilon=0.01$, we create an $\epsilon$-net spanning the domain of the parameters $s_1,\st,\dt,m_1'$, and for each point in the net run SLSQP initialized in the point. Then we take the maximum over all returned values. Since the objective function is not twice continuously differentiable, we introduce a new variable $t$ and solve the following equivalent optimization problem:
\[\max t\]
Under the constraints:
\begin{itemize}
    \item $t\leq \frac{\de}{1-2[2(\frac{1}{2}-(s_1+\st+\dt))(s_1+\st)+2s_1(\st+\dt)+m_1'(\st+2\dt)]}$
    \item $t\leq \frac{2-(4-2\de)(s_1+\st+\dt)}{1-2[2s_1(\st+\dt)+m_1'(\st+2\dt)]}$
    \item $t\leq2-(4-2\de)(s_1+m_1') $
    \item $s_1,\st,\dt,m_1'\geq 0$
    \item $s_1+\st+\dt\leq \frac{1}{2}$
    \item $m_1'\leq \frac{\st}{2}+\dt$
    \item $2s_1+\st+2\dt\geq \stg$
\end{itemize}
\end{appendices}

\end{document}